\newcommand*{\COLS}{}%
\newtheorem{theorem}{Theorem}
\theoremstyle{definition}
\newtheorem{lemma}{Lemma}
\theoremstyle{definition}
\theoremstyle{definition}
\newtheorem{definition}{Definition}
\theoremstyle{definition}
\theoremstyle{remark}
\newcommand{\realnumber}{\mathbb{R}}
\DeclareMathOperator{\bpsk}{bpsk}
\DeclareMathOperator{\scdec}{\mathsf{SC}}
\DeclareMathOperator{\ascdec}{\mathsf{aSC}}
\DeclareMathOperator{\dec}{dec}
\DeclareMathOperator{\adec}{adec}
\DeclareMathOperator{\eval}{eval}
\DeclareMathOperator{\sgn}{sgn}
\newcommand{\Fbin}{\mathbb{F}_2}
\newcommand{\Zbin}{\mathbb{Z}_{2^n}}
\newcommand{\Iset}{\mathcal{I}}
\newcommand{\Fset}{\mathcal{F}}
\newcommand{\code}[1]{\mathcal{#1}}
\newcommand{\T}{\mathsf{T}}
\newcommand{\GA}{\mathsf{GA}}
\newcommand{\GL}{\mathsf{GL}}
\newcommand{\LTA}{\mathsf{LTA}}
\newcommand{\LTL}{\mathsf{LTL}}
\newcommand{\BLTA}{\mathsf{BLTA}}
\newcommand{\BLTL}{\mathsf{BLTL}}
\newcommand{\OBLT}{\mathsf{OBLT}}
\newcommand{\UTL}{\mathsf{UTL}}
\newcommand{\PL}{\mathsf{PL}}
\newcommand{\Aut}{\mathsf{Aut}}
\newcommand{\pid}{\mathbbm{1}}
\newcommand{\bin}[1]{\widehat{#1}}
\newcommand{\invbin}[1]{\widecheck{#1}}
\newcommand{\elmat}[2]{E_{(#1,#2)}}
\newcommand{\elperm}[2]{\epsilon_{(#1,#2)}}
\newcommand{\EC}{\mathsf{EC}}
\begin{document}
% --------------------------------------------------------------------

\title{Group Properties of Polar Codes for \\ Automorphism Ensemble Decoding}

%\author{\IEEEauthorblockN{Valerio Bioglio$^{1}$, Ingmar Land$^{1}$, Charles Pillet$^{1}$}
%\IEEEauthorblockA{$^{1}$Mathematical and Algorithmic Sciences Lab, Paris Research Center, Huawei Technologies Co. Ltd. \\
%$^{2}$Infinera Canada Inc, Ottawa\\
%Email: $\{$charles.pillet1,valerio.bioglio,ingmar.land$\}$@huawei.com, ccondo@infinera.com}} 

%\author{\IEEEauthorblockN{Valerio Bioglio, Ingmar Land, Charles Pillet} \\
%\IEEEauthorblockA{Mathematical and Algorithmic Sciences Lab, Paris Research Center, Huawei Technologies Co. Ltd. \\
%%Email: $\{$valerio.bioglio,ingmar.land,charles.pillet1$\}$@huawei.com} } 
%Email: valerio.bioglio@huawei.com, ingmar.land@ieee.org, charles.pillet1@huawei.com} } 
\author{\IEEEauthorblockN{Valerio Bioglio, Ingmar Land, Charles Pillet} 
\thanks{Ingmar Land was with the Mathematical and Algorithmic Sciences Lab, Paris Research Center, Huawei Technologies Co. Ltd. 92100 Boulogne-Billancourt, France, when he provided his contribution to the paper. He is now with Infinera Corporation, Ottawa, ON K2K 2X3, Canada (email: ingmar.land@ieee.com). %\\
Charles Pillet and Valerio Bioglio are with the Algorithmic and Mathematical Science Lab, Paris Research Center, Huawei Technologies France S.A.S.U. (email: charles.pillet1@huawei.com; valerio.bioglio@huawei.com).}}
\maketitle

% --------------------------------------------------------------------
\begin{abstract}
In this paper, we propose an analysis of the automorphism group of polar codes, with the scope of designing codes tailored for \emph{automorphism ensemble} (AE) decoding. 
We prove the equivalence between the notion of \emph{decreasing monomial codes} and the universal partial order (UPO) framework for the description of polar codes. 
Then, we analyze the algebraic properties of the \emph{affine automorphisms group} of polar codes, providing a novel description of its structure and proposing a classification of automorphisms providing the same results under permutation decoding. 
Finally, we propose a method to list all the automorphisms that may lead to different candidates under AE decoding; by introducing the concept of \emph{redundant} automorphisms, we find the maximum number of permutations providing possibly different codeword candidates under AE-SC, proposing a method to list all of them. 
A numerical analysis of the error correction performance of AE algorithm for the decoding of polar codes concludes the paper.
\end{abstract}
\begin{IEEEkeywords}
Polar codes, monomial codes, permutation decoding, AE decoding, automorphisms groups.
\end{IEEEkeywords}

%\ifdefined\COLS
%	
%\else
%	
%\fi

% --------------------------------------------------------------------
\section{Introduction}
\label{sec:intro}

Polar codes \cite{ArikanFirst} are a class of linear block codes relying on the phenomenon of channel polarization.  
Under successive cancellation (SC) decoding, they can provably achieve capacity of binary memoryless symmetric channels for infinite block length.  
However, in the short length regime, the performance of polar codes under SC decoding is far from state-of-the-art channel codes.  
%Various approaches have been proposed to overcome this problem.
%SC list (SCL) decoding was proposed to reach ML performance at moderate list size $L$ \cite{TalSCL}. 
To improve their error correction capabilities, the use of a list decoder based on SC scheduling, termed as SC list (SCL) decoding, has been proposed in \cite{TalSCL}. 
%It was shown that correct candidate codeword was often in the list but was not chosen by the current metric, to overcome this issue, a cyclic redundancy check (CRC) code is concatenated to the polar code and acts as a genie to pick the correct path regardless the path metric \cite{CRCaidedSCL}. 
The concatenation of a cyclic redundancy check (CRC) code to the polar code permits to greatly improve its error correction performance \cite{CRCaidedSCL}, making the resulting CRC-aided SCL (CA-SCL) decoding algorithm the de-facto standard decoder for polar codes adopted in 5G standard \cite{polar_5G}. 
%Since the correct candidate codeword may not be chosen due to the minimizing metric, a cyclic redundancy check (CRC) code is concatenated to the code, acting as a genie that selects the correct codeword regardless the metric. %\cite{CRCaidedSCL}. 
In order to avoid the extra decoding delay due to information exchange among parallel SC decoders in hardware implementation of CA-SCL decoders, permutation-based decoders have been proposed in \cite{PermGross} for SC; a similar approach was proposed in \cite{BPLRM} for belief propagation (BP) and in \cite{SCANL} for soft cancellation (SCAN). %, two other decoding algorithms for polar codes. 
In a permutation-based decoder, $M$ instances of the same decoder are run in parallel on permuted factor graphs of the code, or alternatively on permuted versions of the received signal. 
Even if the decoding delay is reduced, the error correction performance gain of permutation-based decoders is quite poor due to the alteration of the frozen set caused by the permutation. 
Research towards permutations not altering the frozen set were carried out \cite{PermDecRussian}; such permutations are the \emph{automorphisms} of the code and form the group of permutations mapping a codeword into another codeword. 
This new decoding approach, i.e. the use of automorphisms in a permutation decoder, is referred to as \emph{automorphism ensemble} (AE) decoding.
%are included in the automorphism group of the code representing the set of permutations that maps a codeword into another codeword. 

Given their affinity with polar codes, the analysis of the automorphisms group of Reed-Muller (RM) codes is providing a guidance for this search. 
In fact, the automorphism group of binary Reed-Muller code is known to be the general affine group \cite{WilliamsSloane}, and an AE decoder for RM codes has been proposed in \cite{geiselhart2020automorphism} using BP as component decoder. 
%Due to the close relation between polar codes and Reed-Muller codes, SC was also used; this new decoding approach is referred to as \emph{automorphism ensemble} (AE) decoding.
%Due to their close relation with polar codes SC was also used; this new decoding approach is referred to as \emph{automorphism ensemble} (AE) decoding.
The boolean code nature of RM codes permitted authors in \cite{BardetPolyPC} to propose a \emph{monomial code} description of polar codes. 
Through this definition, in the same papers the authors proved that the group of lower-triangular affine ($\LTA$) transformations form a subgroup of the automorphisms group of polar codes. 
However, it has been proved in \cite{geiselhart2020automorphism} that $\LTA$ transformations commute with SC decoding, in the sense that $\LTA$ transformations do not alter the result of SC decoding process; this property leads to no gain under AE decoding when these automorphisms are used. 
Fortunately, the $\LTA$ transformation group is not always the full automorphisms group of polar codes \cite{PC_UTL_design}; if carefully designed, a polar code exhibits a richer automorphisms group. 
In \cite{geiselhart2021automorphismPC,li2021complete}, the complete affine automorphisms group of a polar code is proved to be the block-lower-triangular affine ($\BLTA$) group.
Authors in \cite{geiselhart2021automorphismPC,PC_UTL_design} showed that automorphisms in $\BLTA$ group (not belonging to $\LTA$) can be successfully used in an AE decoder; in practice, these automorphisms are not absorbed by the SC decoder. 
Unfortunately, the group of affine automorphisms of polar codes asymptotically is not much bigger than $\LTA$ \cite{not_many_autos}. 
These discoveries paved the way for the analysis of the automorphism group of polar codes, and in general on the struggle of constructing polar codes having "good" automorphism groups, i.e. a set of automorphisms to be used in an AE decoder. 
Author in \cite{Perm_Russian_polar_subcode} followed another approach, focusing on the use of polar subcodes in conjunction with a peculiar choice of the permutation set to design polar codes for AE decoding. 
However, this approach is less systematic than the analysis of the automorphism group of the polar code, leading to less predictable gains. 

%This paper represents a first attempt to create a 
In this paper, we propose an analysis of the automorphisms group of polar codes, with the scope of designing codes exhibiting good error correction performance under AE decoding. 
To begin with, in Section~\ref{sec:monomial-codes} we delve into the monomial codes description of polar codes proposed in \cite{BardetPolyPC}, proving the equivalence between the notion of \emph{decreasing monomial codes} and the universal partial order (UPO) framework proposed in \cite{PartialOrder} for the description of polar codes.
This parallelism will allow us to analyze the algebraic properties of the \emph{affine automorphisms group} of polar codes in Section~\ref{sec:automorphisms}, namely a sub-group of the complete automorphisms group of polar codes whose elements can be described through affine transformations. 
Thanks to this analysis, we will provide a novel proof of the structure of the affine automorphism group of polar codes, that is different from the ones provided in \cite{geiselhart2021automorphismPC,li2021complete} and is connected to the UPO framework. 
Next, in Section~\ref{sec:eq_classes} we study the nature of AE decoders introducing the notion of \emph{decoder equivalence}, namely a classification of automorphism providing the same results under permutation decoding. 
Thanks to this new formalism, we prove that $\LTA$ is not always the complete SC absorption group, namely that a larger set of automorphisms may be absorbed under SC decoding. 
This result is an extension of our preliminary analysis of decoder equivalence published in \cite{AE_class}.
We provide an alternative proof of a very recent result presented in \cite{SC_invariant}, namely that the complete SC absorption group has a $\BLTA$ structure. 
Finally, we propose a method to list all the automorphisms that may lead to different candidates under permutation decoding; by introducing the concept of \emph{redundant} automorphisms, we find the maximum number of permutations providing possibly different codeword candidates under AE-SC. 
Our method permits to easily list all these automorphisms, greatly simplifying the search for automorphisms to be used in AE-SC decoding. 
All the results presented in the paper are correlated with examples to guide the reader in the process of constructing and using automorphisms for AE decoding of polar codes. 
Section~\ref{sec:nem_res}, including a numerical analysis of the performance of AE algorithm for the decoding of polar codes, concludes the paper.

%The paper is organized as follows...

%In this paper, we propose a method to design polar codes having a desired affine automorphism group.  
%%In this paper, we introduce a design of polar code compliant with the partial order framework and based on well-known design method allowing a large BLT structure. 
%We prove that LTA is not always the complete SC absorption group, namely that a larger set of automorphisms may be absorbed under SC decoding. 
%Moreover, we introduce the concept of \emph{redundant} automorphisms, providing the maximum number of permutations providing possibly different codeword candidates under AE-SC. 
%Finally, we present a small automorphism set design for AE-SC decoding avoiding redundancy and exhibiting good performance compared to ML and CA-SCL decoding.
%%to 5G polar codes under CA-SCL decoding. % for length $N=256$ and $N=1024$. 
%%Two rate half polar codes under AE decoding using different automorphism set design are then compared with 5G polar code under CA-SCL decoding for length $N=256$ and $N=1024$.
%This paper is an extension of our preliminary results published in \cite{AE_class}...
%
%Throughout this paper we use the following notation: 
%The binary field with elements $\{0,1\}$ is denoted by $\Fbin$.  The set of non-negative integers smaller than $N$ is written as $\mathbb{Z}_N = \{0,1,\ldots,N-1\}$.  For an integer $i \in \Zbin$, its binary expansion is denoted by $\bin{i}$, where the least significant bit (LSB) is on the left and the length is $n$ bits.  
%\comment{add here}

% --------------------------------------------------------------------
\section{Polar codes and monomial codes}
\label{sec:monomial-codes}

In this section we analyze the parallelism between polar codes and decreasing monomial codes. 
Monomial codes have been introduced in \cite{BardetPolyPC} as a family of codes including polar and RM codes. 
In the same paper, decreasing monomial codes are introduced to provide a monomial description of polar codes. 
Through this description, authors in \cite{BardetPolyPC} were able to prove that lower triangular affine ($\LTA$) transformations form a sub-group of the automorphisms group of polar codes. 
In this section, we extend the results in \cite{BardetPolyPC} proving the equivalence between the notion of \emph{decreasing monomial codes} and the universal partial order (UPO) framework proposed in \cite{PartialOrder}. 
UPO represents a method to partially sort virtual polarized channels independently of the actual channel used for the transmission. 
Due to the structure of the polarization phenomenon, some of the virtual channels are inherently better than others, and the UPO framework is able to catch this nuance and creates a structure of virtual channels. 
%In particular, we will prove that, when well defined, these two codes are actually the same code. 
To the best of our knowledge, this is the first time that this equivalence is proved. 

\subsection{Polar Codes}
To begin with, we provide a definition of polar code. 
In the following, the binary field with elements $\{0,1\}$ is denoted by $\Fbin$, while the set of non-negative integers smaller than $N$ is written as $\mathbb{Z}_N = \{0,1,\ldots,N-1\}$.
\begin{definition}[Polar codes]
A polar code of length $N=2^n$ and dimension $K$ is defined by a transformation matrix $T_n = T_2^{\otimes n}$, where $T_2\triangleq [ \begin{smallmatrix} 1 & 0\\ 1 & 1 \end{smallmatrix} ]$, and a frozen set $\Fset \subset \Zbin$, or inversely by an information set $\Iset = \Fset^c = \Zbin  \backslash \Fset$. 
% with $[N] = \{0,1,\ldots,N-1\}$. 
Encoding is performed as $x = u \cdot T_N$, where  $u_{\Fset} = 0$; the code is then given by 
\begin{equation}
	\code{C} = \{ x = u \cdot T_N : u \in \Fbin^N , u_\Fset = 0 \} .
\end{equation}
\end{definition}
Elements of the information set $\Iset$ are usually chosen according to their reliability; in practice, virtual channels originated by the transformation matrix are sorted in reliability order, and the indices of the $K$ most reliable ones form the information set of the code. 
The reliability of virtual channels depend on the characteristics of the transmission channel, forcing the code designer to recalculate the reliability order every time the channel parameters change. 
However, even if the polarization effect is non-universal, the structure of transformation matrix $T_N$ permits to define an \emph{universal partial order} (UPO) among virtual channels. 
In practice, we say that $i \preceq j$ if virtual channel represented by index $i$ is always weaker (i.e. less reliable) than the channel represented by index $j$, independently of the original channel \cite{PartialOrder}. 
A polar code fulfills the UPO if this partial reliability order is followed. 
\begin{definition}[Universal partial order]
A polar code is said to fulfill the universal partial order (UPO) if $\forall i,j \in \Zbin$ such that $i \preceq j$ then $i \in \mathcal{I} \Rightarrow j \in \mathcal{I}$.
\end{definition}
\begin{figure*}[tb]
	\begin{center}
		\resizebox{.98\textwidth}{!}{\input{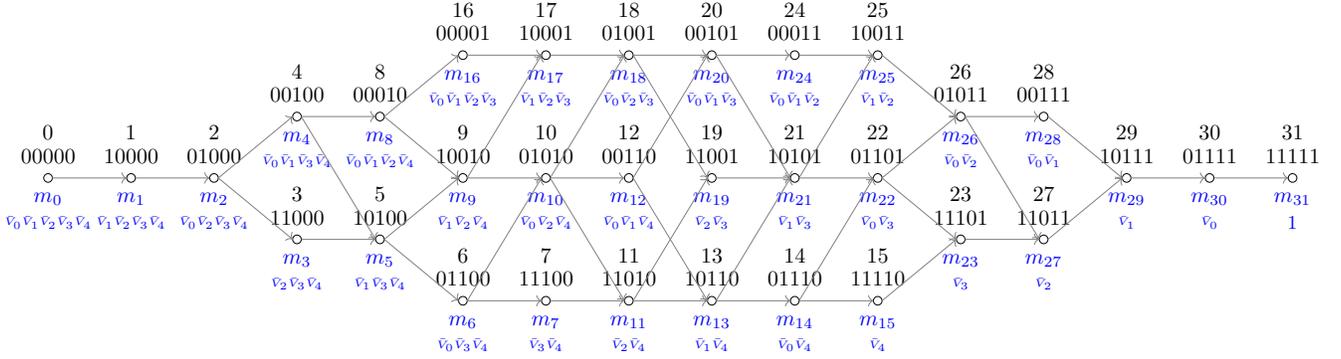}}
	\end{center}
	\caption{Hasse diagram of UPO for $N=32$; integers and their binary expansions are depicted in black, corresponding monomials are depicted in blue.}
	\label{fig:hasse}
\end{figure*}
Figure~\ref{fig:hasse} provides a graphical representation of the relations among the virtual channels for a polar code of length $N = 32$. 
A node in the diagram represents a virtual channel, while a directed edge represents the order relation between two virtual channels. 
Nodes that are connected by a directed path can be compared through the UPO, while the order relation between unconnected nodes depends on the transmission channel and should be verified. 
As an example, since nodes $5$ and $9$ are connected, we know that $5 \preceq 9$ and virtual channel $5$ is always weaker than virtual channel $9$; on the other hand, nodes $6$ and $9$ are not connected, and hence the order between these two virtual channels depends on the original channels. 
A polar code fulfilling the UPO can be described through few generators composing the minimum information set $\mathcal{I}_{min}$ \cite{geiselhart2021automorphismPC} as
\begin{equation}
\mathcal{I}=\bigcup_{j\in\mathcal{I}_{min}} \{i\in[N],j\preceq i\}.
\end{equation}
It is worth noticing that a polar code designed to be decoded under SCL may not fulfill the UPO; the universal reliability sequence standardized for 5G is an example of this exception \cite{polar_5G}. 
However, in the following we will focus on polar codes fulfilling the UPO, since we are interested in AE decoders having SC component decoders. 

The UPO is induced by noticing that the binary expansion of the virtual channel index actually provides the sequence of channel modifications, where a $0$ represents a channel degradation and a $1$ represents a channel upgrade. 
\begin{definition}[Binary expansion]
\label{def:bin_exp}
The \emph{canonical binary expansion} connecting $\Fbin^n$ and $\Zbin$,
\begin{center}
\begin{tikzcd}
v \in \Fbin^n \arrow[r, bend left=50,"\bin{\cdot}"]
\arrow[r, bend right=50,leftarrow,"\invbin{\cdot}"]
& \bin{v} \in \Zbin
\end{tikzcd}
\end{center}
%\begin{align*}
%	            \Fbin^n & \leftrightarrow \Zbin  \\
%	            v       & \leftrightarrow \bin{v}  ,
%\end{align*}
maps an integer $v$ to its binary expansion $\bin{v}$ with least significant bit (LSB) on the left, namely with $v_0$ representing the LSB of $v$; zero padding is performed if the number of bits representing $v$ is smaller than $n$, namely if $v < 2^{n-1}$.
\end{definition}
So, given two virtual channel indices $i,j \in \Zbin$ with $i<j$, if $\bin{i}$ and $\bin{j}$ represent their binary expansions, we have that the corresponding nodes in the Hasse diagram are directly connected, and hence $i \preceq j$, when:
\begin{itemize}
\item if $\bin{i}$ and $\bin{j}$ only differ for a single entry $t$, where $\bin{i}_{t}=0$ and $\bin{j}_{t}=1$; 
\item if $\bin{i}$ and $\bin{j}$ only differ for two consecutive entries $t$ and $t+1$, where $\bin{i}_{t}=1,\bin{i}_{t+i}=0$ and $\bin{j}_{t}=0,\bin{j}_{t+1}=1$. 
\end{itemize}
These rules are iterated to generate the UPO among the virtual channels, and hence among the integers representing these channels. 

\subsection{Monomial Codes}
Monomial codes are a family of codes of length $N=2^n$ defined by evaluations of boolean functions of $n$ variables. %, namely as polynomials in $\Fbin[X_0,\dots,X_{n-1}]$. 
%Polar and Reed-Muller codes can be described through this formalism.
A boolean function can be described as a map from a string of $n$ bits to a single bit defined by a specific truth table; this map can always be rewritten as a polynomial $f$ in the polynomial ring in $n$ indeterminates over $\Fbin$, i.e. $f \in \Fbin[V_0,\dots,V_{n-1}] \triangleq \Fbin^{[n]}$.
%A boolean function $f \in \Fbin[X_0,\dots,X_{n-1}]$ $f:\Fbin^n \rightarrow \Fbin$ can be described as map between a string of $n$ bits to a single bit defined by a specific truth table; this map can be described by a polynomial in $\Fbin[X_0,\dots,X_{n-1}]$. 
Since an element of $\Fbin^n$ can be interpreted as the binary representation of an integer, boolean functions can be seen as maps associating a bit to each integer in $\Zbin. $ %rewritten as $f:\Zbin \rightarrow \mathbb{F}_2$. 
Every boolean function can be written as a linear combination of monomials in variables $V_i$, forming a basis for the space of boolean functions. 
A monomial in $V_i$ is a product of powers of variables $V_i$, with $0\leq i < n$, having non-negative exponents, e.g. $V_0^2 V_2$. 
However, since $V_i^2=V_i$ in $\Fbin^{[n]}$, variables either appear or not in monomials. 
Similarly, \emph{negative monomials}, namely monomials in $\bar{V}_i = \neg V_i = (1 \oplus V_i)$, form a basis of the same space; in the following, we will use negative monomials to smoothly map polar codes to monomial codes, and we call $\mathcal{M}^{[n]} \subset \Fbin^{[n]}$ the set of monomials in $n$ indeterminates over $\Fbin$. 
Given their importance in the definition of boolean functions, we introduce here a notation to connect each monomial in $\mathcal{M}^{[n]}$ to an integer in $\Zbin$.
\begin{definition}[Monomials canonical map]
\label{def:mon_map}
The \emph{canonical map} connecting $\Zbin$ and $\mathcal{M}^{[n]}$,
\begin{align*}
	            \Zbin & \leftrightarrow \mathcal{M}^{[n]}  \\
	            t     & \leftrightarrow m_t  ,
\end{align*}
connects integer $t \in \Zbin$, where for some $Q \subset \mathbb{Z}_{n}$
\begin{equation}
t=\sum_{i \in Q} 2^i,
\end{equation}
to monomial $m_t \in \mathcal{M}^{[n]}$, defined by 
\begin{equation}
m_t = \prod_{i \notin Q} \bar{V}_i
\end{equation}
\end{definition}
In practice, the monomials canonical map transforms a monomial into an integer having zeroes in its binary expansion corresponding to the variable indices, and ones elsewhere.  
Equipped with this map, we can now define the monomial evaluation function. 
\begin{definition}[Evaluation function]
\label{def:ev_fun}
%Another fundamental function to introduce is the \emph{evaluation function}. 
The evaluation function $\eval : \Fbin^{[n]} \rightarrow \Fbin^N$ checks the output of $f$ for all elements of $\mathbb{F}_2^n$ in increasing order; in practice, 
\begin{equation}
x^{(f)} = \eval(f) \triangleq (f(\bin{0}),f(\bin{1}),\ldots,f(\bin{N-1})).
\end{equation}
\end{definition}
As a consequence, every boolean function $f$ can be naturally associated to a binary vector $x^{(f)}$ of length $N$ through the evaluate function $\eval(f)$.
An example of this construction can be found in Appendix~\ref{app:eval}. 
%High degree monomials are defined as products of degree-one monomials, namely single variables. 
%Knowing the evaluations of single variables, it is easy to calculate the evaluation of high-degree monomials as xor of strings. 
Given that high degree monomials are defined as products of degree-one monomials, namely single variables, knowing the evaluations of single variables it is easy to calculate the evaluation of high-degree monomials as xor of the associated binary strings. 
This property permits to generate a family of block codes based on the evaluation of monomial functions. 
\begin{definition}[Monomial codes]
A \emph{monomial code} of length $N=2^n$ and dimension $K$ is given by the evaluations of linear combinations of the monomials included in a generating monomial set $\mathcal{G} \subset \mathcal{M}^{[n]}$, with $|\mathcal{G}|=K$.
\end{definition}
%A monomial code is defined by a set of monomials. 
%The codebook is given by the evaluations of the monomials in the set, along with the evaluations on the linear combinations of the monomials in the set. 
%Monomial codes can alternatively be defined over negative monomials. 
Given $n$ variables, it is possible to form $2^n$ different monomials, of which $\binom{n}{r}$ are of degree $r$. 
A monomial code of length $N=2^n$ and dimension $K$ is generated by picking $K$ monomials out of the $N$. 
Reed-Muller codes are monomial codes defined by picking all monomials up to a certain degree: in particular, $\mathcal{R}(r,n)$ code is generated by all monomials of degree smaller or equal to $r$. 
The family of monomial codes being so rich, in order to introduce a parallelism between these codes and the polar codes, a notion of partial order among monomials is needed. 
Authors in \cite{BardetPolyPC} propose such a partial order. 
Given two monomials $m_{t_1},m_{t_2} \in \mathcal{G}$ of degrees $s_1$ and $s_2$ respectively, where $m_{t_1}=\bar{V}_{i_0} \cdot \ldots \cdot \bar{V}_{i_{s_1-1}}$ and $m_{t_2}=\bar{V}_{j_0} \cdot \ldots \cdot \bar{V}_{j_{s_2-1}}$, we say that $m_{t_1} \preceq m_{t_2}$: 
\begin{itemize}
\item when $s_1=s_2=s$, then if and only if $i_l \leq j_l$ for all $l=0,\dots,s-1$.
\item when $s_1<s_2$, then if and only if there exists a monomial $m_{t'}$ such that $m_{t'} | m_{t_2}$, $deg(m_{t_1}) = deg(m_{t'})$ and $m_{t_1} \preceq m_{t'}$. 
\end{itemize}
It is worth noticing that we used the same symbol to define the partial order relations for integers and for monomials; in fact, we will see that these two orders are equivalent. 
Equipped with the notion of partial order among monomials, authors in \cite{BardetPolyPC} define the family of \emph{decreasing monomial codes} as the monomial codes for which for every monomial in $\mathcal{G}$, all its sub-monomial factors are also included in $\mathcal{G}$. 
%A monomial code is \emph{decreasing} if for every monomial in $\mathcal{G}$ all its sub-monomial factors are also included in $\mathcal{G}$.  
%In order to define properly this property, a notion of partial order among monomials is needed. 
\begin{definition}[Decreasing monomial codes]
	A monomial code is \emph{decreasing} if for every monomial $m_{t_1},m_{t_2} \in \mathcal{M}^{[n]}$ such that $m_{t_1} \preceq m_{t_2}$ then $m_{t_2} \in \mathcal{G} \Rightarrow m_{t_1} \in \mathcal{G}$.
\end{definition}

\subsection{Monomial and Polar Codes Equivalence}
%\begin{figure}
\begin{table}[ht]
\caption{Canonical map between integers and monomials for $n=3$.} 
\centering 
%	\begin{center}
	\begin{tabular}{|r|r|c|c|c|}
		\hline
		degree & monomial & evaluation & row of $T_8$ & expansion\\
		\hline
		$0$ & $m_7=1$      & $11111111$ & 7 & 111 \\
		\hline
		\multirow{3}{*}{$1$} & $m_6=\bar{V}_0$    & $10101010$ & 6 & 011 \\
		& $m_5=\bar{V}_1$ & $11001100$ & 5 & 101 \\
		& $m_3=\bar{V}_2$ & $11110000$ & 3 & 110 \\
		\hline
		\multirow{3}{*}{$2$} & $m_4=\bar{V}_0\bar{V}_1$ & $10001000$ & 4 & 001 \\
		& $m_2=\bar{V}_0\bar{V}_2$ & $10100000$ & 2 & 010 \\
		& $m_1=\bar{V}_1\bar{V}_2$ & $11000000$ & 1 & 100 \\
		\hline
		$3$ & $m_0=\bar{V}_0\bar{V}_1\bar{V}_2$ & $10000000$ & 0 & 000 \\
		\hline
	\end{tabular}
%	\end{center}
	\label{tab:pol2mon}
\end{table}
%\end{figure}
Polar codes can be described as monomial codes. 
In fact, the kernel matrix $T_2$ can be seen as the evaluation of monomials in $\Fbin[\bar{V}_0]$; $\eval(\bar{V}_0) = [1,0]$ represents the first row of $T_2$, while $\eval(1) = [1,1]$ represents its second row. 
An example of this parallelism is provided in Table~\ref{tab:pol2mon} for $n=3$. 
This parallelism can be extended to polar codes of any length $N=2^n$, such that each monomial in $\mathcal{M}^{[n]}$ is represented by an integer in $\Zbin$ through the monomials canonical map. 
%\begin{definition}[Monomials canonical map]
%\label{def:mon_map}
%The \emph{canonical map} connecting $\Zbin$ and $\mathcal{M}^{[n]}$,
%\begin{align*}
%	            \Zbin & \leftrightarrow \mathcal{M}^{[n]}  \\
%	            t     & \leftrightarrow m_t  ,
%\end{align*}
%connects integer $t \in \Zbin$, where for some $S \subset \{0,1,\ldots,n-1\}$
%\begin{equation}
%t=\sum_{i \in S} 2^i,
%\end{equation}
%with monomial $m_t \in \mathcal{M}^{[n]}$, defined by 
%\begin{equation}
%m_t = \prod_{i \notin S} \bar{V}_i
%\end{equation}
%\end{definition}
%In practice, the monomials canonical map transforms a monomial into an integer having zeroes in its binary expansion corresponding to the variable indices, and ones elsewhere. 
%, where row $j=\sum_{i \in S} 2^i$, for some $S \subset \{0,1,\ldots,n-1\}$, of transformation matrix $T$ is mapped to monomial $m_j = \prod_{i \notin S} \bar{V}_i \in \mathcal{M}^{[n]}$. 
%Moreover, it 
This map permits to connect row $t$ of transformation matrix $T = T_2^{\otimes n}$ to a unique monomial $m_t$; 
%a monomial $m_j = \prod_{i \in S} \bar{X}_i \in \Fbin[\bar{X}_0,\dots,\bar{X}_{n-1}]$, with $S \subset \{0,1,\ldots,n-1\}$, is mapped to row $j=\sum_{i \notin S} 2^i$ of transformation $T$. 
%In this way, there is a direct map between monomials in $\mathcal{M}^{[n]}$ and integers in $\Zbin$. 
%According to the previously described map, 
Then, polar codes can be equivalently defined in terms of information set $\mathcal{I}$ or generating monomial set $\mathcal{G}$, since $m_t \in \mathcal{G} \Leftrightarrow t \in \mathcal{I}$.  
%Then, $m_t \in \mathcal{G} \Leftrightarrow t \in \mathcal{I}$, and polar codes can be equivalently defined in terms of information set $\mathcal{I}$ or generating monomial set $\mathcal{G}$.  
%Monomials in $\mathcal{M}^{[n]}$ are mapped to integers in $\Iset$, and vice versa. 
To summarize, polar codes can be seen as monomial codes where the generating monomials are chosen according to polarization effect. 

In the following, we prove the main result of this section, namely that UPO property for polar codes is equivalent to decreasing property for monomial codes. 
As a consequence, every polar code fulfilling the UPO is also a decreasing monomial code, and vice versa. 
The fact that UPO is a sufficient condition for the generation of decreasing monomial codes has been proved in \cite{BardetPolyPC}, while, for the best of our knowledge, this is the first time that it is proven that it is also a necessary condition. 
Before proving the main theorem of the section, we need to prove the equivalence between the partial order defined on integers and the partial order defined on monomials. 
%We can use this map to prove the following property: 
%Moreover, the double description of the nature of polar codes implies that $j \preceq i \Leftrightarrow m_i \preceq m_j$; this relation will be used to in the following lemma, that is then used in the proof of the main result of this section; 
\begin{lemma}
\label{lem:RS_SD}
%Right swap rule for integers is equivalent to same degree case for monomials. 
For every $a,b \in \Zbin$, then $b \preceq a \Leftrightarrow m_a \preceq m_b$
\begin{proof}
%First, we show that this is a sufficient condition. 
%\textbf{Sufficient condition}\\
\textit{Sufficient condition:} 
Given $a,b \in \Zbin$ such that $b \preceq a$, we want to prove that $m_a \preceq m_b$ by checking if the two conditions for the partial ordering of the monomials are satisfied. 
If we call $HW(\bin{t})$ the Hamming weight of the binary expansion of integer $t$, then $HW(\bin{t}) = n-deg(m_{t})$ by Definition~\ref{def:mon_map}, and the degree conditions of monomial partial orders can be rewritten as conditions on the Hamming weights of binary expansions of integers. 

If $HW(\bin{a}) = HW(\bin{b}) = n-s$, then $deg(m_{a}) = deg(m_{b}) = s$; if we call $\bar{S}$ the set of variable indices composing monomial $m_t$, namely  $m_t = \prod_{i \in \bar{S}} \bar{V}_i \in \mathcal{M}^{[n]}$, then $m_a \preceq m_b$ if and only if $i^{(a)}_l \leq j^{(b)}_l$ for all $l=0,\dots,s-1$, where $i^{(t)}_l$ is the $l$-th bit in the binary expansion of integer $t$. 
Since $b \preceq a$ and $HW(\bin{a}) = HW(\bin{b})$, then it is possible to create a chain of integers $t_0,\dots,t_r$ such that $b = t_r \preceq t_{r-1} \preceq \ldots \preceq t_1 \preceq t_0 = a$ such that they all have the same Hamming weight and each couple of subsequent integers in the chain only differ for two consecutive entries. 
%are partially sorted according to the right swap rule. 
By definition, we have that $m_{t_{i}} \preceq m_{t_{i+1}}$ for the monomials partial order definition, and then the chain can be rewritten as $m_a = m_{t_0} \preceq m_{t_{1}} \preceq \ldots \preceq m_{t_{r-1}} \preceq m_{t_r} = m_b$.

Alternatively, if $HW(\bin{b}) < HW(\bin{a})$, than there is at least one chain of integers such that $b = t_r \preceq t_{r-1} \preceq \ldots \preceq t_1 \preceq t_0 = a$, where each couple of subsequent integers only differ for a single or two consecutive entries. 
%each element of the chain is obtained from the previous one by applying either the addition or the swap rules. 
Now, let us focus on three elements of the chain $t_{i+1} \preceq t_i \preceq t_{i-1}$ such that $t_{i+1}$ and $t_i$ differ for a single entry, while $t_{i}$ and $t_{i-1}$ differ for two consecutive entries; in this case, there exists another integer $t'_i$ such that $t_{i+1} \preceq t'_i \preceq t_{i-1}$ and $t_{i+1}$ and $t'_i$ differ for two consecutive entries, while $t'_{i}$ and $t_{i-1}$ differ for a single entry. 
%application of the two rules is swapped, namely the swap rule is applied first and then the addition rule. 
In practice, it is always possible to invert the application of two different rules in the chain. 
As a consequence, it is always possible to create a chain $b = t'_r \preceq t'_{r-1} \preceq \ldots \preceq t'_1 \preceq t'_0 = a$ such that integers from $b = t'_r$ to a certain integer $t'_c$ differ for a single entry, while from $t'_c$ to $t'_0 = a$ two consecutive integers differ for two consecutive entries. 
%from which a chain of swap rules is applied until element $t'_0 = a$. 
Then, from this chain we extract element $t'_c$ with $b \preceq t'_c \preceq a$; by construction, the set of the indices of the positions of zeroes in the binary expansion of $t'_c$ is a subset of the same set of $b$, and hence $m_{t'_c} | m_b$. 
Moreover, $HW(t'_c) = HW(\bin{a})$, and then $deg(m_{t'_c}) = deg(m_a)$ and thus $m_a \preceq m_b$. 

%Next, we show that this is a necessary condition. 
%\textbf{Necessary condition}\\
\textit{Sufficient condition:} 
Given $m_a,m_b \in \mathcal{M}^{[n]}$ such that $m_a \preceq m_b$, we want to prove that $b \preceq a$ by checking if the two conditions for the partial ordering of the integers are satisfied. 
If $deg(m_{a}) = deg(m_{b})$, we can create a chain of intermediate monomials $m_a = m_{t_r} \preceq m_{t_{r-1}} \preceq \ldots \preceq m_{t_1} \preceq m_{t_0} = m_b$ such that they all have the same degree and each couple of subsequent monomials in the chain only differ by a variable; in other words, there exists a variables swap chain passing from $m_a$ to $m_b$ where each step of the chain can be sorted according to the partial order. 
This monomials chain can be mirrored to the corresponding integers chain $b = t_0 \preceq t_{1} \preceq \ldots \preceq t_{r-1} \preceq t_r = a$, where consecutive integers differ for two consecutive entries, and thus $b \preceq a$.
Alternatively, If $deg(m_{a}) < deg(m_{b})$, then there exists a monomial $m_{t}$ dividing $m_{b}$ and having the same degree of $m_{a}$ such that $m_{a} \preceq m_{t} \preceq m_{b}$. 
Then, $t \preceq a$ for the previous case, and $b \preceq t$. 
%multiple applications of the addition rule. 
%So, for a given $m_t \in \mathcal{G}$ and $m_{t'} \preceq m_{t}$, there are two possibilities:
%\begin{itemize}
%\item If $deg(m_{t'}) = deg(m_{t})$, then $t \preceq t'$ due to , and hence $t' \in \mathcal{I}$ due to UPO hypothesis, which implies that $m_{t'} \in \mathcal{G}$. 
%\item If $deg(m_{t'}) < deg(m_{t})$, then there exists a monomial $m_{t''}$ dividing $m_{t}$ and having the same degree of $m_{t'}$ such that $m_{t'} \preceq m_{t''} \preceq m_{t}$. 
%Then, $t'' \preceq t'$ for the previous case and $t \preceq t''$ for the addition rule, hence $t \preceq t'$ and for the UPO hypothesis $t' \in \mathcal{I}$, which implies that $m_{t'} \in \mathcal{G}$. 
%\end{itemize}
%Second, we assume that the monomial code is decreasing. 
%In this case, we need to prove that, for every $t \in \mathcal{I}$, then also $t' \in \mathcal{I}$ for every $t \preceq t'$. 
%If we call $HW(t)$ the Hamming weight of the binary expansion of integer $t$, we have again two cases:
%\begin{itemize}
%\item If $HW(t') = HW(t)$, then for  we have that $m_{t'} \preceq m_{t}$, and for the decreasing monomial hypothesis $m_{t'} \in \mathcal{G}$, which implies $t' \in \mathcal{I}$. 
%\item If $HW(t') > HW(t)$, then if we can find an integer $t''$ having the same Hamming weight of $t'$ such that $t \preceq t'' \preceq t'$, the theorem is proved. 
%\comment{TBD}
%\end{itemize}
\end{proof}
\end{lemma}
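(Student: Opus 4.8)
The plan is to translate both partial orders into a common combinatorial language via Definition~\ref{def:mon_map}, and then reduce the equivalence to a single dominance-order fact on subsets. First I would set up the dictionary: writing $Z(t) = \{i \in \mathbb{Z}_n : \bin{t}_i = 0\}$ for the set of zero-positions of $\bin{t}$, Definition~\ref{def:mon_map} says that $Z(t)$ is exactly the set of variable indices of $m_t$, so $\deg(m_t) = |Z(t)| = n - HW(\bin{t})$. Hence equal degree corresponds to equal Hamming weight, and $m_{t'} \mid m_t \Leftrightarrow Z(t') \subseteq Z(t)$, i.e. $t$ is obtained from $t'$ by flipping some $0$'s of $\bin{t'}$ to $1$'s. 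These translations are immediate and carry all the content of the monomial definitions into statements about the sets $Z(\cdot)$.

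For the forward implication $b \preceq a \Rightarrow m_a \preceq m_b$ I would argue on generators and use transitivity of the monomial order. Since $b \preceq a$ is by definition a chain $b = c_0 \preceq c_1 \preceq \dots \preceq c_r = a$ of elementary (covering) steps, it suffices to check that each elementary integer step $c \preceq d$ forces $m_d \preceq m_c$; chaining these then yields $m_a = m_{c_r} \preceq \dots \preceq m_{c_0} = m_b$. A single-bit flip ($\bin{c}_t=0$, $\bin{d}_t=1$, else equal) gives $Z(d) \subsetneq Z(c)$, so $m_d \mid m_c$ and the $s_1<s_2$ clause of the monomial order (with intermediate monomial $m_d$ itself) yields $m_d \preceq m_c$. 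A consecutive swap ($\bin{c}$ has $10$ and $\bin{d}$ has $01$ in positions $t,t+1$) moves one zero from position $t+1$ down to $t$, so the sorted index sequences of $m_d$ and $m_c$ agree except that $m_d$ carries $\bar{V}_t$ where $m_c$ carries $\bar{V}_{t+1}$; since $t < t+1$ the equal-degree clause gives $m_d \preceq m_c$.

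For the reverse implication I would first settle the equal-degree case and then bootstrap. When $\deg m_a = \deg m_b$, the monomial relation $m_a \preceq m_b$ is exactly the entrywise dominance $z^{(a)}_l \le z^{(b)}_l$ of the ascending enumerations of $Z(a)$ and $Z(b)$, and I must produce an integer chain $b \preceq \dots \preceq a$. This is where the real work sits: I would prove the combinatorial lemma that the entrywise dominance order on size-$k$ subsets of $\{0,\dots,n-1\}$ is the transitive closure of the elementary move ``lower one element by one into the adjacent free slot below it'', because each such move is precisely a $10\to01$ integer generator (a zero descending by one position). Given $Z(a) \le Z(b)$ entrywise and $Z(a)\neq Z(b)$, I would take the largest index $l$ with $z^{(b)}_l > z^{(a)}_l$, descend to the bottom of the maximal run of consecutively occupied positions ending at $l$, and decrement there; this strictly decreases $\sum_l (z^{(b)}_l - z^{(a)}_l)$ while preserving dominance, giving the chain by induction. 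The unequal-degree case $\deg m_a < \deg m_b$ then follows from the definitional intermediate monomial $m_{t'} \mid m_b$ with $\deg m_{t'} = \deg m_a$ and $m_a \preceq m_{t'}$: divisibility gives $Z(t') \subseteq Z(b)$, hence $b \preceq t'$ by a chain of single-bit flips, while the equal-degree case applied to $m_a \preceq m_{t'}$ gives $t' \preceq a$, and transitivity of the integer order closes $b \preceq a$.

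The main obstacle I anticipate is the dominance-generation lemma, specifically guaranteeing that a legal adjacent move toward $Z(a)$ always exists: when the chosen element sits atop a run of consecutively occupied slots one cannot decrement it directly, so I must show the strict inequality propagates down the run, namely from $z^{(a)}_{l-1} < z^{(a)}_l \le z^{(b)}_l - 1 = z^{(b)}_{l-1}$ one gets $z^{(b)}_{l-1} > z^{(a)}_{l-1}$, and inductively down the run, so that the lowest element of the run can be moved into its free slot. Everything else is either the dictionary translation of Definition~\ref{def:mon_map} or transitivity of the two closures; notably, framing the forward direction through covering steps avoids any rule-commutation argument entirely.
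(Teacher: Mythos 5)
Your proof is correct, and it departs from the paper's argument in both directions in instructive ways. In the forward direction the paper does \emph{not} chain elementary monomial relations as you do: it keeps the whole mixed chain of integer generators and then reorders it (the rule-commutation argument you deliberately avoid), pushing all single-bit flips next to $b$ and all consecutive swaps next to $a$, so as to extract one intermediate integer $t'_c$ with $m_{t'_c} \mid m_b$ and $\deg(m_{t'_c})=\deg(m_a)$ --- exactly the witness required by the second clause of the monomial order, so that $m_a \preceq m_b$ follows from the definition in a single step. Your covering-step argument instead concludes by composing mixed-degree relations, which silently requires transitivity of the monomial $\preceq$; the two definitional clauses do not give this for free (composing an equal-degree step with a degree-dropping step needs a small witness-transport argument), so you are leaning on the fact that \cite{BardetPolyPC} establishes $\preceq$ as a partial order. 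That is legitimate, but note the paper's commutation trick exists precisely to keep this point self-contained. In the reverse direction the balance tips the other way: where the paper merely \emph{asserts} that an equal-degree relation $m_a \preceq m_b$ can be realized as a chain of single-variable swaps that mirrors to consecutive-entry integer swaps, you actually prove the required dominance-generation lemma --- that entrywise dominance on $k$-subsets is generated by adjacent one-slot moves --- including the only delicate point, the existence of a legal move, via the run-descent argument propagating strictness $z^{(a)}_{l-1} < z^{(a)}_l \le z^{(b)}_l - 1 = z^{(b)}_{l-1}$ down a maximal occupied run (and your potential-function $\sum_l (z^{(b)}_l - z^{(a)}_l)$ gives termination). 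Your unequal-degree reverse case coincides with the paper's. In short: the paper's route buys definitional self-containment in the forward direction at the cost of an informal commutation claim, while yours buys a genuinely complete proof of the equal-degree reverse case, which is the step the paper glosses over.
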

Equipped with Lemma~\ref{lem:RS_SD}, we can now prove the main result of this section, namely the equivalence between UPO polar codes and decreasing monomial codes. 
%It has been proved in \cite{BardetPolyPC} that this is a necessary condition; Here we present an alternative proof for this property, and we extend it by proving that it is also a necessary condition. 
\begin{theorem}
%If the polar code design is compliant with the UPO framework, then it is also a decreasing monomial code. 
A polar code design is compliant with the UPO framework if and only if it is a decreasing monomial code. 
\begin{proof}
%\comment{(correct, to be rephrased)}
First, we assume that the information set of the polar code is compliant with the UPO framework. 
%Here we present an alternative proof as compared to \cite{BardetPolyPC}. 
We need to prove that, if $m_t \in \mathcal{G}$, then also $m_{t'} \in \mathcal{G}$ for every $m_{t'} \preceq m_{t}$. 
According to Lemma~\ref{lem:RS_SD}, $t \preceq t'$, and since $t \in \mathcal{I}$ we have that also $t' \in \mathcal{I}$ for the UPO hypothesis, and hence $m_{t'} \in \mathcal{G}$.
Second, we assume the code to be decreasing monomial; now we need to prove that for every $t \in \mathcal{I}$, then also $m_{t'} \in \mathcal{G}$ for every $t \preceq t'$.  
Again, Lemma~\ref{lem:RS_SD} says that $m_{t'} \preceq m_{t}$, and since $m_t \in \mathcal{G}$ then also $m_{t'} \in \mathcal{G}$ for the decreasing monomial hypothesis, and hence $t' \in \mathcal{I}$.
\end{proof}
\end{theorem}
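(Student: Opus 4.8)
The plan is to reduce the theorem to the order-equivalence already established in Lemma~\ref{lem:RS_SD}, combined with the index-to-monomial identification $m_t \in \mathcal{G} \Leftrightarrow t \in \mathcal{I}$. Once these two facts are in hand, the statement becomes essentially a bookkeeping exercise: the UPO condition is an \emph{upward-closure} requirement on $\mathcal{I}$ with respect to $\preceq$ on $\Zbin$, whereas the decreasing property is a \emph{downward-closure} requirement on $\mathcal{G}$ with respect to $\preceq$ on monomials, and Lemma~\ref{lem:RS_SD} tells us precisely that these two partial orders are the reversed images of one another. So the genuine content has already been absorbed into the Lemma, and the theorem is a translation through that order-reversing correspondence.

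Concretely, I would prove both implications directly. For the forward direction, I assume the code is UPO-compliant and take any pair with $m_{t_1} \preceq m_{t_2}$ and $m_{t_2} \in \mathcal{G}$. Applying Lemma~\ref{lem:RS_SD} with $a=t_1$, $b=t_2$ converts $m_{t_1} \preceq m_{t_2}$ into $t_2 \preceq t_1$, while the identification gives $t_2 \in \mathcal{I}$. The UPO hypothesis, instantiated at $i=t_2$ and $j=t_1$, then yields $t_1 \in \mathcal{I}$, i.e. $m_{t_1} \in \mathcal{G}$, which is exactly the decreasing property.

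For the reverse direction, I assume the code is a decreasing monomial code and take any $i \preceq j$ with $i \in \mathcal{I}$. The identification gives $m_i \in \mathcal{G}$, and Lemma~\ref{lem:RS_SD}, now read with $a=j$ and $b=i$, turns $i \preceq j$ into $m_j \preceq m_i$. The decreasing hypothesis applied to $m_j \preceq m_i$ then forces $m_j \in \mathcal{G}$, hence $j \in \mathcal{I}$, which is precisely the UPO condition. This closes the equivalence.

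The only real obstacle is not mathematical depth but careful tracking of the order reversal: one must keep in mind that $\preceq$ on integers is the opposite of $\preceq$ on monomials, so that the ``grows upward'' closure of the information set matches the ``shrinks downward'' closure of the generating monomial set. Getting the roles of $a$ and $b$ (equivalently $i$ and $j$) straight in each application of Lemma~\ref{lem:RS_SD} is the one place where an inversion-type slip could creep in; everything else is immediate from the definitions.
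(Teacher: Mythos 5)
Your proposal is correct and follows essentially the same route as the paper's own proof: both directions are obtained by translating between $\mathcal{I}$ and $\mathcal{G}$ via the identification $m_t \in \mathcal{G} \Leftrightarrow t \in \mathcal{I}$ and then invoking the order-reversing equivalence $b \preceq a \Leftrightarrow m_a \preceq m_b$ of Lemma~\ref{lem:RS_SD}. Your careful tracking of the roles of $a$ and $b$ in each application of the lemma matches the paper's argument exactly, so there is nothing to add.
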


%\comment{CONTINUE HERE - VALERIO}\\

% --------------------------------------------------------------------
\section{Polar code automorphisms}
\label{sec:automorphisms}

Automorphisms are permutations of code bit positions that are invariant to the code, namely that map codewords into codewords. 
The analysis of the automorphism group of a code permits to discover hidden symmetries of the codewords, and can be used to find new properties of the code. 
In this paper, our study of the automorphism group of polar codes is driven by the will of improving AE decoding algorithms for this family of codes.  
%such that every permuted codeword is a codeword as well, i.e. a permutation such that the code is invariant to it.  
In this section, we first revise permutations defined by general affine transforms and then discuss properties of such permutations that are automorphisms of polar codes. 
Moreover, we provide all the tools to help the reader to map an affine transformation to the related code bit permutation, by explicitly showing how to pass from one to the other. 
Equipped with this map, it will be easier for the reader to understand the main results of following sections and to reproduce the results presented in Section~\ref{sec:nem_res}. 
%the numerical analysis section.

% --------------------------------------------------------------------
\subsection{Permutations as affine transformations}
\label{sec:}

\begin{definition}[Permutation]
	A permutation $\pi$ over the set $\Zbin$,
	\begin{align*}
		\pi : \Zbin &\rightarrow \Zbin   \\
		i &\mapsto     \pi(i)  ,
	\end{align*} 
	is a bijection of $\Zbin$ onto itself.
\end{definition}
The \emph{trivial (identity) permutation} is written as $\pid$ and maps every integer to itself. 
Permutations can be applied to vectors in different ways; in the following, for permutations of vectors we will use the \emph{functional passive notation}, where the element in position $i$ is replaced by element in position $\pi(i)$ after the permutation and permutations are concatenated giving priority to the right \cite{perm_book}. 
\begin{definition}[Vector permutation]
	\label{def:vector-permutation}
	Given a permutation $\pi$, the vector $y = ( y_0 , y_1 , \ldots , y_{2^n-1} )$ is called the permuted vector of vector $x = ( x_0 , x_1 , \ldots , x_{2^n-1} )$, if and only if
	\begin{equation*}
		y_i  =  x_{\pi(i)}  
	\end{equation*}
	for all $i \in \Zbin$.
	For convenience we may write\footnote{The meaning of $\pi$ becomes clear from the context.} $y = \pi(x)$ and call $y$ the permutation of $x$.
\end{definition}
According to the introduced notation, the concatenation of two permutations $\pi_1$ and $\pi_2$ is written as $\pi_2 \circ \pi_1 = \pi_2 \pi_1$, to be applied from right to left, i.e., $\pi_1$ first and $\pi_2$ second.  
To apply this to vectors, assume three vectors $x$,  $y = \pi_1(x)$, and $z = \pi_2(y) = \pi_2(\pi_1(x))$;  then $y_i = x_{\pi_1(i)}$ and $z_j = y_{\pi_2(j)}$ from Definition~\ref{def:vector-permutation}, and $z_j =  x_{\pi_1(\pi_2(j))}$ by substituting $i = \pi_1(j)$. 
Next, we define the group of affine transformations over binary vector, and we show how these transformations are related to permutations.
\begin{definition}[Affine transformations]
The General Affine (GA) group $\GA(n)$ is the group of affine transformations of binary vectors $v \in \Fbin^n$,
	\begin{align*}
		T_{(A,b)} : \Fbin^n & \rightarrow \Fbin^n  \\
		v                   & \mapsto     A v + b  ,
	\end{align*}
	with invertible matrices $A \in \Fbin^{n \times n}$ and arbitrary vectors $b \in \Fbin^n$.  
	Each element $T_{(A,b)}$ of this group is uniquely identified by a matrix-vector pair $(A,b)$.
\end{definition}

%\begin{figure}
%	\begin{center}
%		\begin{tikzcd}
%			f(V)						\arrow[d,"T_{(A,b)}"]	\arrow[r,"\eval"]	&  
%			(x_0,x_1,\ldots,x_{2^n-1})	\arrow[d,"\pi_{(A,b)}"]   			               \\
%			g(V)						\arrow[r,"\eval"]							&  
%			(y_0,y_1,\ldots,y_{2^n-1})
%		\end{tikzcd}
%%		\begin{tikzcd}
%%			f(V)	\arrow[d,"T_{(A,b)}"]	\arrow[r,"\eval"]  &  
%%			x		\arrow[d,"\pi"]                            \\
%%			g(V)	\arrow[r,"\eval"]                          &  
%%			y
%%		\end{tikzcd}
%	\end{center}
%	\caption{Relation between GA transformation applied to Boolean function and permutation applied to its evaluation.}
%	\label{fig:vector-GA-perm}
%\end{figure} 

Every affine transformation gives rise to a permutation, and it does this in natural way for codewords of monomial codes through Definition~\ref{def:bin_exp} as shown as follows. 
\begin{definition}[$\GA$ permutations]
\label{def_GA_perm}
The $\GA$ permutations group is the group of permutations over $\Zbin$ defined by affine transformations as $\pi_{(A,b)}(v) = \invbin{T_{(A,b)}\bin{v} }$; the mapping between $\Zbin$ and $\Fbin^n$ is given as \\
%\begin{align*}
%	\pi_{(A,b)} :& \Zbin & \rightarrow & \Fbin^n & \rightarrow & \Fbin^n & \rightarrow & \Zbin \\ 
% & v & \mapsto & \bin{v}=(v_0,\dots,v_{n-1})  & \mapsto & \bin{v}' = A \bin{v} + b & \mapsto & v',
%\end{align*}
%
%\centering
\begin{center}
\begin{tikzcd}
	\Zbin           \arrow[d,"\pi_{(A,b)}"] \arrow[r,"\bin{\cdot}"]  &  
	\Fbin^n         \arrow[d,"T_{(A,b)}"]                  \\
	\Zbin           \arrow[r,leftarrow,"\invbin{\cdot}"]             &  
	\Fbin^n
\end{tikzcd}
\ifdefined\COLS
	\hspace{.5cm}
\else
	\hspace{2cm}
\fi
\begin{tikzcd}
	v               \arrow[d,"\pi_{(A,b)}"] \arrow[r,"\bin{\cdot}"]  &  
	\bin{v}         \arrow[d,"T_{(A,b)}"]                  \\
	\pi_{(A,b)}(v)  \arrow[r,leftarrow,"\invbin{\cdot}"]             &  
	A \bin{v} + b
\end{tikzcd}
\end{center}
\end{definition}
%We will revise this in the following and relate the GA transformations to the corresponding permutations of vectors, as shown in Fig.~\ref{fig:vector-GA-perm}.
It is worth noticing that the vice versa is not true, namely that not every permutation can be expressed as an affine transformation. 
Assuming a boolean function $f \in \Fbin^{[n]}$, % and its evaluation $x^{(f)} = \eval(f) = (x^{(f)}_0,\ldots,x^{(f)}_{2^n-1}) \in \Fbin^{2^n}$.
%Assume a boolean function $f(V) \in \Fbin^{[n]}$, with $V = (V_0,V_1,\ldots,V_{n-1})$ denoting the vector of variables, and its evaluation $x^{(f)} = \eval(f) = (x^{(f)}_0,\ldots,x^{(f)}_{2^n-1}) =  \in \Fbin^{2^n}$.  
an affine transformation $T_{(A,b)}$ can be applied to $f$, obtaining boolean function $g=T_{(A,b)}(f)$ defined by 
%Consider now an affine transformation $T_{(A,b)}$, the boolean function 
\begin{equation*}
	g(V) = f( A \cdot V + b ) .
\end{equation*}
%and its evaluation $x^{(g)} = \eval(g) = (x^{(g)}_0,\ldots,x^{(g)}_{2^n-1})$, where $A V + b$ denotes application of the affine transform to the column vector $V$. 
It is possible to concatenate affine transformations; given $g=T_{(A_1,b_1)}(f)$ and $h=T_{(A_2,b_2)}(g)$, we have that $h=T_{(A,b)}(f) = T_{(A_2,b_2)}\circ T_{(A_1,b_1)}(f)$. 
By definition, we have that
\ifdefined\COLS
	\begin{align}
	h(V) &= g(A_2 \cdot V +b_2) = f(A_1(A_2 \cdot V + b_2)+b_1) = \\
	&=f(A_1 A_2 \cdot V + A_1b_2 + b_1 ).
	\end{align}
\else
	\begin{equation}
	h(V) = g(A_2 \cdot V +b_2) = f(A_1(A_2 \cdot V + b_2)+b_1) = f(A_1 A_2 \cdot V + A_1b_2 + b_1 ).
	\end{equation}
\fi
As a consequence, $T_{(A,b)}$ is defined by matrix $A=A_1A_2$ and vector $b=A_1b_2+b_1$. 
Note that the order is of the matrices is reversed, as compared to the order of the permutations.

Let us consider now $g=T_{(A,b)}(f)$ and its evaluation $x^{(g)} = \eval(g) = (x^{(g)}_0,\ldots,x^{(g)}_{2^n-1})$; this represents a binary vector that is connected to $x^{(f)}$ by a permutation as follows. 
\begin{lemma}
	\label{lem:GA-perm}
	Assume a $\GA$ transform $T_{(A,b)}$ over $\Fbin^n$ and a boolean functions $f \in \Fbin^{[n]}$, then 
	\begin{equation*}
		x^{\left( T_{(A,b)}(f) \right)} = \pi_{(A,b)} \left( x^{(f)} \right),
	\end{equation*}
	where $\pi_{(A,b)}$ follows Definition~\ref{def_GA_perm}.
\begin{proof}
	Let us consider boolean function $g=T_{(A,b)}(f)$; by Definitions~\ref{def:ev_fun} and~\ref{def:vector-permutation}, we have that
	\ifdefined\COLS
		\begin{align}
		x^{\left( T_{(A,b)}(f) \right)}_i &= x^{(g)}_i = g(\bin{i}) = f \left( A\bin{i}+b \right) = \\
		&= f \left( \bin{\pi_{(A,b)}(i)} \right) = x^{(f)}_{\pi_{(A,b)}(i)}. 
		\end{align}
	\else
		\begin{equation}
		x^{\left( T_{(A,b)}(f) \right)}_i = x^{(g)}_i = g(\bin{i}) = f \left( A\bin{i}+b \right) = f \left( \bin{\pi_{(A,b)}(i)} \right) = x^{(f)}_{\pi_{(A,b)}(i)}. 
		\end{equation}
	\fi
			
\end{proof}	
\end{lemma}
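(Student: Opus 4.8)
The plan is to establish the vector identity entrywise: I will show that for every index $i \in \Zbin$, the $i$-th coordinate of $x^{(T_{(A,b)}(f))}$ equals the $i$-th coordinate of $\pi_{(A,b)}(x^{(f)})$. Since both sides are length-$N$ binary vectors, coordinatewise agreement is exactly what is needed, and the whole argument reduces to chaining the relevant definitions in the correct order.

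First I would unfold the left-hand side. Writing $g = T_{(A,b)}(f)$, the evaluation function (Definition~\ref{def:ev_fun}) gives $x^{(g)}_i = g(\bin{i})$, and the action of an affine transform on a boolean function gives $g(\bin{i}) = f(A\bin{i}+b)$. The key observation is then to recognise the argument $A\bin{i}+b \in \Fbin^n$ as the binary expansion of the integer $\pi_{(A,b)}(i)$. Indeed, $\bin{\cdot}$ and $\invbin{\cdot}$ are mutually inverse bijections between $\Zbin$ and $\Fbin^n$, so $A\bin{i}+b = \bin{\invbin{A\bin{i}+b}}$, and by Definition~\ref{def_GA_perm} we have $\invbin{A\bin{i}+b} = \invbin{T_{(A,b)}\bin{i}} = \pi_{(A,b)}(i)$. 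Hence $f(A\bin{i}+b) = f(\bin{\pi_{(A,b)}(i)}) = x^{(f)}_{\pi_{(A,b)}(i)}$, the last step again being the evaluation-function definition.

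Finally I would identify the right-hand side: by the vector-permutation convention (Definition~\ref{def:vector-permutation}), the $i$-th coordinate of $\pi_{(A,b)}(x^{(f)})$ is $x^{(f)}_{\pi_{(A,b)}(i)}$, which matches the expression obtained above, closing the argument.

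The proof carries no real analytic difficulty; it is pure definition-chasing. The one place that deserves care, and which I expect to be the main pitfall, is the orientation of the permutation under the functional passive notation adopted in Definition~\ref{def:vector-permutation}: since that convention reads entries according to $y_i = x_{\pi(i)}$, I must ensure that $\pi_{(A,b)}$ itself, and not its inverse, appears as the index map. This is precisely guaranteed by the commuting square of Definition~\ref{def_GA_perm}, so the main task is to apply that diagram in the correct direction rather than to perform any nontrivial computation.
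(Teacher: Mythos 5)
Your proof is correct and follows essentially the same route as the paper's: both establish the identity coordinatewise via the chain $x^{(g)}_i = g(\bin{i}) = f(A\bin{i}+b) = f\bigl(\bin{\pi_{(A,b)}(i)}\bigr) = x^{(f)}_{\pi_{(A,b)}(i)}$ and then match this against the passive-notation convention $y_i = x_{\pi(i)}$ of Definition~\ref{def:vector-permutation}. Your explicit justification that $A\bin{i}+b = \bin{\pi_{(A,b)}(i)}$ via the mutually inverse maps $\bin{\cdot}$ and $\invbin{\cdot}$ of Definition~\ref{def_GA_perm} is merely a slightly more detailed rendering of the step the paper leaves implicit.
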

As a consequence, binary vector $x^{(g)}$ is a permutation of $x^{(f)}$; this permutation is the one induced by affine transformation $T_{(A,b)}$, as expressed in this scheme:
\begin{center}
\begin{tikzcd}
	f           \arrow[d,"T_{(A,b)}"] \arrow[r,"eval"]  &  
	x^{(f)}          \arrow[d,"\pi_{(A,b)}"]                  \\
	g           \arrow[r,"eval"]             &  
	x^{(g)}
\end{tikzcd}
\end{center}

%\begin{lemma}
%	\label{lem:GA-perm}
%	Assume a GA transform $T_{(A,b)}(v) = Av+b$, the Boolean functions $f , g = f(T_{(A,b)}) \in \Fbin^{[n]}$, and their evaluations $x = \eval(f) , y = \eval(g) \in \Fbin^{2^n}$.  Then $y$ is a permutation of $x$, i.e., $y = \pi(x)$, for the permutation $\pi$ given by 
%	\begin{equation*}
%		\bin{\pi(j)} = T_{(A,b)}^{-1}(\bin{j}) ,
%	\end{equation*}
%	where $\bin{j}$ denotes the binary expansion of integer $j$ (as introduced before).
%\end{lemma}
%\begin{proof}
%	By Definition~\ref{def:vector-permutation}, $y_j = x_i$ with $j = \pi(i)$ for all $i,j \in \Zbin$.  By the definition of the evaluation function, $x_i = f(\bin{i})$ and $y_j = g(\bin{j})$.  Together with $g(\bin{j}) = f(T_{(A,b)}(\bin{j})$, we obtain
%	\begin{equation}
%		f(T_{(A,b)}(\bin{j})) = f(\bin{i}) .
%	\end{equation}  
%	With $i = \pi^{-1}(j)$, we obtain $T_{(A,b)}(\bin{j}) = \widehat{\pi^{-1}(j)}$. 
%\end{proof}
%For convenient (though imprecise) notation we may simply write $\pi^{-1}_{A,b}(i) = A'i + b'$, presuming the equivalence between an integer $i$ and its binary expansions $\bin{i}$, where $T_{(A,b)}^{-1}(i) = A' i + b'$ with $A'= A^{-1}$ and $b'= A^{-1}b$;  the meaning becomes clear from the context.
For convenient (though imprecise) notation we may simply write $\pi_{A,b}(i) = Ai + b$, presuming the equivalence between an integer $i$ and its binary expansions $\bin{i}$; in practice, we will apply affine transformations or the equivalent permutation on both vectors and boolean functions, the meaning becoming clear from the context. 
An example of this construction can be found in Appendix~\ref{app:eval}. 

Operating over binary vectors, Lemma~\ref{lem:GA-perm} associates a codeword permutation to each $\GA$ transform, and we will refer to these as \emph{$\GA$ permutations}; note, however, that not all permutations can be represented by $\GA$ transforms.  
%This one-to-one association allows to define the group of GA permutations.  
%\begin{definition}[$\GA$ permutations]
%	The $\GA$ permutation group is the group of permutations over $\Zbin$ defined by affine transformations $T_{(A,b)} \in \GA(n)$  as  
%%	\begin{equation}
%%		\pi_{(A,b)}(v) \triangleq f^{-1}( T_{(A,b)}(f(v)) = f^{-1}( A f(v) + b) ,
%%	\end{equation} 
%	\begin{equation}
%		\pi_{(A,b)}(i)  \triangleq T^{-1}_{(A,b)}(\bin{i}) = A' \bin{i} + b' .
%	\end{equation} 
%	%The elements of these group are referred to as \emph{GA permutations}.
%\end{definition}
%The relation between permutations and general affine transformations via the integer canonical map is illustrated in Fig.~\ref{fig:perm-GA}. 
The $\GA$ permutations group is obviously isomorphic to the group of affine transformations $\GA(n)$.  For convenience we refer to $\GA(n)$ and the group of permutations simply as the group $\GA$.  

%\begin{figure}
%	\begin{center}
%		\begin{tikzcd}
%			\Zbin           \arrow[d,"\pi_{(A,b)}"] \arrow[r,"f"]  &  
%			\Fbin^n         \arrow[d,"T_{(A,b)}"]                  \\
%			\Zbin           \arrow[r,"f"]                  &  
%			\Fbin^n
%		\end{tikzcd}
%		%
%		\hspace{2cm}
%		%
%		\begin{tikzcd}
%			v               \arrow[d,"\pi_{(A,b)}"] \arrow[r,"f"]  &  
%			\bin{v}         \arrow[d,"T_{(A,b)}"]                  \\
%			\pi_{(A,b)}(v)  \arrow[r,"f"]                  &  
%			A \bin{v} + b
%		\end{tikzcd}
%	\end{center}
%	\caption{Relation between permutations and general affine transformations via the integer canonical map $f$.}
%	\label{fig:perm-GA}
%\end{figure}

%\commentI{The relation $\pi = T^{-1}$ introduces notational overhead and thus pain. My best idea was to introduce $A'$ and $b'$. If you find a better way to write this, let me know.}

The relation between $\GA$ transforms and permutations as given in Lemma~\ref{lem:GA-perm} gives rise to the question how this translates to concatenation. This is answered in the following lemma.

\begin{lemma}[Concatenation of $\GA$ permutations]
	\label{lem:GA-concat}
	Given two $\GA$ permutations $\pi_{(A_1,b_1)}$ and $\pi_{(A_2,b_2)}$, their concatenation 
	\begin{equation*}
		\pi_{(A,b)} = \pi_{(A_2,b_2)} \circ \pi_{(A_1,b_1)}, 
	\end{equation*}
	namely when $\pi_{A_1,b_1}$ is applied first and $\pi_{A_2,b_2}$ second, is the $\GA$ permutation $T_{(A,b)}$ defined by
%	The composition of two $\GA$ permutations $T_{(A_1,b_1)}$ and $T_{(A_2,b_2)}$ i.e., $T_{A_1,b_1}$ applied first and $T_{A_2,b_2}$ second with $T_{A,b}=T_{(A_2,b_2)} \circ T_{(A_1,b_1)}$, is the inverse of the $\GA$ permutation 
%	\begin{equation*}
%		\pi_{(A,b)} = \pi_{(A_1,b_1)} \circ \pi_{(A_2,b_2)} 
%	\end{equation*}
%	defined by
	\begin{equation*}
		A = A_1 A_2 \; , \qquad b = A_1 b_2 + b_1  .
	\end{equation*}
%	The composition of two $\GA$ permutations $\pi_{(A_1,b_1)}$ and $\pi_{(A_2,b_2)}$ i.e., $\pi_{A_1,b_1}$ applied first and $\pi_{A_2,b_2}$ second, is the $\GA$ permutation 
%	\begin{equation*}
%		\pi_{(A,b)} = \pi_{(A_2,b_2)} \circ \pi_{(A_1,b_1)} 
%	\end{equation*}
%	defined by
%	\begin{equation*}
%		A = A_1 A_2 \; , \qquad b = A_1 b_2 + b_1  .
%	\end{equation*}
\end{lemma}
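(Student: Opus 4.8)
The plan is to reduce the statement about concatenation of permutations to the composition rule for affine transformations acting on boolean functions, which was already derived just above the lemma, and to transport that rule across the evaluation map by means of Lemma~\ref{lem:GA-perm}.

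First I would fix an arbitrary boolean function $f \in \Fbin^{[n]}$ and set $g = T_{(A_1,b_1)}(f)$ and $h = T_{(A_2,b_2)}(g)$. The computation performed before the lemma shows that $h = T_{(A_2,b_2)} \circ T_{(A_1,b_1)}(f) = T_{(A,b)}(f)$ with $A = A_1 A_2$ and $b = A_1 b_2 + b_1$; this is exactly the pair claimed in the statement, so it only remains to verify that the concatenation of the two permutations realizes the map $x^{(f)} \mapsto x^{(h)}$ at the level of evaluation vectors.

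Next I would apply Lemma~\ref{lem:GA-perm} along this chain. It yields $x^{(g)} = \pi_{(A_1,b_1)}(x^{(f)})$ and then $x^{(h)} = \pi_{(A_2,b_2)}(x^{(g)})$, so that $x^{(h)} = (\pi_{(A_2,b_2)} \circ \pi_{(A_1,b_1)})(x^{(f)})$, where the outer permutation acts second, matching the ordering in the statement. A single application of the same lemma to $h = T_{(A,b)}(f)$ instead gives $x^{(h)} = \pi_{(A,b)}(x^{(f)})$. Equating the two expressions shows $(\pi_{(A_2,b_2)} \circ \pi_{(A_1,b_1)})(x^{(f)}) = \pi_{(A,b)}(x^{(f)})$. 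To promote this equality of images to an equality of permutations of $\Zbin$, I would observe that $\eval$ is a bijection between boolean functions and vectors of $\Fbin^N$, so $x^{(f)}$ ranges over all of $\Fbin^N$ as $f$ varies; two permutations acting identically on every vector (for instance on all indicator vectors) must coincide as maps on the index set. Hence $\pi_{(A_2,b_2)} \circ \pi_{(A_1,b_1)} = \pi_{(A,b)}$ with $A = A_1 A_2$ and $b = A_1 b_2 + b_1$.

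A purely index-level verification is available as a cross-check and could be included instead: writing $\pi_{(A,b)}(i) = A\bin{i}+b$ and using the passive-notation composition $z_j = x_{\pi_1(\pi_2(j))}$ from Definition~\ref{def:vector-permutation}, the index map of the concatenation is $j \mapsto \pi_{(A_1,b_1)}(\pi_{(A_2,b_2)}(j)) = A_1(A_2\bin{j}+b_2)+b_1 = A_1 A_2\,\bin{j} + (A_1 b_2 + b_1)$, which is precisely $\pi_{(A,b)}$. The one point that needs genuine care is the bookkeeping of this ordering: the passive right-to-left convention reverses the composition of the underlying index maps, and it is exactly this reversal that makes the natural right-composition $A = A_1 A_2$ of the affine matrices and the translation $b = A_1 b_2 + b_1$ come out consistently. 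An inattentive treatment would swap $A_1$ and $A_2$ or misplace the matrix in the translation term, so I would emphasize aligning the ``applied first/second'' convention at every step.
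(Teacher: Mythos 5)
Your proof is correct and follows essentially the same route as the paper's: both transport the pre-lemma composition rule $T_{(A_2,b_2)}\circ T_{(A_1,b_1)} = T_{(A_1A_2,\,A_1b_2+b_1)}$ on boolean functions through Lemma~\ref{lem:GA-perm}, the paper's chain $\pi_{2} \circ \pi_{1}(x^{(f)}_i) = x^{(f)}_{\pi_{1}(\pi_2(i))} = \dots = x_i^{(T_2 \circ T_1(f))}$ being exactly your two-step application of that lemma under the passive convention. Your extra steps---promoting agreement on all evaluation vectors to equality of permutations via bijectivity of $\eval$ and indicator vectors, and the index-level computation $j \mapsto A_1(A_2\bin{j}+b_2)+b_1$---merely make explicit what the paper leaves implicit, and both are carried out correctly.
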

\begin{proof}
	As $\GA$ permutations form a group, their composition is obviously also a $\GA$ transform. 	
	The scheme of the concatenation is:  
\begin{center}
\begin{tikzcd}
	f \arrow[to=Z, bend left=50,"T_{(A,b)}"] \arrow[d,"eval"] \arrow[r,"T_{(A_1,b_1)}"]  &  
	g          \arrow[d,"eval"] \arrow[r,"T_{(A_2,b_2)}"]  &  
	|[alias=Z]| h          \arrow[d,"eval"]                   \\
	x^{(f)}\arrow[to=W, bend right=50,"\pi_{(A,b)}"]  \arrow[r,"\pi_{(A_1,b_1)}"]             &  
	x^{(g)}         \arrow[r,"\pi_{(A_2,b_2)}"]             &  
	|[alias=W]| x^{(h)}
\end{tikzcd}
\end{center}
	If we call $\pi_j=\pi_{(A_j,b_j)}$ and $T_j=T_{(A_j,b_j)}$ for $j=1,2$, by Lemma~\ref{lem:GA-perm} we have that
	\ifdefined\COLS
		\begin{align}
		\pi_{2} \circ \pi_{1}(x^{(f)}_i) &= x^{(f)}_{\pi_{1} ( \pi_2(i))} =	x^{(T_1(f))}_{\pi_2(i)} = x^{(g)}_{\pi_2(i)} = \\
		&= x_i^{(T_2(g))} = x_i^{(T_2 \circ T_1(f))}.
		\end{align}
	\else
		\begin{equation}
		\pi_{2} \circ \pi_{1}(x^{(f)}_i) = x^{(f)}_{\pi_{1} ( \pi_2(i))} =	x^{(T_1(f))}_{\pi_2(i)} = x^{(g)}_{\pi_2(i)} = x_i^{(T_2(g))} = x_i^{(T_2 \circ T_1(f))}.
		\end{equation}
	\fi
	
%	and the proof ends with the definition of the concatenation 
%	
%  To determine $A$ and $b$, consider the inverse of the composition,
%	\begin{equation*}
%		\pi^{-1}_{(A,b)} = \pi^{-1}_{(A_1,b_1)} \circ \pi^{-1}_{(A_2,b_2)} .
%	\end{equation*}
%	By Lemma~\ref{lem:GA-perm} this can be written as
%	\begin{equation*}
%		T_{(A,b)} = T_{(A_1,b_1)} \circ T_{(A_2,b_2)} ,
%	\end{equation*}
%	and therefore
%	\begin{align*}
%		A v + b 
%		&= A_1 ( A_2 v + b_2 ) + b_1                     \\
%		&= A_1  A_2 v + A_1  b_2 + b_1.
%	\end{align*}	 
\end{proof}
%Note that the order is of the matrices is reversed, as compared to the order of the permutations.
%
%\commentI{We need to verify this and the order of the matrices in the remainder of this document}
Given the equivalence between affine transformations and affine permutations, in the following we may use $\pi$ to define an affine transformation.

In addition to the group $\GA$ itself, we introduce several sub-groups of $\GA$ and the corresponding permutations. 
These sub-groups will be used in the following sections to prove various results concerning the automorphisms group of polar codes. 
The following sub-groups are defined by transformations $T_{(A,b)}$, where $A$ and $B$ assume a peculiar format: 
\begin{itemize}
	\item the \emph{General Linear group} $\GL$, for which $b=0$;
	\item the \emph{Lower-Triangular Affine group} $\LTA$, for which $A$ is lower-triangular;
	\item the \emph{Block-Lower-Triangular Affine group} $\BLTA(S)$, for which $A$ is block-lower-triangular with profile $S = (s_1,s_2,...,s_l)$ with $s_1+s_2+...+s_l=n$, i.e. a block diagonal matrix having non-zero elements below the diagonal; 
	\item the \emph{Upper-Triangular Linear group} $\UTL$, for which $A$ is upper-triangular; 
	\item the \emph{Permutation Linear group} $\PL$, for which $A$ is a permutation matrix;
	\item the \emph{Translation group} $\T$, for which $A$ is the identity matrix.
\end{itemize}    
Similarly to $\GA$ we may use $\GL$, $\LTA$, $\BLTA$, $\UTL$, and $\PL$ to refer to the corresponding groups of permutations.

% --------------------------------------------------------------------
\subsection{The affine automorphisms group of polar codes}
%\subsection{Properties of GA automorphisms}
\label{sec:ga-automorph-prop}

\begin{definition}[Automorphism]
	A permutation $\pi$ is called an automorphism of code $\code{C}$ if $\pi(\code{C}) = \code{C}$, i.e., if $\pi(x) \in \code{C}$ for all $x \in \code{C}$.  The set of automorphisms of a code $\code{C}$ is denoted by $\Aut(\code{C})$ and forms a group.
\end{definition}

For monomial codes, the permutations from $\GA$ are of particular interest, since they represent linear operations on the variables.  
If a $\GA$ permutation is an automorphism, we call it a $\GA$ (or affine) automorphism.  
We further denote the \emph{group of affine automorphisms} of a code $\code{C}$ by
\begin{equation}
	\mathcal{A}  \triangleq  \GA \cap \Aut(\code{C})  .
	\label{eq:GA-automorphisms}
\end{equation} 
%We further define three sub-groups of $\mathcal{A}$, namely the subgroups of LTA automorphisms, UTL automorphisms, PL automorphisms, and BLTA automorphisms:
%%\begin{align}
%	%	\mathcal{A}  &\triangleq  \bigl\{ (A,b) \in \GA         : \pi(\code{C}) = \code{C} \bigr\}  \\
%	%	\mathcal{L}  &\triangleq  \bigl\{ (L,b) \in \mathcal{A} : L \;\text{lower-triangular} \bigr\}  \\
%	%	\mathcal{U}  &\triangleq  \bigl\{ (U,0) \in \mathcal{A} : U \;\text{lupper-triangular} \bigr\}  \\
%	%	\mathcal{P}  &\triangleq  \bigl\{ (P,0) \in \mathcal{A} : P \;\text{permutation matrix} \bigr\}  ,
%	%\end{align}
%\begin{align}
%	%	\mathcal{A}  &\triangleq  \GA   \cap \Aut(\code{C})   \\
%	\mathcal{L}  &\triangleq  \LTA   \cap \Aut(\code{C})   \\
%	\mathcal{U}  &\triangleq  \UTL   \cap \Aut(\code{C})   \\
%	\mathcal{P}  &\triangleq  \PL    \cap \Aut(\code{C})   \\
%	\mathcal{B}  &\triangleq  \BLTA  \cap \Aut(\code{C})   .
%\end{align}
%We may refer to elements of these groups by $(L,b)$ with lower-triangular matrix $L$, $(U,0)$ with upper-triangular matrix $U$, and $(P,0)$ with permutation matrix $P$, respectively.  For convenience we may also simply write $U$ for $(U,0)$ and $P$ for $(P,0)$.

We consider binary $(N,K)$ polar codes $\code{C}$ of length $N=2^n$ and dimension $K$, having information set $\mathcal{I}$ and monomial set $\mathcal{G}$ and following UPO framework. 
In this section, we will prove that the affine automorphisms group of such a polar code is a $\BLTA$ group. 
This property has been proved in \cite{geiselhart2021automorphismPC,li2021complete}; in this paper, we provide an alternative proof based on the introduction of elementary permutations and on the algebraic structure of the $\BLTA$ group. 
 
%\begin{definition}[Automorphism]
%	A permutation $\pi$ is called an automorphism of code $\code{C}$ if $\pi(\code{C}) = \code{C}$, i.e., if $\pi(x) \in \code{C}$ for all $x \in \code{C}$.  The set of automorphisms of a code $\code{C}$ is denoted by $\Aut(\code{C})$ and forms a group.
%\end{definition}
To begin with, we notice that $\GA$ permutations map monomials of $\mathcal{G}$ to linear combinations of monomials in $\mathcal{M}^{[n]}$. 
A property of $\GA$ automorphisms is their capacity to map the generating monomial set into itself. 
\begin{lemma}
\label{lem:perm_sum}
$\pi \in \mathcal{A}$ if and only if for every $m_t \in \mathcal{G}$, then for every $m_{t_1},\ldots,m_{t_s}$ such that
\begin{equation}
	\pi(m_t) = m_{t_1} + \ldots + m_{t_s}, 
\end{equation}
we have that $m_{t_i} \in \mathcal{G}$.
\begin{proof}
This condition is obviously necessary: if all $m_{t_i} \in \mathcal{G}$, then also $\pi(m_{t}) \in \mathcal{G}$ for every $m_{t} \in \mathcal{G}$. 
Conversely, if $\pi \in \mathcal{A}$, then also $\pi(m_{t}) \in \mathcal{G}$. 
%However, in order for the linear combination of monomials $m_{t_1},\ldots,m_{t_s}$ to be included into $\mathcal{G}$, all the addendsbecause they form a base for the code space. 
However, in order for the polynomial $m_{t_1} + \ldots + m_{t_s}$ to be included into $\langle \mathcal{G} \rangle$, all its addends must belong to $\mathcal{G}$ because monomials form a base for the code space. 
\end{proof}
\end{lemma}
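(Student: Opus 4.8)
The plan is to translate the automorphism condition into a statement about the $2^n$-dimensional space of boolean functions and then exploit that the monomials form a basis of that space. Throughout, $\pi = \pi_{(A,b)}$ denotes the $\GA$ permutation under consideration (the condition $\pi(m_t) = m_{t_1} + \ldots + m_{t_s}$ only being meaningful for affine $\pi$, where $\pi(m_t)$ abbreviates $T_{(A,b)}(m_t)$). By Lemma~\ref{lem:GA-perm} we have $\pi(x^{(f)}) = x^{(T_{(A,b)}(f))}$, and since $\eval$ is a linear bijection carrying $\langle\mathcal{G}\rangle$ onto $\code{C}$, injectivity of $\eval$ gives $\pi(\code{C}) = \eval\bigl(T_{(A,b)}(\langle\mathcal{G}\rangle)\bigr)$. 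Hence $\pi \in \Aut(\code{C})$ if and only if $T_{(A,b)}(\langle\mathcal{G}\rangle) = \langle\mathcal{G}\rangle$; that is, invariance of the code under $\pi$ is equivalent to invariance of the subspace $\langle\mathcal{G}\rangle$ under the affine substitution $f \mapsto f(A\cdot + b)$. This reformulation is what I would establish first.

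For the forward direction, assume $\pi \in \mathcal{A}$, so $T_{(A,b)}(\langle\mathcal{G}\rangle) = \langle\mathcal{G}\rangle$. In particular $\pi(m_t) \in \langle\mathcal{G}\rangle$ for every $m_t \in \mathcal{G}$. Writing $\pi(m_t) = m_{t_1} + \ldots + m_{t_s}$ as a sum of distinct monomials, I would invoke that the full monomial set $\mathcal{M}^{[n]}$ is a basis of the boolean function space, so that $\langle\mathcal{G}\rangle$ is precisely the coordinate subspace spanned by $\{m_t : t \in \mathcal{I}\}$. A sum of distinct basis vectors lies in a coordinate subspace exactly when every summand does, whence each $m_{t_i} \in \mathcal{G}$.

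For the converse, assume the per-generator condition, which asserts $T_{(A,b)}(m_t) \in \langle\mathcal{G}\rangle$ for each generator $m_t \in \mathcal{G}$. Since $f \mapsto f(A\cdot + b)$ is $\Fbin$-linear in $f$, linearity immediately upgrades this to the inclusion $T_{(A,b)}(\langle\mathcal{G}\rangle) \subseteq \langle\mathcal{G}\rangle$. To obtain equality I would use that $A$ is invertible, so that $T_{(A,b)}$ is a bijection of the whole function space; its restriction to the finite-dimensional subspace $\langle\mathcal{G}\rangle$ is therefore injective, and an injective endomorphism of a finite-dimensional space is surjective, giving $T_{(A,b)}(\langle\mathcal{G}\rangle) = \langle\mathcal{G}\rangle$. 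Combined with the reformulation of the first paragraph this yields $\pi \in \Aut(\code{C})$, and since $\pi$ is by hypothesis a $\GA$ permutation, $\pi \in \mathcal{A}$.

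The only genuinely delicate step is the basis argument: everything rests on the monomials forming a basis of the function space, so that a polynomial built from distinct monomials belongs to $\langle\mathcal{G}\rangle$ if and only if each of its monomials is a generator. The upgrade from the inclusion $T_{(A,b)}(\langle\mathcal{G}\rangle) \subseteq \langle\mathcal{G}\rangle$ to equality, via invertibility of $A$ and a dimension count, is the second point requiring care, though it is routine once invertibility is noted.
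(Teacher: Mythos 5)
Your proof is correct and takes essentially the same route as the paper's: one direction rests on the monomials forming a basis, so that a sum of distinct monomials lies in $\langle\mathcal{G}\rangle$ only if every addend belongs to $\mathcal{G}$, and the other on $\Fbin$-linearity of $T_{(A,b)}$. You are in fact more careful than the paper, which leaves implicit both the reduction of $\pi(\code{C})=\code{C}$ to $T_{(A,b)}(\langle\mathcal{G}\rangle)=\langle\mathcal{G}\rangle$ via the $\eval$ bijection and the upgrade from the inclusion $T_{(A,b)}(\langle\mathcal{G}\rangle)\subseteq\langle\mathcal{G}\rangle$ to equality via invertibility of $A$ and a dimension count.
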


Equipped with these definitions, we now focus on the characterization of $\mathcal{A}$. 
To begin with, we prove that $\LTA \subseteq \mathcal{A}$ for any polar code, namely that all the $\LTA$ transformations are automorphisms; it is worth noticing that this property holds only for polar codes fulfilling the UPO framework. 
In order to prove it, we denote the row-addition elementary matrix by $\elmat{i}{j}$, namely the $n \times n$ square matrix with ones on the diagonal and an additional one at row $i$ and column $j$. 
This matrix is associated to the elementary linear transformation 
\begin{equation}
	\elmat{i}{j} : \bar{V}_i \rightarrow \bar{V}_i + \bar{V}_j + 1,
\end{equation}
defined by permutation $\elperm{i}{j} = \pi_{(\elmat{i}{j},0)}$. 
This elementary matrix is of particular interest since every $\LTA$ transformation can be decomposed as the product of elementary linear transformations having their one below the diagonal, plus a translation. 

\begin{lemma}
\label{lem:elem_dec}
Every $\pi_{(A,b)} \in \LTA$ can be decomposed as
\begin{equation}
	\pi_{(A,b)} = \elperm{i_q}{j_q} \circ \ldots \circ \elperm{i_1}{j_1} \circ \tau,
\end{equation}
where $\elperm{i}{j}$ is an elementary linear transformation, $\tau \in \T$ and $q$ is the number of nonzero entries of $A$ below the diagonal. 
\begin{proof}
To begin with, we prove that every lower triangular matrix $A$ can be written as a product of $q$ row-addition elementary matrices. 
%, where $q$ is the number of nonzero entries of $A$ below the diagonal. 
To do it, we sort the nonzero entries of $A$ below the diagonal from top to bottom and then from left to right: in practice, if $a_{i_l,j_l}$ and $a_{i_{l+1},j_{l+1}}$ are nonzero entries of $A$, then $i_l \leq i_{l+1}$ and, if $i_l = i_{l+1}$, then $j_l < j_{l+1}$; in this way, we have that
\begin{equation}
	A = \prod_{l=1}^{q}\elmat{i_l}{j_l}.  
\end{equation}
In fact, given a matrix $B$, we have that $B \cdot \elmat{i}{j}$ is the matrix produced from $B$ by adding column $i$ to column $j$; we have that $i_l > j_l$ since $A$ is lower triangular, hence realizing the product from left to right results in adding at every step a nonzero entry in position $(i_l,j_l)$, resulting in matrix $A$. 
Next, if we define $\tau = \pi_{I,b}$, the lemma is proved.
\end{proof}
\end{lemma}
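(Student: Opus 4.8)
The plan is to separate the affine permutation into a translation and a unit lower-triangular linear part, and then factor that linear part into elementary row-additions. First I would record the structural observation that makes everything work: since $A$ is lower-triangular and invertible over $\Fbin$, its determinant (the product of its diagonal entries) must equal $1$, so every diagonal entry is $1$. Hence $A$ is \emph{unit} lower-triangular, and all its freedom lives in the $q$ entries strictly below the diagonal.

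Next I would peel off the translation using the concatenation rule of Lemma~\ref{lem:GA-concat}. A direct check with that rule shows $\pi_{(A,b)} = \pi_{(A,0)} \circ \pi_{(I,b)}$, since composing the second-applied map $\pi_{(A,0)}$ with the first-applied $\pi_{(I,b)}$ yields matrix $I\cdot A = A$ and translation $I\cdot 0 + b = b$. Setting $\tau = \pi_{(I,b)} \in \T$ therefore reduces the statement to decomposing the purely linear permutation $\pi_{(A,0)}$ into elementary transformations. Applying Lemma~\ref{lem:GA-concat} again, now with all translations zero, the composite $\elperm{i_q}{j_q} \circ \ldots \circ \elperm{i_1}{j_1}$ corresponds exactly to the matrix product $\prod_{l=1}^{q}\elmat{i_l}{j_l}$ taken in the order $l=1,\ldots,q$. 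So the whole lemma collapses to the single matrix identity $A = \prod_{l=1}^{q}\elmat{i_l}{j_l}$ for an appropriate enumeration of the below-diagonal entries.

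To prove that identity I would enumerate the nonzero below-diagonal positions $(i_l,j_l)$ of $A$ in a fixed order — top to bottom in the row index $i_l$, and left to right in $j_l$ within each row — and build the product incrementally as $P_l = P_{l-1}\cdot \elmat{i_l}{j_l}$ starting from $P_0 = I$. The only computation I would carry out explicitly from the definition of $\elmat{i}{j}$ is that right-multiplying a matrix $B$ by $\elmat{i}{j}$ adds column $i$ of $B$ to its column $j$ and leaves all other columns fixed.

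The hard part, and the reason the ordering is chosen this way, is to argue that each step deposits precisely the single intended entry at $(i_l,j_l)$ without disturbing entries already placed. The key invariant is that, at the moment column $i_l$ is used as the source, it is still the standard basis vector $e_{i_l}$: a column only changes when it appears as a \emph{target} $j_m$, and $j_m = i_l$ would force $i_m > j_m = i_l$ (below-diagonal), which the top-to-bottom ordering rules out for every $m<l$. Thus adding $e_{i_l}$ into column $j_l$ introduces exactly a $1$ at row $i_l$; and because each below-diagonal position is processed only once, no earlier step could already have set $(i_l,j_l)$. I would conclude that $P_q$ agrees with $A$ in every below-diagonal position while both are unit lower-triangular, forcing $P_q = A$. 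Combined with the translation factor $\tau$, this yields the stated decomposition $\pi_{(A,b)} = \elperm{i_q}{j_q} \circ \ldots \circ \elperm{i_1}{j_1} \circ \tau$.
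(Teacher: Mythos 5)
Your proposal is correct and follows essentially the same route as the paper's proof: the same top-to-bottom enumeration of the below-diagonal entries, the same factorization $A = \prod_{l=1}^{q}\elmat{i_l}{j_l}$ justified by the column-addition action of right-multiplication, and the same peeling off of the translation $\tau = \pi_{(I,b)}$. Your write-up merely makes explicit two points the paper leaves implicit --- that invertibility over $\Fbin$ forces a unit diagonal, and the invariant that each source column is still a standard basis vector when used --- which is a welcome tightening of the paper's ``adding at every step a nonzero entry'' claim, but not a different argument.
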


%This elementary matrix is useful since every linear transformation can be decomposed as the product of elementary linear transformations. 
%In particular, every $\LTA$ transformation can be decomposed as the product of elementary linear transformations having their one below the diagonal, plus a translation \textbf{(true?)}. 
%So, the family of elementary matrix forms a basis for $\GL$ space, and we will see that 
When applied to monomials, however, these elementary transformations may lead to polynomials, making it difficult to connect monomial sets. 
In fact, given $m_t \in \mathcal{M}^{[n]}$, then 
%with $t \in \Zbin$, where $m_t=\prod_{i \in S}\bar{V}_i$, then
\begin{equation}
	\label{eq:el_def}
	\elperm{i}{j}(m_t) = 
	\left\{ 
	\begin{array}{lc}
		m_t & \text{if } i \notin Q \text{ or } i,j \in Q \\
		m_t + m_{t'} + m_{t''} & \text{if } i \in Q \text{ and } j \in Q  
	\end{array} 
	\right.%\,,
\end{equation}
where $t'$ is obtained from $t$ by swapping entries $i$ and $j$ of its binary expansion and $t''$ by adding a one in position $i$ of the binary expansion of $t$. 
Lemma~\ref{lem:perm_sum} shows that $\elperm{i}{j} \in \mathcal{A}$ if and only if for every $m_t \in \mathcal{G}$, then $m_{t'}$ and $m_{t''}$ obtained from \eqref{eq:el_def} belong to $\mathcal{G}$. 
In order to prove the first main result of this section, we need to prove that we can focus our analysis only on elementary linear transformations, neglecting the effect of translations. 
The following lemma gives us this possibility. 
\begin{lemma}
If $\code{C}$ follows UPO framework, then $\T \subset \mathcal{A}$.
\begin{proof}
Translation $\tau_i \in \T$ maps $\bar{V}_i$ to $\bar{V}_i+1$. 
Then, it maps each monomial in $\mathcal{G}$ including $\bar{V}_i$ into the sum of the monomial itself and the same monomial without $\bar{V}_i$; since the last monomial is included in $\mathcal{G}$ due to the UPO hypothesis, then $\T \subset \mathcal{A}$. 
\end{proof}
\end{lemma}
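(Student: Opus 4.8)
The plan is to reduce the statement to the elementary single-coordinate translations and then apply Lemma~\ref{lem:perm_sum}. Since $\T$ is generated by the translations $\tau_i = \pi_{(I,b)}$ with $b$ the $i$-th unit vector, each of which acts by $\bar{V}_i \mapsto \bar{V}_i + 1$ while fixing every $\bar{V}_j$ with $j \neq i$, and since $\mathcal{A}$ is a group, it suffices to prove $\tau_i \in \mathcal{A}$ for every $i$; an arbitrary translation is then recovered as the composition of those $\tau_i$ with $b_i = 1$.

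First I would describe the image of a generating monomial under $\tau_i$. If $\bar{V}_i$ does not divide $m_t \in \mathcal{G}$, then $\tau_i(m_t) = m_t$ and no new monomial appears. If $\bar{V}_i$ divides $m_t$, I write $m_t = \bar{V}_i \, m_{t'}$ with $m_{t'} = m_t/\bar{V}_i$; then $\tau_i(m_t) = (\bar{V}_i + 1)\, m_{t'} = m_t + m_{t'}$, so the unique extra addend is the sub-monomial $m_{t'}$ obtained by deleting the variable $\bar{V}_i$.

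It then remains to verify that this addend lies in $\mathcal{G}$. Because $m_{t'}$ divides $m_t$ and has degree exactly one less, the second clause of the monomial partial order---taking $m_{t'}$ itself as the degree-matching divisor of $m_t$ and using reflexivity of $\preceq$---yields $m_{t'} \preceq m_t$. By the equivalence between UPO-compliant polar codes and decreasing monomial codes established earlier, the UPO hypothesis makes $\code{C}$ a decreasing monomial code, so $m_t \in \mathcal{G}$ forces $m_{t'} \in \mathcal{G}$. Every addend of $\tau_i(m_t)$ therefore belongs to $\mathcal{G}$, and Lemma~\ref{lem:perm_sum} gives $\tau_i \in \mathcal{A}$, whence $\T \subseteq \mathcal{A}$. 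No step is a genuine obstacle; the only point needing care is the bookkeeping in the partial order, i.e.\ recognizing that deleting one variable produces a degree-$(s-1)$ divisor that is dominated by the original monomial, which is precisely the configuration to which the decreasing property applies.
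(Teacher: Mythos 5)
Your proof is correct and takes essentially the same route as the paper's: both reduce to the elementary translations $\tau_i$, observe that $\tau_i(m_t)=m_t+m_{t'}$ with $m_{t'}$ the monomial obtained by deleting $\bar{V}_i$, and invoke the decreasing/UPO property to place $m_{t'}$ in $\mathcal{G}$. The only difference is that you make explicit what the paper leaves implicit---the generation of $\T$ by the $\tau_i$, the partial-order verification $m_{t'}\preceq m_t$ via the degree-lowering clause, and the final appeal to Lemma~\ref{lem:perm_sum}---which is sound bookkeeping, not a new idea.
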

Since translations are always automorphisms, we can focus on automorphisms in $\GL$, and more in details we can use elementary linear transformations to prove the following lemma, that is the first main result of this section.
\begin{theorem}
\label{theo_LTA_UPO}
$\LTA \subseteq \mathcal{A}$ if and only if $\code{C}$ follows UPO framework.
\begin{proof}
To begin with, we show that this is a necessary condition. 
The matrix $A$ of a given $\LTA$ transformation can be decomposed as a product of elementary matrices having their extra one below the diagonal. 
for any of these elementary transformations $\elperm{i}{j}$, any $m_t \in \mathcal{G}$ is transformed into a polynomial $m_t + m_{t'} + m_{t''}$ (or remains the same). 
By definition, $m_{t''} \preceq m_t$, and then, by UPO hypothesis, also $m_{t''} \in \mathcal{G}$; moreover, since by construction $i<j$, also $m_{t'} \preceq m_t$, and again $m_{t'} \in \mathcal{G}$ by UPO hypothesis. 
Since all the monomials forming it belong to $\mathcal{G}$, then also $\elperm{i}{j}(m_t)\in \mathcal{G}$ and $\elperm{i}{j} \in \Aut(\code{C})$ when $i<j$; since the product of automorphisms is still an automorphism, we have that $\LTA \subseteq \mathcal{A})$. 
		
To show that it is a sufficient condition, let us proceed by absurd. 
We suppose $\code{C}$ not following the UPO framework, namely there exists some $a\in \mathcal{I}$ for which at least one integer $b$, $a \preceq b$, such that $b \notin \mathcal{I}$. 
It is then possible to create a chain of integers $a = t_0 \preceq t_{1} \preceq \ldots \preceq t_{r-1} \preceq t_r = b$ such that the passage from an integer to the next one is performed by one of the UPO basic rules. 
%either the addition or the right swap rule. 
By absurd hypothesis, there exists an index $s$ such that $t_{s} \in \mathcal{I}$ and $t_{s+1} \notin \mathcal{I}$. 
Then, it is possible to find two integers $i,j$, with $i>j$, such that $\elperm{i}{j}(m_{t_{s}}) = m_{t_{s+1}} + m_{t'}$, either by taking the indices of the swapped entries if $HW(t_{s}) = HW(t_{s+1})$ or setting $i$ as the index of the added one otherwise. 
By construction, $\elperm{i}{j} \notin \mathcal{A}$ and then $\LTA \not\subseteq \mathcal{A}$. 
\end{proof}
\end{theorem}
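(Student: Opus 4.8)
The plan is to prove the biconditional in two directions, leveraging the elementary-transformation machinery established in Lemma~\ref{lem:elem_dec} and Lemma~\ref{lem:perm_sum}, together with the order-equivalence of Lemma~\ref{lem:RS_SD}. First I would handle the direction ``UPO $\Rightarrow \LTA \subseteq \mathcal{A}$''. Since every $\LTA$ transformation decomposes as a translation composed with row-addition elementary transformations $\elperm{i}{j}$ with $i>j$ (Lemma~\ref{lem:elem_dec}), and since translations are always automorphisms under UPO (the preceding lemma, $\T \subset \mathcal{A}$), it suffices to show each such $\elperm{i}{j}$ is an automorphism. By \eqref{eq:el_def}, applying $\elperm{i}{j}$ to a monomial $m_t \in \mathcal{G}$ either fixes it or sends it to $m_t + m_{t'} + m_{t''}$, where $t'$ swaps bits $i,j$ and $t''$ adds a one in position $i$. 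The key observation is that both $m_{t'}$ and $m_{t''}$ are $\preceq m_t$ in the monomial order: $t''$ raises the Hamming weight (lowers degree), giving $m_{t''} \preceq m_t$ directly, and the swap with $i>j$ moves a one to a lower index, which by the two-consecutive-entries UPO rule (and Lemma~\ref{lem:RS_SD}) yields $m_{t'} \preceq m_t$. The decreasing/UPO hypothesis then forces $m_{t'}, m_{t''} \in \mathcal{G}$, so by Lemma~\ref{lem:perm_sum} each generator maps into $\langle\mathcal{G}\rangle$ and $\elperm{i}{j} \in \mathcal{A}$; products of automorphisms are automorphisms, giving $\LTA \subseteq \mathcal{A}$.

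For the converse, I would argue by contrapositive: assuming the code violates UPO, exhibit a single $\LTA$ generator that is \emph{not} an automorphism. Non-compliance means there exist $a \in \mathcal{I}$ and $b \notin \mathcal{I}$ with $a \preceq b$. Using the UPO generation rules, I would build a covering chain $a = t_0 \preceq t_1 \preceq \cdots \preceq t_r = b$ in which each step applies one basic rule (single-bit flip or swap of two consecutive entries). Since $t_0 \in \mathcal{I}$ and $t_r \notin \mathcal{I}$, there is a first index $s$ with $t_s \in \mathcal{I}$ but $t_{s+1} \notin \mathcal{I}$. The step from $t_s$ to $t_{s+1}$ corresponds, via \eqref{eq:el_def} read backwards, to an elementary transformation $\elperm{i}{j}$ with $i>j$ such that $\elperm{i}{j}(m_{t_s})$ contains $m_{t_{s+1}}$ as a summand: if the Hamming weights agree, take $i,j$ as the two swapped positions; otherwise take $i$ as the position of the added one. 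Because $m_{t_{s+1}} \notin \mathcal{G}$ yet appears in $\elperm{i}{j}(m_{t_s})$ with $m_{t_s} \in \mathcal{G}$, Lemma~\ref{lem:perm_sum} is violated, so $\elperm{i}{j} \notin \mathcal{A}$ and hence $\LTA \not\subseteq \mathcal{A}$.

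I expect the main obstacle to be the careful bookkeeping in the converse direction: one must ensure the chain can genuinely be realized by single applications of the two UPO basic rules, and that the chosen pair $(i,j)$ in the offending step indeed has $i>j$ so that $\elperm{i}{j}$ is a \emph{lower}-triangular elementary transformation (otherwise it would not lie in $\LTA$). The distinction between the weight-preserving case (a consecutive swap, governed by the second UPO rule) and the weight-changing case (a single bit addition, the first UPO rule) is where the argument is most delicate, since the two cases produce $\elperm{i}{j}$ with different roles for the indices. A secondary subtlety is the directional flip between the integer order and the monomial order supplied by Lemma~\ref{lem:RS_SD} ($b \preceq a \Leftrightarrow m_a \preceq m_b$), which must be applied consistently so that the membership deductions go through in the correct direction. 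Once these order translations are pinned down, both implications reduce to the generator-preservation criterion of Lemma~\ref{lem:perm_sum}, and the proof closes cleanly.
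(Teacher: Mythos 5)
Your proposal follows essentially the same route as the paper's own proof: the forward direction via the decomposition of Lemma~\ref{lem:elem_dec} into translations (absorbed by the preceding lemma $\T \subset \mathcal{A}$) and elementary transformations $\elperm{i}{j}$, checked against the generator-preservation criterion of Lemma~\ref{lem:perm_sum}; and the converse by the identical contradiction argument with a chain of UPO basic steps, a first offending index $s$, and the same case split between the weight-preserving swap and the added-one flip. You even state the lower-triangular orientation correctly as $i>j$ (the paper's text contains a slip reading ``$i<j$'') and make the paper's implicit use of $\T \subset \mathcal{A}$ explicit, so no substantive difference remains.
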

Theorem~\ref{theo_LTA_UPO} proves that, if the polar code follows the UPO framework, then $\LTA \subseteq \mathcal{A}$. 
Now we expand this result by proving that the affine automorphisms group of a polar code fulfilling UPO has a $\BLTA$ structure. 
To begin with, we prove a lemma regarding elementary linear transformations.
\begin{lemma}
\label{lem:blocky}
If $\code{C}$ follows UPO framework, then  $\elperm{i}{j} \in \mathcal{A}$ implies that also $\elperm{i+1}{j},\elperm{i}{j-1} \in \mathcal{A}$. 
\begin{proof}
According to the hypothesis, for every $m_t \in \mathcal{G}$ and $\elperm{i}{j} \in \mathcal{A}$ such that $\elperm{i}{j}(m_t) = m_t + m_{t'} + m_{t''}$, then $m_{t'},m_{t''} \in \mathcal{G}$. 
Then if $\elperm{i+1}{j}(m_t) = m_t + m_{t_1} + m_{t_2}$, by construction we have that $m_{t_1} \preceq m_{t'}$ and $m_{t_2} \preceq m_{t''}$, and hence by UPO hypothesis $m_{t_1},m_{t_2} \in \mathcal{G}$ and $\elperm{i+1}{j} \in \mathcal{A}$ by Lemma~\ref{lem:perm_sum}. 
Similarly, if $\elperm{i}{j-1}(m_t) = m_t + m_{t_3} + m_{t''}$, by construction we have that $m_{t_3} \preceq m_{t'}$ and hence by UPO hypothesis $m_{t_3} \in \mathcal{G}$ and $\elperm{i}{j-1} \in \mathcal{A}$ by Lemma~\ref{lem:perm_sum}. 
\end{proof}
\end{lemma}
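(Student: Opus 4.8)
The plan is to reduce, via Lemma~\ref{lem:perm_sum}, the two claims $\elperm{i+1}{j}\in\mathcal{A}$ and $\elperm{i}{j-1}\in\mathcal{A}$ to statements about single monomials: it suffices to show that for every $m_t\in\mathcal{G}$ on which $\elperm{i+1}{j}$ (resp.\ $\elperm{i}{j-1}$) acts nontrivially, the two monomials appearing in $\elperm{i+1}{j}(m_t)=m_t+m_{t_1}+m_{t_2}$ (resp.\ $\elperm{i}{j-1}(m_t)=m_t+m_{t_3}+m_{t''}$) lie in $\mathcal{G}$. Reading off \eqref{eq:el_def}, the ``deletion'' summand in each case ($m_{t_2}$, resp.\ $m_{t''}$) is obtained from $m_t$ by dropping a single variable, hence it divides $m_t$ and satisfies $m_{t_2}\preceq m_t$; the decreasing property then places it in $\mathcal{G}$ at once. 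The whole difficulty therefore concentrates on the \emph{swap} monomial ($m_{t_1}$, resp.\ $m_{t_3}$), obtained from $m_t$ by exchanging two positions of its binary expansion.

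First I would treat the generic monomials, namely those on which $\elperm{i}{j}$ \emph{also} acts nontrivially. For such an $m_t$ the hypothesis $\elperm{i}{j}\in\mathcal{A}$ supplies, through \eqref{eq:el_def} and Lemma~\ref{lem:perm_sum}, the swap monomial $m_{t'}\in\mathcal{G}$ produced by exchanging positions $i$ and $j$. The swap monomial $m_{t_1}$ of $\elperm{i+1}{j}$ then differs from $m_{t'}$ only in that it carries $\bar{V}_i$ where $m_{t'}$ carries $\bar{V}_{i+1}$; since $i<i+1$, the monomial partial order gives $m_{t_1}\preceq m_{t'}$, and the decreasing property yields $m_{t_1}\in\mathcal{G}$. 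The argument for $\elperm{i}{j-1}$ is symmetric: its swap monomial $m_{t_3}$ carries $\bar{V}_{j-1}$ where $m_{t'}$ carries $\bar{V}_j$, and $j-1<j$ again gives $m_{t_3}\preceq m_{t'}\in\mathcal{G}$.

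The delicate point, and the step I expect to be the main obstacle, is that $\elperm{i+1}{j}$ may act nontrivially on monomials $m_t$ for which $\elperm{i}{j}$ is the identity, namely those with $\bar{V}_i\nmid m_t$; there is then no $m_{t'}$ available to dominate $m_{t_1}$, so the direct partial-order comparison of the previous paragraph breaks down. I would resolve this with an intermediate-monomial construction: replace $m_t$ by the monomial $m_s$ obtained by moving the variable from position $i+1$ down to position $i$. Since $i<i+1$ one has $m_s\preceq m_t$, so $m_s\in\mathcal{G}$ by the decreasing property, while now $\bar{V}_i\mid m_s$ and $\bar{V}_j\nmid m_s$, so that $\elperm{i}{j}$ does act nontrivially on $m_s$. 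A direct inspection of the binary expansions shows that the swap monomial produced by $\elperm{i}{j}$ on $m_s$ is \emph{exactly} $m_{t_1}$, whence $m_{t_1}\in\mathcal{G}$ by the hypothesis. The same device, applied to positions $j-1$ and $j$, settles the analogous degenerate case for $\elperm{i}{j-1}$. Assembling the generic and degenerate cases for both elementary transformations, Lemma~\ref{lem:perm_sum} then certifies $\elperm{i+1}{j},\elperm{i}{j-1}\in\mathcal{A}$.
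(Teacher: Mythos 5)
Your proposal is correct, and in the generic case it coincides with the paper's own argument: for $m_t$ on which both $\elperm{i}{j}$ and the shifted transformation act nontrivially, you dominate the new swap summand by $m_{t'}$ (and the deletion summand by $m_{t''}$, or directly by $m_t$) in the monomial partial order and conclude by the decreasing property via Lemma~\ref{lem:perm_sum}, exactly as the paper does. Where you genuinely depart from the paper is the degenerate case, and this is to your credit: the paper's proof asserts $m_{t_1}\preceq m_{t'}$ ``by construction'' for every $m_t\in\mathcal{G}$, but when $\bar{V}_{i+1}\mid m_t$ while $\bar{V}_i\nmid m_t$ (resp.\ $\bar{V}_j\mid m_t$ for the $\elperm{i}{j-1}$ claim), $\elperm{i}{j}$ fixes $m_t$, so the hypothesis supplies no $m_{t'}\in\mathcal{G}$, and the formal bit-swap monomial dominates in the wrong direction: e.g.\ with $m_t=\bar{V}_1$, $i=0$, $j=3$, the summand $\bar{V}_3$ of $\elperm{1}{3}(m_t)$ satisfies $\bar{V}_1\preceq\bar{V}_3$, so no comparison downward from a known element of $\mathcal{G}$ is available from $m_t$ alone. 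Your intermediate-monomial device — pass to $m_s\preceq m_t$ obtained by sliding the variable from position $i+1$ down to $i$ (resp.\ from $j$ down to $j-1$), conclude $m_s\in\mathcal{G}$ by decreasingness, and observe that the swap summand of $\elperm{i}{j}(m_s)$ is precisely $m_{t_1}$ (resp.\ $m_{t_3}$) — closes exactly this case; checking the variable sets confirms the identification (in the example above, $\bar{V}_3$ arises as the swap summand of $\elperm{0}{3}(\bar{V}_0)$ with $\bar{V}_0\preceq\bar{V}_1$). Two minor remarks: your observation that each deletion summand divides $m_t$ and hence lies in $\mathcal{G}$ without invoking the hypothesis at all slightly streamlines the paper's comparison $m_{t_2}\preceq m_{t''}$; and, like the paper, you leave implicit the boundary cases $i+1=j$ and $j-1=i$, where the shifted index pair falls on the diagonal and the claim is vacuous.
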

Lemma~\ref{lem:blocky} will be used to prove the second main result of this section. 
\begin{theorem}
\label{theo:UPO_BLTA}
A polar code $\code{C}$ is compliant with the UPO framework if and only if for some profile $s$ we have that $\mathcal{A} = \BLTA(S)$. 
\begin{proof}
The condition is necessary since $\LTA \subseteq \BLTA(S)$ for any profile $S$, and for Theorem~\ref{theo_LTA_UPO} $\LTA \subseteq \mathcal{A}$ implies that $\code{C}$ is compliant with the UPO framework. 
		
To prove that the condition is sufficient, we begin from the observation that Lemma~\ref{lem:blocky} implies that the affine automorphism group of a code $\code{C}$ following UPO framework has an overlapping block triangular ($\OBLT$) structure. 
In practice, this structure is defined by blocks over the diagonal that can overlap. 
In the following, we show that such a structure cannot be a group, and hence a $\BLTA$ is the only "blocky" matrix structure compliant with Lemma~\ref{lem:blocky}. 
\begin{figure}[t!]
	\centering
	\ifdefined\COLS
		\resizebox{0.35\textwidth}{!}{\begin{tikzpicture}
\usetikzlibrary{decorations.pathreplacing}
%\usetikzlibrary{shapes.misc, positioning}

\draw  (-4,4) rectangle (4,-4);
\draw  (-4,4) edge (4,-4);
\draw  (0,0) rectangle (-4,4);
\draw  (-2,2) rectangle (4,-4);

\draw [decorate,decoration={brace,amplitude=10pt}]
(-4.1,-4) -- (-4.1,4) node [black,midway,xshift=-0.6cm] 
{\large $n$};
\draw [decorate,decoration={brace,amplitude=10pt}]
(-4,4.1) -- (0,4.1) node [black,midway,yshift=0.6cm] 
{\large $s_1$};
\draw [decorate,decoration={brace,amplitude=10pt}]
(.1,2) -- (.1,0) node [black,midway,xshift=0.6cm] 
{\large $m$};
\draw [decorate,decoration={brace,amplitude=10pt}]
(4,-4.1) -- (-2,-4.1) node [black,midway,yshift=-0.6cm] 
{\large $s_2$};

\draw[dashed]  (-3.9,3) edge (3.9,3);
\draw[dashed]  (1.5,3.9) edge (1.5,-3.9);
\draw[fill=white]  (1.5,3) ellipse (.1 and .1);
\node[] at (4.2,3) {$i$};
\node[] at (1.5,4.2) {$j$};

%\node (1) [draw, rounded rectangle] {roundrectangle};
\draw[rounded corners=3pt]   (-2,3.1) rectangle (0,2.9);
\draw[rounded corners=3pt]   (1.4,2) rectangle (1.6,0);

\node[] at (0.5,5) {indices $n-s_2-1,\ldots,s_1-1$};
\draw  (.5,4.9) edge (-.5,3.1);
\draw  (.5,4.9) edge (1.4,1);

\end{tikzpicture}}
	\else
		\resizebox{0.45\textwidth}{!}{\begin{tikzpicture}
\usetikzlibrary{decorations.pathreplacing}
%\usetikzlibrary{shapes.misc, positioning}

\draw  (-4,4) rectangle (4,-4);
\draw  (-4,4) edge (4,-4);
\draw  (0,0) rectangle (-4,4);
\draw  (-2,2) rectangle (4,-4);

\draw [decorate,decoration={brace,amplitude=10pt}]
(-4.1,-4) -- (-4.1,4) node [black,midway,xshift=-0.6cm] 
{\large $n$};
\draw [decorate,decoration={brace,amplitude=10pt}]
(-4,4.1) -- (0,4.1) node [black,midway,yshift=0.6cm] 
{\large $s_1$};
\draw [decorate,decoration={brace,amplitude=10pt}]
(.1,2) -- (.1,0) node [black,midway,xshift=0.6cm] 
{\large $m$};
\draw [decorate,decoration={brace,amplitude=10pt}]
(4,-4.1) -- (-2,-4.1) node [black,midway,yshift=-0.6cm] 
{\large $s_2$};

\draw[dashed]  (-3.9,3) edge (3.9,3);
\draw[dashed]  (1.5,3.9) edge (1.5,-3.9);
\draw[fill=white]  (1.5,3) ellipse (.1 and .1);
\node[] at (4.2,3) {$i$};
\node[] at (1.5,4.2) {$j$};

%\node (1) [draw, rounded rectangle] {roundrectangle};
\draw[rounded corners=3pt]   (-2,3.1) rectangle (0,2.9);
\draw[rounded corners=3pt]   (1.4,2) rectangle (1.6,0);

\node[] at (0.5,5) {indices $n-s_2-1,\ldots,s_1-1$};
\draw  (.5,4.9) edge (-.5,3.1);
\draw  (.5,4.9) edge (1.4,1);

\end{tikzpicture}}
	\fi
	\caption{Structure of OBLT matrix.}
	\label{fig:OBLT}
\end{figure}
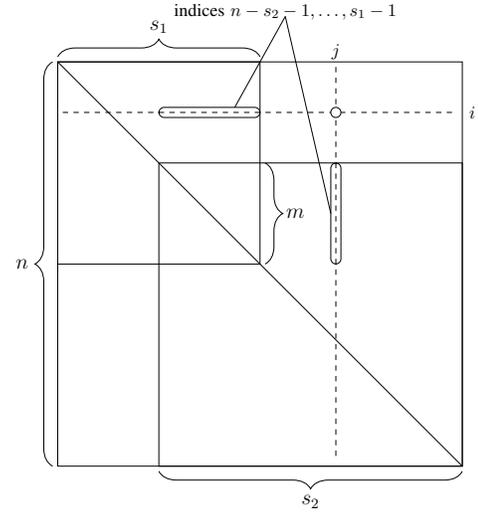 
%
%\begin{center}
%	\ifdefined\COLS
%		\resizebox{0.3\textwidth}{!}{\input{figures/overlap_blocks.tikz}}
%	\else
%		\resizebox{0.45\textwidth}{!}{\input{figures/overlap_blocks.tikz}}
%	\fi
%	%\input{figures/overlap_blocks.tikz}
%\end{center}
In the following, we restrict our analysis to the case depicted in Figure~\ref{fig:OBLT}, namely an $\OBLT(s_1,s_2)$ structure with $s_1+s_2>n$ and $s_1>s_2$; the results of this case study can be easily extended to more general $\OBLT$ structures. 
In the following, we will show that it is possible to generate a matrix having a non-zero entry in row $i$ and column $j$, for every $0\leq i < n-s_2$ and $n-s_1 \leq j <n$, as the result of the multiplication of two matrices belonging to $\OBLT(s_1,s_2)$. 
This would prove that $\OBLT(s_1,s_2)$ is not a group, and the closure of this set represents the group structure of the automorphisms.
Given $i,j$ defined above, elementary matrices $\elmat{i}{s_1-1}$ and $\elmat{s_1-1}{j}$ belong to $\OBLT(s_1,s_2)$ by construction: the first is included in the upper block, while the other is included in the lower block. 
Matrix $A=\elmat{i}{s_1-1} \cdot \elmat{s_1-1}{j}$ is given by $\elmat{s_1-1}{j}$ but adding row $s_1-1$ to row $i$. 
As a result, $a_{i,j}=1$, and $\OBLT(s_1,s_2)$ is not a group.
%\comment{sufficient condition TBD. Idea: given UPO, if $\el{i}{j} \in \Aut(\code{C}$, then $\el{i+1}{j},\el{i}{j-1} \in \Aut(\code{C}$ and $\BLTA$ is the only "blocky" subgroup of $\GA$.}
\end{proof}
\end{theorem}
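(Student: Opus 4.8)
The plan is to prove the two implications separately. Necessity of the UPO is immediate from the previous theorem: any lower-triangular matrix is in particular block-lower-triangular, so $\LTA \subseteq \BLTA(S)$ for every profile $S$; hence $\mathcal{A} = \BLTA(S)$ forces $\LTA \subseteq \mathcal{A}$, and Theorem~\ref{theo_LTA_UPO} then gives that $\code{C}$ complies with the UPO framework. All the effort therefore goes into the converse, showing that the UPO forces $\mathcal{A}$ into a block-lower-triangular shape.

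Assuming $\code{C}$ follows the UPO, I would first determine the \emph{support} of $\mathcal{A}$, meaning the set of positions $(i,j)$ for which the elementary transformation $\elperm{i}{j}$ is an automorphism. By Theorem~\ref{theo_LTA_UPO} every strictly-lower position ($i>j$) belongs to it, and by Lemma~\ref{lem:blocky} this set is closed under increasing the row index and decreasing the column index. Read geometrically, this downward-and-leftward closure makes the admissible positions above the diagonal cluster into square blocks sitting on the diagonal, with everything below the block-diagonal admissible as well; a priori, though, adjacent blocks may \emph{overlap}. I would call this candidate shape the overlapping block-triangular ($\OBLT$) structure and note that, because every matrix supported inside it factors by block-respecting Gaussian elimination into admissible elementary transformations and a translation, $\mathcal{A}$ coincides with the set generated by this $\OBLT$ shape.

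The heart of the proof is to rule out genuine overlaps by exploiting that $\mathcal{A}$ is a \emph{group} and hence closed under composition. It suffices to treat a single overlap, say $\OBLT(s_1,s_2)$ with $s_1+s_2>n$ and $s_1>s_2$, the general case reducing to it. I would fix $i,j$ in the forbidden upper-right zone and exhibit two elementary matrices lying in the two overlapping blocks, namely $\elmat{i}{s_1-1}$ in the upper block and $\elmat{s_1-1}{j}$ in the lower block, and then compute that their product $\elmat{i}{s_1-1}\cdot\elmat{s_1-1}{j}$ carries a $1$ in position $(i,j)$. Since this position lies outside both blocks, the product escapes the $\OBLT$ support; as it must nevertheless be an automorphism, overlaps cannot occur. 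With overlaps excluded the blocks tile the diagonal and define a genuine profile $S$ with $s_1+\cdots+s_l=n$, yielding $\mathcal{A}=\BLTA(S)$.

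I expect the main obstacle to be the bookkeeping that turns the qualitative ``staircase collapses to non-overlapping blocks'' picture into something rigorous: verifying that the closure of Lemma~\ref{lem:blocky} really produces square diagonal blocks rather than a more irregular region, checking that the chosen product lands in the forbidden zone across the entire range of overlap indices, and confirming that excluding every pairwise overlap between adjacent blocks suffices to fix one well-defined profile. A second point requiring care is the affine-versus-linear reduction---ensuring that every automorphism genuinely decomposes into admissible elementaries plus a translation, so that the support analysis captures all of $\mathcal{A}$ rather than only a generating subset.
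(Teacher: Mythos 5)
Your proposal is correct and follows essentially the same route as the paper: necessity via $\LTA \subseteq \BLTA(S)$ and Theorem~\ref{theo_LTA_UPO}, and sufficiency by using Lemma~\ref{lem:blocky} to deduce the overlapping block-triangular ($\OBLT$) shape and then ruling out genuine overlaps through the very same product $\elmat{i}{s_1-1} \cdot \elmat{s_1-1}{j}$, whose entry at position $(i,j)$ shows $\OBLT(s_1,s_2)$ is not closed under multiplication, forcing the merged $\BLTA(S)$ structure. The bookkeeping concerns you flag (squareness of the closed region, reduction of general overlaps to the two-block case, and the affine-versus-linear reduction via translations) are left at the same level of informality in the paper's own proof.
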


\section{Classification of affine automorphisms}
%\section{Equivalence classes of GA automorphisms}
\label{sec:eq_classes}
In this section, we describe a new framework for the analysis of automorphisms for permutation decoding of polar codes. 
We introduce the concept of \emph{permutation decoding equivalence} for automorphisms; this notion permits to cluster code automorphisms into classes of automorphisms always providing the same codeword candidate under permutation decoding, no matter the received signal. 
As a consequence, the selection of automorphisms for AE decoding should avoid automorphisms belonging to the same class. 
These classes are generated as cosets of the absorption group of the polar code, which plays a capital role in the definition of the classes. 
This result is an extension of our preliminary analysis of decoder equivalence published in \cite{AE_class}.

Next, we focus on the analysis of the absorption group under SC decoding, providing an alternative proof of a very recent result presented in \cite{SC_invariant}, namely that the complete SC absorption group has a $\BLTA$ structure. 
Thanks to the knowledge of the structure of the SC absorption group, we calculate the number of the equivalence classes, representing the maximum number of permutations providing possibly different codeword candidates under AE-SC decoding. 
Finally, we propose a practical method to find one automorphism for each equivalence class, in order to avoid redundant automorphisms. 

% --------------------------------------------------------------------
\subsection{Permutation decoding equivalence}
\label{sec:dec_eq}

We begin by formally defining what is a decoding function of polar codes. 
We consider transmission over a binary symmetric memoryless channel with output $y$ such that
\begin{equation}
	p( y_i | x_i = 0 ) = p( - y_i | x_i = 1 )
	\label{eq:symmetric-channel}
\end{equation} 
for $i=0,1,\ldots, N-1$.  An example of such a channel is the BI-AWGN channel,
\begin{equation*}
	y = \tilde{x} + w ,
\end{equation*}
where $\tilde{x} = \bpsk(x)$ with BPSK mapping $\bpsk(0) = +1$, $\bpsk(1) = -1$, and $w$ is i.i.d. Gaussian noise with distribution $\mathcal{N}(0,\sigma^2)$.
We assume a decoding algorithm for the AWGN channel, operating on the channel outputs\footnote{If the decoder operates on LLRs, we may include the LLR computation into the channel model.} $y$.
\begin{definition}[Decoding function]
\label{def:dec}
The function
\begin{align}
	\dec  :  \realnumber^N  &\rightarrow  \code{C}       \notag\\
	y   	           &\mapsto      x = \dec(y)  
\end{align}
denotes the decoding function for polar codes.
\end{definition}
This definition can be generalized to the notion of decoding function wrapped by an automorphism. 
%As generalisation we also define the decoding function wrapped by an automorphism. 
An \emph{automorphism decoder} is a decoder run on a received signal that is scrambled according to a code automorphism; the result is then scrambled back to retrieve the original codeword estimation.
\begin{definition}[Automorphism decoding function]
\label{def:adec}
For an automorphism $\pi \in \Aut(\code{C})$, the function
\begin{align}
	\adec  :  \realnumber^N \times \Aut(\code{C})  &\rightarrow  \code{C}  \\
		y                                           &\mapsto      x = \adec(y;\pi)  
\end{align}
with
\begin{equation}
	\adec(y;\pi) \triangleq \pi^{-1}( \dec( \pi( y ) ) ) .
\end{equation}
is called the automorphism decoding function of the polar code.
\end{definition}
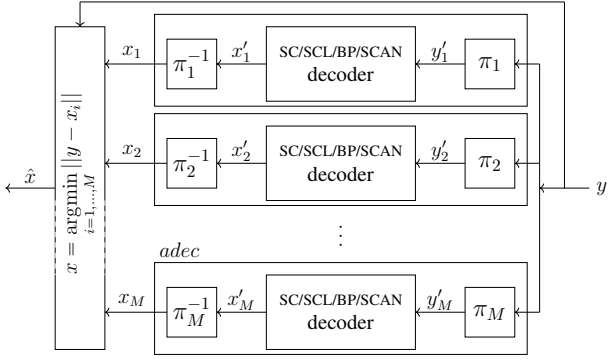
\begin{figure}[t!]
	\centering
	\ifdefined\COLS
		\resizebox{0.45\textwidth}{!}{\begin{tikzpicture}

\draw[ ] (8.75,0) node [font=\large] {$y$};
\draw[<-]  (7.5,0)  -- (8.5,0) ;
\draw  (-1.75,3.75)  -- (8,3.75) ;
\draw  (8,0)  -- (8,3.75) ;
\draw[->]  (-1.75,3.75)  -- (-1.75,3.25) ;
\draw  (7.5,-2.5)  -- (7.5,2.5) ;
\draw[<-]  (7,2.5)  -- (7.5,2.5) ;
\draw[<-]  (7,0.5)  -- (7.5,0.5) ;
\draw[<-]  (7,-2.5)  -- (7.5,-2.5) ;
\draw[<-]  (5,2.5)  -- (6,2.5) ;
\draw[ ] (5.5,2.75) node [font=\large] {$y'_1$};
\draw[<-]  (5,0.5)  -- (6,0.5) ;
\draw[ ] (5.5,0.75) node [font=\large] {$y'_2$};
\draw[<-]  (5,-2.5)  -- (6,-2.5) ;
\draw[ ] (5.5,-2.25) node [font=\large] {$y'_M$};
\draw[<-]  (1,2.5)  -- (2,2.5) ;
\draw[ ] (1.5,2.75) node [font=\large] {$x'_1$};
\draw[<-]  (1,0.5)  -- (2,0.5) ;
\draw[ ] (1.5,0.75) node [font=\large] {$x'_2$};
\draw[<-]  (1,-2.5)  -- (2,-2.5) ;
\draw[ ] (1.5,-2.25) node [font=\large] {$x'_M$};
\draw[<-]  (-1.25,2.5)  -- (0,2.5) ;
\draw[ ] (-0.7,2.75) node [font=\large] {$x_1$};
\draw[<-]  (-1.25,0.5)  -- (0,0.5) ;
\draw[ ] (-0.7,0.75) node [font=\large] {$x_2$};
\draw[<-]  (-1.25,-2.5)  -- (0,-2.5) ;
\draw[ ] (-0.7,-2.25) node [font=\large] {$x_M$};
\draw[<-]  (-3.25,-0)  -- (-2.25,-0) ;
\draw[ ] (-2.75,0.25) node [font=\large] {$\hat{x}$};

% boxes
\draw[fill=white]  (0,3) rectangle (1,2);
\draw[ ] (0.5,2.5) node [font=\Large] {$\pi_1^{-1}$};
\draw[fill=white]  (0,1) rectangle (1,0);
\draw[ ] (0.5,0.5) node [font=\Large] {$\pi_2^{-1}$};
%\path (0.5,1) -- (0.5,-3) node [font=\large, midway, sloped] {$\dots$};
\draw[fill=white]  (0,-2) rectangle (1,-3);
\draw[ ] (0.5,-2.5) node [font=\Large] {$\pi_M^{-1}$};

\draw[fill=white]  (2,3.25) rectangle (5,1.75);
\draw[ ] (3.5,2.5) node [font=\large] {$\begin{array}{c} \text{\small SC/SCL/BP/SCAN} \\ \text{decoder} \end{array}$};
\draw[fill=white]  (2,1.25) rectangle (5,-0.25);
\draw[ ] (3.5,0.5) node [font=\large] {$\begin{array}{c} \text{\small SC/SCL/BP/SCAN} \\ \text{decoder} \end{array}$};
\path (3.5,1) -- (3.5,-3) node [font=\large, midway, sloped] {$\dots$};
\draw[fill=white]  (2,-1.75) rectangle (5,-3.25);
\draw[ ] (3.5,-2.5) node [font=\large] {$\begin{array}{c} \text{\small SC/SCL/BP/SCAN} \\ \text{decoder} \end{array}$};

\draw[fill=white]  (6,3) rectangle (7,2);
\draw[ ] (6.5,2.5) node [font=\Large] {$\pi_1$};
\draw[fill=white]  (6,1) rectangle (7,0);
\draw[ ] (6.5,0.5) node [font=\Large] {$\pi_2$};
%\path (6.5,1) -- (6.5,-3) node [font=\large, midway, sloped] {$\dots$};
\draw[fill=white]  (6,-2) rectangle (7,-3);
\draw[ ] (6.5,-2.5) node [font=\Large] {$\pi_M$};

%
%\path (0.5,2.25) -- (0.5,-2.5) node [font=\large, midway, sloped] {$\dots$};
%\path (6.5,1) -- (6.5,-3) node [font=\large, midway, sloped] {$\dots$};
%\path (-4.5,2.25) -- (-4.5,-2.5) node [font=\Huge, midway, sloped] {$\dots$};
%\path (3.5,2.25) -- (3.5,-2.5) node [font=\Huge, midway, sloped] {$\dots$};

% interleaver
\draw[fill=white]  (-2.25,3.25) rectangle (-1.25,-3.25);
\draw[dashed, white]  (-2.25,-0.25) edge (-2.25,-1.75);
\draw[dashed, white]  (-1.25,-0.25) edge (-1.25,-1.75);
\draw[ ] (-1.75,0.1) node [rotate=90,font=\large] {$x=\underset{i=1,\ldots, M}{\mathrm{argmin}}\,||y-x_i||$};

%adec
\draw  (-0.25,3.5) rectangle (7.25,1.65);
\draw  (-0.25,1.5) rectangle (7.25,-.35);
\draw  (-0.25,-1.5) rectangle (7.25,-3.35);
\draw[ ] (0.25,-1.25) node [font=\large] {$adec$};

\end{tikzpicture}}
	\else
		\resizebox{0.75\textwidth}{!}{\begin{tikzpicture}

\draw[ ] (8.75,0) node [font=\large] {$y$};
\draw[<-]  (7.5,0)  -- (8.5,0) ;
\draw  (-1.75,3.75)  -- (8,3.75) ;
\draw  (8,0)  -- (8,3.75) ;
\draw[->]  (-1.75,3.75)  -- (-1.75,3.25) ;
\draw  (7.5,-2.5)  -- (7.5,2.5) ;
\draw[<-]  (7,2.5)  -- (7.5,2.5) ;
\draw[<-]  (7,0.5)  -- (7.5,0.5) ;
\draw[<-]  (7,-2.5)  -- (7.5,-2.5) ;
\draw[<-]  (5,2.5)  -- (6,2.5) ;
\draw[ ] (5.5,2.75) node [font=\large] {$y'_1$};
\draw[<-]  (5,0.5)  -- (6,0.5) ;
\draw[ ] (5.5,0.75) node [font=\large] {$y'_2$};
\draw[<-]  (5,-2.5)  -- (6,-2.5) ;
\draw[ ] (5.5,-2.25) node [font=\large] {$y'_M$};
\draw[<-]  (1,2.5)  -- (2,2.5) ;
\draw[ ] (1.5,2.75) node [font=\large] {$x'_1$};
\draw[<-]  (1,0.5)  -- (2,0.5) ;
\draw[ ] (1.5,0.75) node [font=\large] {$x'_2$};
\draw[<-]  (1,-2.5)  -- (2,-2.5) ;
\draw[ ] (1.5,-2.25) node [font=\large] {$x'_M$};
\draw[<-]  (-1.25,2.5)  -- (0,2.5) ;
\draw[ ] (-0.7,2.75) node [font=\large] {$x_1$};
\draw[<-]  (-1.25,0.5)  -- (0,0.5) ;
\draw[ ] (-0.7,0.75) node [font=\large] {$x_2$};
\draw[<-]  (-1.25,-2.5)  -- (0,-2.5) ;
\draw[ ] (-0.7,-2.25) node [font=\large] {$x_M$};
\draw[<-]  (-3.25,-0)  -- (-2.25,-0) ;
\draw[ ] (-2.75,0.25) node [font=\large] {$\hat{x}$};

% boxes
\draw[fill=white]  (0,3) rectangle (1,2);
\draw[ ] (0.5,2.5) node [font=\Large] {$\pi_1^{-1}$};
\draw[fill=white]  (0,1) rectangle (1,0);
\draw[ ] (0.5,0.5) node [font=\Large] {$\pi_2^{-1}$};
%\path (0.5,1) -- (0.5,-3) node [font=\large, midway, sloped] {$\dots$};
\draw[fill=white]  (0,-2) rectangle (1,-3);
\draw[ ] (0.5,-2.5) node [font=\Large] {$\pi_M^{-1}$};

\draw[fill=white]  (2,3.25) rectangle (5,1.75);
\draw[ ] (3.5,2.5) node [font=\large] {$\begin{array}{c} \text{\small SC/SCL/BP/SCAN} \\ \text{decoder} \end{array}$};
\draw[fill=white]  (2,1.25) rectangle (5,-0.25);
\draw[ ] (3.5,0.5) node [font=\large] {$\begin{array}{c} \text{\small SC/SCL/BP/SCAN} \\ \text{decoder} \end{array}$};
\path (3.5,1) -- (3.5,-3) node [font=\large, midway, sloped] {$\dots$};
\draw[fill=white]  (2,-1.75) rectangle (5,-3.25);
\draw[ ] (3.5,-2.5) node [font=\large] {$\begin{array}{c} \text{\small SC/SCL/BP/SCAN} \\ \text{decoder} \end{array}$};

\draw[fill=white]  (6,3) rectangle (7,2);
\draw[ ] (6.5,2.5) node [font=\Large] {$\pi_1$};
\draw[fill=white]  (6,1) rectangle (7,0);
\draw[ ] (6.5,0.5) node [font=\Large] {$\pi_2$};
%\path (6.5,1) -- (6.5,-3) node [font=\large, midway, sloped] {$\dots$};
\draw[fill=white]  (6,-2) rectangle (7,-3);
\draw[ ] (6.5,-2.5) node [font=\Large] {$\pi_M$};

%
%\path (0.5,2.25) -- (0.5,-2.5) node [font=\large, midway, sloped] {$\dots$};
%\path (6.5,1) -- (6.5,-3) node [font=\large, midway, sloped] {$\dots$};
%\path (-4.5,2.25) -- (-4.5,-2.5) node [font=\Huge, midway, sloped] {$\dots$};
%\path (3.5,2.25) -- (3.5,-2.5) node [font=\Huge, midway, sloped] {$\dots$};

% interleaver
\draw[fill=white]  (-2.25,3.25) rectangle (-1.25,-3.25);
\draw[dashed, white]  (-2.25,-0.25) edge (-2.25,-1.75);
\draw[dashed, white]  (-1.25,-0.25) edge (-1.25,-1.75);
\draw[ ] (-1.75,0.1) node [rotate=90,font=\large] {$x=\underset{i=1,\ldots, M}{\mathrm{argmin}}\,||y-x_i||$};

%adec
\draw  (-0.25,3.5) rectangle (7.25,1.65);
\draw  (-0.25,1.5) rectangle (7.25,-.35);
\draw  (-0.25,-1.5) rectangle (7.25,-3.35);
\draw[ ] (0.25,-1.25) node [font=\large] {$adec$};

\end{tikzpicture}}
	\fi
	\caption{Structure of the automorphism ensemble (AE) decoder.}
	\label{fig:AE}
\end{figure} 
Note that this function may be seen as a decoding function with parameter $\pi$.  
Moreover, we focus our analysis on affine automorphisms in $\mathcal{A} \subseteq \Aut(\code{C})$. 
An \emph{automorphism ensemble} (AE) decoder, originally proposed in \cite{geiselhart2020automorphism} for Reed-Muller codes, consists of $M$ automorphism decoders running in parallel, as depicted in Figure~\ref{fig:AE}, where the codeword candidate is selected using a least-squares metric.
Starting from these definitions, we analyze the effects of automorphism decoding on polar codes.
\begin{definition}[Decoder equivalence]
\label{def:dec-equiv}
Two automorphisms $\pi_1,\pi_2 \in \mathcal{A}$ are called equivalent with respect to $\dec$, written as $\pi_1 \sim \pi_2$, if
\begin{equation}
	\adec(y;\pi_1) = \adec(y;\pi_2) \quad \text{for all $y \in \realnumber^N$} .
\end{equation}
\end{definition}	
This is an equivalence relation, since it is reflexive, symmetric and transitive. The equivalence classes are defined as
\begin{equation}
	[\pi] \triangleq \{ \pi' \in \mathcal{A} : \pi \sim \pi' \}.
\end{equation}
The equivalence class $[\pid]$ of the trivial permutation $\pid$ is also called the set of \emph{decoder-absorbed automorphisms}.
In fact, for all $\pi \in [\pid]$,
\begin{equation}
	\adec(y;\pi) = \dec(y) \quad \text{for all $y \in \realnumber^N$} ,
\end{equation}
i.e., these permutations are absorbed by the decoder, or in other words, the decoding function is invariant to these permutations. 
%In \cite[Stuttgart]{Stuttgart} it was conjectured that $[\pid] = \mathcal{L}$.  \comment{This statement needs to be refined and made precise.}
This happens also when an absorbed permutation is concatenated to other permutations. 
%We note two properties of the automorphism decoder for the composition of permutations. 
%\footnote{\comment{I don't think these relations are worth a lemma, but it's useful to write them down.}}. 
In fact, given two automorphisms $\pi,\sigma \in \mathcal{A}$, then
\begin{equation}
	\adec(y;\pi \circ \sigma) = \sigma^{-1}( \adec( \sigma(y) ; \pi ))  ,
	\label{eq:adec-pi-sigma}
\end{equation}
and if $\pi \in [\pid]$, we obtain
\ifdefined\COLS
	\begin{align}
	\adec(y;\pi \circ \sigma) &= \sigma^{-1}( \adec( \sigma(y) ; \pi )) = \\
	&= \sigma^{-1}( \dec( \sigma(y) )) = \adec(y;\sigma)  ,
	\label{eq:adec-pid-sigma}
	\end{align}
\else
	\begin{equation}
	\adec(y;\pi \circ \sigma) = \sigma^{-1}( \adec( \sigma(y) ; \pi )) = \sigma^{-1}( \dec( \sigma(y) )) = \adec(y;\sigma)  ,
	\label{eq:adec-pid-sigma}
	\end{equation}
\fi

i.e., the component $\pi$ of the composition $\pi \circ \sigma$ is absorbed. In the following, we generalize these properties by showing that $[\pid]$ forms a sub-group of $\mathcal{A}$.
To begin with, we prove that the inverse of an absorbed automorphism is also absorbed. 
\begin{lemma}
If $\pi \in [\pid]$, then $\pi^{-1} \in [\pid]$.
\begin{proof}
From the definition of the equivalence class, we have that
\begin{equation}
    \dec(y) = \pi^{-1}( \dec( \pi( y ) ) )  
    \,\,\Leftrightarrow\,\,
   \pi(\dec(y)) = \dec( \pi( y ) ) 
\end{equation}
(with the latter denoting an equivariance); and so
\ifdefined\COLS
	\begin{align}
	\adec(y;\pi^{-1}) &= \pi( \dec( \pi^{-1}( y ) ) ) = \\
	&= \dec( \pi( \pi^{-1} (y) ) ) = \dec(y).
	\end{align}
\else
	\begin{equation}
    \adec(y;\pi^{-1}) = \pi( \dec( \pi^{-1}( y ) ) ) = \dec( \pi( \pi^{-1} (y) ) ) = \dec(y).
	\end{equation}
\fi
\end{proof}
\end{lemma}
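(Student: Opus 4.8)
The plan is to reformulate membership in $[\pid]$ as an \emph{equivariance} property of the decoder with respect to the permutation, a symmetric form in which exchanging $\pi$ and $\pi^{-1}$ is transparent. By definition, $\pi \in [\pid]$ means $\adec(y;\pi) = \dec(y)$ for all $y \in \realnumber^N$, i.e.\ $\pi^{-1}(\dec(\pi(y))) = \dec(y)$. Applying $\pi$ to both sides (legitimate since $\pi$ is a bijection) turns this into $\dec(\pi(y)) = \pi(\dec(y))$, which says that $\dec$ commutes with $\pi$ as maps. This is exactly the equivalence on which the statement rests, and it is the conceptual crux: absorption is the same as the decoder being $\pi$-equivariant.

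First I would establish this equivalence cleanly, taking care that $\dec$ maps into $\code{C}$ so that $\pi$ and $\pi^{-1}$ act on codewords as genuine automorphisms. Then, to obtain $\pi^{-1} \in [\pid]$, I would substitute $y \mapsto \pi^{-1}(y)$ into the equivariance identity $\dec(\pi(y)) = \pi(\dec(y))$. Since $\pi(\pi^{-1}(y)) = y$, the left-hand side collapses to $\dec(y)$ while the right-hand side becomes $\pi(\dec(\pi^{-1}(y)))$, yielding $\dec(y) = \pi(\dec(\pi^{-1}(y)))$ for all $y$. The right-hand side is precisely $\adec(y;\pi^{-1})$ by Definition~\ref{def:adec}, so $\adec(y;\pi^{-1}) = \dec(y)$ for every $y$, which is the definition of $\pi^{-1} \in [\pid]$.

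There is no genuine obstacle here: once absorption is recast as equivariance, the inverse statement follows from a single change of variable, and the only points requiring care are formal, namely that $\pi$ (hence $\pi^{-1}$) is a bijection so that applying it to both sides of an identity is reversible, and that $\dec(\pi^{-1}(y))$ again lies in $\code{C}$ so that $\pi(\dec(\pi^{-1}(y)))$ is well defined. As a sanity check I would also note an equivalent route through the composition rule~\eqref{eq:adec-pi-sigma}: taking $\sigma = \pi^{-1}$ gives $\adec(y;\pid) = \pi\bigl(\adec(\pi^{-1}(y);\pi)\bigr)$, and since the left-hand side equals $\dec(y)$ and $\adec(\cdot;\pi) = \dec(\cdot)$ by hypothesis, this unwinds to $\dec(y) = \pi(\dec(\pi^{-1}(y))) = \adec(y;\pi^{-1})$, recovering the same conclusion without writing the equivariance form explicitly.
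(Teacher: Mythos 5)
Your proposal is correct and follows essentially the same route as the paper: both recast membership in $[\pid]$ as the equivariance $\pi(\dec(y)) = \dec(\pi(y))$ and then evaluate it at $\pi^{-1}(y)$ to conclude $\adec(y;\pi^{-1}) = \pi(\dec(\pi^{-1}(y))) = \dec(y)$. Your alternative check via the composition rule~\eqref{eq:adec-pi-sigma} with $\sigma = \pi^{-1}$ is a valid but redundant variant of the same argument.
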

Now we can prove that $[\pid] \leq \mathcal{A}$;
\begin{lemma}
%The equivalence class $[\pid]$ (set of SC-absorbed automorphisms) forms a roup.
	The equivalence class $[\pid]$ (set of decoder-absorbed automorphisms) is a subgroup of $\mathcal{A}$, i.e., $[\pid] \leq \mathcal{A}$.
\begin{proof}
We prove this lemma using the subgroup test, stating that $[\pid] \leq \mathcal{A}$ if and only if $\forall \pi,\sigma \in [\pid]$ then $\pi^{-1}\sigma \in [\pid]$:
\begin{align}
    \adec(y;\pi^{-1}\sigma) &= (\pi^{-1}\sigma)^{-1}( \dec( (\pi^{-1}\sigma)( y ) ) ) =\\
    &= \sigma^{-1}\left(\pi\left( \dec\left( \pi^{-1}\left(\sigma\left( y \right)\right)\right)\right)\right) =\\
    &= \sigma^{-1}\left( \dec\left(\sigma\left( y \right)\right)\right) =\\
    &= \dec\left( y \right).
\end{align}
\end{proof}
\end{lemma}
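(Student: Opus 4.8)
The plan is to verify that $[\pid]$ is a subgroup of $\mathcal{A}$ via the one-step subgroup test: it suffices to check that $[\pid]$ is nonempty and that $\pi^{-1}\sigma \in [\pid]$ whenever $\pi,\sigma \in [\pid]$. Nonemptiness is immediate, since $\adec(y;\pid) = \pid^{-1}(\dec(\pid(y))) = \dec(y)$ shows $\pid \in [\pid]$, so the whole content is the closure condition.

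The conceptual key I would isolate first is the reformulation of membership in $[\pid]$ as an equivariance property of the decoder. Unrolling Definition~\ref{def:adec}, the condition $\pi \in [\pid]$, i.e.\ $\pi^{-1}(\dec(\pi(y))) = \dec(y)$ for all $y$, is equivalent to $\dec(\pi(y)) = \pi(\dec(y))$ for all $y$; in other words, the absorbed automorphisms are exactly those that commute with $\dec$. This is the same equivalence already exploited in the preceding lemma on inverses, and it is what makes the closure computation collapse cleanly.

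With this in hand I would take $\pi,\sigma \in [\pid]$ and expand $\adec(y;\pi^{-1}\sigma)$ directly from the definition, being careful with the composition convention fixed earlier in the paper (right-to-left application, so that $(\pi^{-1}\sigma)^{-1} = \sigma^{-1}\pi$ and $(\pi^{-1}\sigma)(y) = \pi^{-1}(\sigma(y))$). This yields $\adec(y;\pi^{-1}\sigma) = \sigma^{-1}(\pi(\dec(\pi^{-1}(\sigma(y)))))$. Applying the equivariance of $\pi$ to the inner argument $z = \pi^{-1}(\sigma(y))$ gives $\pi(\dec(z)) = \dec(\pi(z)) = \dec(\sigma(y))$, which reduces the whole expression to $\sigma^{-1}(\dec(\sigma(y))) = \adec(y;\sigma) = \dec(y)$, the last equality using $\sigma \in [\pid]$. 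Hence $\pi^{-1}\sigma \in [\pid]$ and the subgroup test is satisfied.

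A quicker variant I would keep in reserve is to invoke the two facts already available: the preceding lemma gives $\pi^{-1} \in [\pid]$, and the component-absorption identity \eqref{eq:adec-pid-sigma} then gives $\adec(y;\pi^{-1}\circ\sigma) = \adec(y;\sigma) = \dec(y)$ in one stroke. I expect the only real pitfall to be the bookkeeping with the permutation-composition and inverse conventions — getting $(\pi^{-1}\sigma)^{-1} = \sigma^{-1}\pi$ and the order of substitution right — rather than any genuine difficulty, since the equivariance reformulation does all the heavy lifting.
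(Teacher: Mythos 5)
Your proposal is correct and follows essentially the same route as the paper: the one-step subgroup test applied to $\pi^{-1}\sigma$, expanded via the definition of $\adec$ into $\sigma^{-1}(\pi(\dec(\pi^{-1}(\sigma(y)))))$ and collapsed using the equivariance $\dec(\pi(y)) = \pi(\dec(y))$, which is exactly what the paper's middle step invokes (implicitly, through the preceding lemma that $\pi^{-1} \in [\pid]$). Your explicit checks of nonemptiness and of the composition conventions $(\pi^{-1}\sigma)^{-1} = \sigma^{-1}\pi$ are careful additions but do not change the argument.
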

Now, we use cosets of $[\pid]$ to classify automorphisms into equivalence classes ($\EC$s) containing permutations providing the same results under AE decoding. 
%Note that this group property holds independently of whether or not $[\pid]$ is equal to $\mathcal{L}$.
\begin{lemma} \label{lem:ec_cosets}
Equivalence classes $[\sigma]$ defined by decoder equivalence correspond to right cosets of $[\pid]$. 
%(left cosets in matrix notation\footnote{\comment{This needs to be clarified. Cf. the section on GA permutations in the paper [Ingmar]}}).
\begin{proof}
Right cosets of $[\pid]$ are defined as
\begin{equation}
	[\pid]\sigma \triangleq \{ \pi \circ \sigma : \pi \in [\pid] \}.
\end{equation}
Hence, permutations $\sigma_1,\sigma_2 \in [\pid]\sigma$ if and only if there exist two permutations $\pi_1,\pi_2 \in [\pid]$ such that $\sigma_1 = \pi_1 \circ \sigma$ and $\sigma_2 = \pi_2 \circ \sigma$, where the second implies $\sigma = \pi_2^{-1} \circ \sigma_2$. Then the proof is concluded by
\ifdefined\COLS
	\begin{align}
	\adec(y;\sigma_1) &= \adec(y;\pi_1\sigma) = \\
	&= \adec(y;\pi_1\pi_2^{-1}\sigma_2) = \adec(y;\sigma_2).
	\end{align}
\else
	\begin{equation}
	\adec(y;\sigma_1) = \adec(y;\pi_1\sigma) = \adec(y;\pi_1\pi_2^{-1}\sigma_2) = \adec(y;\sigma_2).
	\end{equation}
\fi
\end{proof}
\end{lemma}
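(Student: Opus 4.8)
The plan is to show that decoder-equivalence $\sim$ is \emph{exactly} the relation of lying in a common right coset of $[\pid]$, so that the classes $[\sigma]$ and the cosets $[\pid]\sigma$ are the same subsets of $\mathcal{A}$. Since we have already established $[\pid] \leq \mathcal{A}$, its right cosets partition $\mathcal{A}$, and $\sigma_1,\sigma_2$ lie in the same right coset if and only if $\sigma_1 \circ \sigma_2^{-1} \in [\pid]$. I would therefore prove the two inclusions $[\pid]\sigma \subseteq [\sigma]$ and $[\sigma] \subseteq [\pid]\sigma$; together they give the claimed correspondence.

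For the inclusion $[\pid]\sigma \subseteq [\sigma]$, I would take any $\sigma_1,\sigma_2 \in [\pid]\sigma$, write $\sigma_1 = \pi_1 \circ \sigma$ and $\sigma_2 = \pi_2 \circ \sigma$ with $\pi_1,\pi_2 \in [\pid]$, and apply the absorption identity \eqref{eq:adec-pid-sigma} to each factor to get $\adec(y;\sigma_1) = \adec(y;\sigma) = \adec(y;\sigma_2)$ for every $y \in \realnumber^N$; this is precisely $\sigma_1 \sim \sigma_2$. This direction is essentially the absorption computation already carried out before the lemma, so it is routine.

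The substantive direction is $[\sigma] \subseteq [\pid]\sigma$: decoder-equivalence must \emph{force} membership in the same coset. I would take $\sigma_1 \sim \sigma_2$, set $\pi \triangleq \sigma_1 \circ \sigma_2^{-1}$ so that $\sigma_1 = \pi \circ \sigma_2$, and aim to show $\pi \in [\pid]$. Using the composition rule \eqref{eq:adec-pi-sigma} on the one hand and the definition of $\adec$ on the other,
\begin{align*}
\adec(y;\sigma_1) &= \adec(y;\pi \circ \sigma_2) = \sigma_2^{-1}\left( \adec(\sigma_2(y);\pi) \right), \\
\adec(y;\sigma_2) &= \sigma_2^{-1}\left( \dec(\sigma_2(y)) \right).
\end{align*}
Since $\sigma_1 \sim \sigma_2$ the two left-hand sides agree for all $y$; equating the right-hand sides and applying $\sigma_2$ yields $\adec(\sigma_2(y);\pi) = \dec(\sigma_2(y))$ for every $y$.

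The one step that needs care --- and the main obstacle --- is passing from this identity, valid a priori only on the image $\{\sigma_2(y) : y \in \realnumber^N\}$, to the statement $\adec(z;\pi) = \dec(z)$ for \emph{all} $z \in \realnumber^N$, which is what $\pi \in [\pid]$ requires. Here I would invoke that $\sigma_2$, being a coordinate permutation, acts as a bijection of $\realnumber^N$ onto itself, so $z = \sigma_2(y)$ ranges over all of $\realnumber^N$ as $y$ does. This gives $\pi \in [\pid]$ and hence $\sigma_1 = \pi \circ \sigma_2 \in [\pid]\sigma_2$, completing the second inclusion. Combining both inclusions, the decoder-equivalence classes are exactly the right cosets of $[\pid]$.
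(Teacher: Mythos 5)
Your proof is correct, and it is in fact more complete than the paper's own argument. Your first inclusion, $[\pid]\sigma \subseteq [\sigma]$, is essentially the paper's entire proof: the paper writes $\sigma_1 = \pi_1\circ\sigma$ and $\sigma_2 = \pi_2\circ\sigma$ and absorbs $\pi_1\pi_2^{-1}\in[\pid]$ in the chain $\adec(y;\sigma_1)=\adec(y;\pi_1\pi_2^{-1}\sigma_2)=\adec(y;\sigma_2)$, which (like your appeal to \eqref{eq:adec-pid-sigma}) shows only that any two elements of a common right coset are decoder-equivalent, i.e.\ that each coset lies inside a single equivalence class; the paper stops there. Your converse inclusion $[\sigma]\subseteq[\pid]\sigma$ --- setting $\pi=\sigma_1\circ\sigma_2^{-1}$, comparing $\adec(y;\pi\circ\sigma_2)=\sigma_2^{-1}\left(\adec(\sigma_2(y);\pi)\right)$ from \eqref{eq:adec-pi-sigma} against $\adec(y;\sigma_2)=\sigma_2^{-1}\left(\dec(\sigma_2(y))\right)$, and then using that the coordinate permutation $\sigma_2$ acts as a bijection of $\realnumber^N$ to upgrade $\adec(\sigma_2(y);\pi)=\dec(\sigma_2(y))$ from the image of $\sigma_2$ to the statement $\adec(z;\pi)=\dec(z)$ for all $z\in\realnumber^N$ --- does not appear in the paper at all, yet it is exactly what the word ``correspond'' requires: without it the classes could a priori be unions of several cosets, and the count $E=|\mathcal{A}|/|[\pid]|$ of Lemma~\ref{lem:number_EC} (via Lagrange) would only bound the number of classes from above rather than equal it. You correctly identify the one delicate step (extending the identity beyond the image of $\sigma_2$) and resolve it by bijectivity, and your implicit use of $\pi=\sigma_1\circ\sigma_2^{-1}\in\mathcal{A}$ is legitimate since $\mathcal{A}$ is a group. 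In short: your easy direction coincides with the paper's mechanism, and your hard direction supplies a converse the paper leaves unproven.
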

According to our notation, two automorphisms in the same $\EC$ always provide the same candidate under $\adec$ decoding.
The number of non-redundant automorphisms for AE-dec, namely the maximum number of different candidates listed by an AE-dec decoder, is then given by the number of equivalence classes of our relation.
\begin{lemma}
\label{lem:number_EC}
There are $E=\frac{|\mathcal{A}|}{|[\pid]|}$ equivalence classes $[\pi]$, $\pi \in \mathcal{A}$, all having the same size.
%All equivalence classes $[\pi]$, $\pi \in \mathcal{A}$, have the same size.
\end{lemma}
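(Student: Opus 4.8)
The plan is to recognize that this statement is precisely Lagrange's theorem specialized to the subgroup $[\pid]\leq\mathcal{A}$, so the whole argument reduces to invoking results already established. By the previous lemma we know $[\pid]$ is a subgroup of the finite group $\mathcal{A}$, and by Lemma~\ref{lem:ec_cosets} the equivalence classes of decoder equivalence are exactly the right cosets $[\pid]\sigma$. Counting the equivalence classes is therefore the same as counting the right cosets of $[\pid]$ in $\mathcal{A}$, i.e. computing the index $[\mathcal{A}:[\pid]]$.

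First I would record that the right cosets partition $\mathcal{A}$: this is immediate because they are the classes of an equivalence relation (Definition~\ref{def:dec-equiv}), so every $\pi\in\mathcal{A}$ lies in exactly one class $[\pi]=[\pid]\pi$. Hence if $E$ denotes the number of distinct classes, summing their cardinalities gives $|\mathcal{A}|=\sum_{\text{classes}}|[\sigma]|$.

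Next I would show that all cosets have the same size, namely $|[\pid]|$. For any fixed $\sigma\in\mathcal{A}$, consider the right-multiplication map
\begin{equation}
	\phi_\sigma : [\pid]\rightarrow[\pid]\sigma,\qquad \pi\mapsto\pi\circ\sigma.
\end{equation}
It is surjective onto $[\pid]\sigma$ by the definition of the coset, and it is injective because $\pi_1\circ\sigma=\pi_2\circ\sigma$ implies $\pi_1=\pi_2$ after composing on the right with $\sigma^{-1}$ (valid since $\mathcal{A}$ is a group). Thus $\phi_\sigma$ is a bijection and $|[\pid]\sigma|=|[\pid]|$ for every $\sigma$.

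Combining the two facts gives $|\mathcal{A}|=E\cdot|[\pid]|$, and since $\mathcal{A}$ is finite and $|[\pid]|\neq 0$ we may divide to obtain $E=\tfrac{|\mathcal{A}|}{|[\pid]|}$, with every class of the common size $|[\pid]|$. I do not anticipate any real obstacle here, as every ingredient (subgroup property, coset description, finiteness of $\mathcal{A}=\BLTA(S)$) is already in hand; the only point worth stating explicitly is the bijection $\phi_\sigma$ that forces all classes to share the same cardinality, which is what makes the division meaningful.
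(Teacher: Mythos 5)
Your proof is correct and follows essentially the same route as the paper, which simply cites Lemma~\ref{lem:ec_cosets} together with Lagrange's theorem; you have merely unpacked Lagrange's argument (partition into right cosets of $[\pid]$ plus the right-multiplication bijection $\pi \mapsto \pi \circ \sigma$ giving equal coset sizes) in full detail. No gaps.
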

\begin{proof}
Follows from Lemma~\ref{lem:ec_cosets} and Lagrange’s Theorem. 
\end{proof}
The number $E$ of equivalence classes provides an upper bound on the number of different candidates of an $\adec$ decoder. 
In fact, all the permutations included in an equivalence class give the same result under $\adec$ decoding. 
Then, when selecting permutations for the $\adec$ decoder, it is fundamental to select permutations of different equivalence classes. 
Moreover, it is useless to select more permutations than the number of equivalence classes. 
This observation permits to control the list size of an $\adec$ decoder. 
However, the proposed relation does not assure that, for some value of $y$, two $\EC$s do not produce
the same candidate; in practice, our relation permits to calculate the maximum number of different results under adec, providing an upper bound of parameter $M$ of an AE-dec decoder.

Equipped with the definition of equivalence classes under a given decoder, it is worth analyzing the \emph{redundancy} of an automorphism set $\mathcal{L}$ used in an AE decoder.
Let us denote by $M=|\mathcal{L}|$ the number of automorphisms used in the AE decoder; if the elements of $\mathcal{L}$ are randomly drawn from $\mathcal{A} \setminus [\pid]$, we call $P_{\geq 1}(M)$ the probability of having redundant automorphisms in the set, i.e. that at least two automorphisms belong to the same equivalence class and hence provably provide always the same candidates. 
The probability of having non-redundant sets is connected to the birthday problem; if we denote by $E = |\EC|$ the number of equivalence classes under the chosen decoder, this probability is given by
\begin{equation}
P_{\geq 1}(M) = 1-P_0(M) = 1-\prod_{i=0}^{M-1}\frac{E-i}{E},
\end{equation}
where $P_m(M)$ represents the probability of having exactly $m$ redundant automorphisms.
The probability $P_m(M)$ is more difficult to compute, since the $m$ automorphism may belong to the same equivalence classes or from different equivalence classes. 
The second possibility being more likely to happen, we provide a lower bound of $P_m(M)$ as the probability of having $m$ EC with exactly $2$ representatives as
\begin{equation}\label{eq:lowerbound}
P_m(M)\geq \binom{E}{m}\prod_{k=0}^{m-1}\binom{M-2k}{2}\prod_{k=0}^{M-2m-1}\frac{E-i-m}{E}.
\end{equation}
The effect of the redundancy of $\mathcal{L}$ will be shown in Section~\ref{sec:nem_res}. 
In the next section, we will analyze $\EC$s under SC decoding.

% --------------------------------------------------------------------
\subsection{Absorption group of Successive Cancellation decoders}
\label{sec:SC_abs}

We assume the standard algorithm for SC decoding for the AWGN channel, operating with the min-approximation of the boxplus operation, with the kernel decoding equations
\begin{align*}
	l^-  &=  \sgn(l_0) \cdot \sgn(l_1) \cdot \min\{|l_0|,|l_1|\}  , \\
	l^+  &=  l_0 + l_1 . 
\end{align*}
This decoder is independent\footnote{The decoder with the exact boxplus operation has similar properties, but depends on the SNR.} of the SNR and can directly operate on the channel outputs $y$ rather than (properly scaled) channel LLRs. 
Here we propose again Definitions~\ref{def:dec},\ref{def:adec} for SC decoding. 
%As generalisation we also define the SC decoding function wrapped by an automorphism.
\begin{definition}[SC decoding function]
The function
\begin{align}
	\scdec  :  \realnumber^N  &\rightarrow  \code{C}       \notag\\
	y   	           &\mapsto      x = \scdec(y)  
\end{align}
denotes the decoding function implemented by the SC algorithm (with min-approximation).
\end{definition}
\begin{definition}[Automorphism SC decoding function]
	For an automorphism $\pi \in \mathcal{A}$, the function
	\begin{align}
		\ascdec  :  \realnumber^N \times \mathcal{A}  &\rightarrow  \code{C}  \\
		y                                             &\mapsto      x = \ascdec(y;\pi)  
	\end{align}
	with
	\begin{equation}
		\ascdec(y;\pi) \triangleq \pi^{-1}( \scdec( \pi( y ) ) ) .
	\end{equation}
	is called the automorphism SC decoding function.
\end{definition}
Again, $\ascdec$ function may be seen as a decoding function with parameter $\pi$.  
The notion of SC-absorbed automorphisms group $[\pid]$ has been introduced in \cite{geiselhart2020automorphism} as the set of permutations that are SC decoding invariant. 
%\comment{[Here is text missing]} i.e., these permutations are absorbed by the SC decoder \cite{geiselhart2020automorphism}, or in other words, SC decoding is invariant to these permutations.  In \cite{geiselhart2020automorphism} it was conjectured that $[\pid] = \mathcal{L}$, however we will prove that this statement is not true. 
In the same paper, authors proved that $\LTA$ automorphisms are absorbed under SC decoding; this result is reported in the following lemma:
\begin{lemma}[SC-absorption of $\LTA$]
\label{lem:LTA_SC}
The group $\LTA$ is SC-absorbed, $\LTA \le [\pid]$.
\begin{proof}
The proof is provided in \cite{geiselhart2020automorphism}.  
%	
%	Alternatively \comment{[and TBD]}, denote $\el{i}{j}$ a matrix with ones on the diagonal and an additional one at row $i$ and column $j$, and consider that every lower-triangular matrix $L$ can be written as a product of such matrices,
%	\begin{equation}
%		L = \prod_{l=1}^m \el{i_l}{j_l}  ,
%	\end{equation}
%	where $m$ is the number of ones in $L$ below the diagonal, and $i_l > j_l$ such that each $\el{i_l}{j_l}$ is lower-triangular as well.
%	We then only have to proof that a GL permutation corresponding to $\el{i}{j}$, $i>j$, is SC-absorbed.
%	
%	\comment{Concept:}
%	To do so, assume that the decoding result is the all-zero codeword, and write $\el{i}{j}$ as a block diagonal matrix with all ones and a single block $\el{1}{1}$, i.e., a diagonal matrix with a single one in the lower-left corner.   Only $\el{1}{1}$ has an actual effect on the decoding: pairs of input LLRs are swapped but without any effect on output LLRs as the $l^=$ operation is symmetric and also the $l^+$ operation as the feedback is assumed to be zero. Thus all computed intermediate LLRs are identical to the case without the permutation.  If the decoding result is not zero, the symmetry of the decoder can be invoked and above idea be applied to $y \odot x$ with the decoding result being the all-zero codeword.
\end{proof}
\end{lemma}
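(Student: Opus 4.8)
The plan is to first recast SC-absorption as an equivariance statement: by exactly the manipulation used in the inverse lemma above, $\pi \in [\pid]$ is equivalent to $\scdec(\pi(y)) = \pi(\scdec(y))$ for all $y$, i.e. SC decoding commutes with $\pi$. Since $[\pid]$ has already been shown to be a subgroup of $\mathcal{A}$, it suffices to verify this equivariance on a generating set of $\LTA$. Lemma~\ref{lem:elem_dec} supplies exactly such a set: every $\pi_{(A,b)} \in \LTA$ factors as a composition of elementary row-addition transforms $\elperm{i}{j}$ with $i>j$ together with a translation $\tau \in \T$. I would therefore reduce the claim to showing that each translation and each elementary generator $\elperm{i}{j}$ with $i>j$ is SC-equivariant, and then invoke the group structure of $[\pid]$ to conclude $\LTA \le [\pid]$.

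The core of the argument is an induction on the number of stages $n$ that mirrors the recursive structure of the SC decoder. Writing the received vector as two halves and recalling that SC processes them through the symmetric kernel operations $l^- = \sgn(l_0)\sgn(l_1)\min\{|l_0|,|l_1|\}$ and $l^+ = l_0 + l_1$ before recursing, the key observation is that the lower-triangular shape of $A$ makes each $\LTA$ permutation respect this nested splitting: the coordinate on which the kernel acts is transformed bijectively by its diagonal entry alone, while the remaining coordinates are transformed by the smaller lower-triangular block. Consequently $\pi$ maps each sub-block of the decoding tree onto a sub-block, inducing on it an $\LTA$ permutation of one fewer stage to which the inductive hypothesis applies; a translation on the split coordinate merely swaps the two children, which is harmless because both kernel operations are symmetric in their two inputs. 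The base case $n=1$ reduces to checking that the length-two UPO code is invariant under the relevant swap, which holds precisely because the code follows the UPO framework (Theorem~\ref{theo_LTA_UPO}).

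I expect the main obstacle to lie in the $l^+$ step, where the kernel uses the hard decisions already produced by the minus-child decoder. Equivariance of the pure LLR propagation is not enough on its own; one must verify that the partial decisions fed back are themselves permuted exactly as the corresponding code positions, so that the argument of the sum-operation still matches after permutation. This is where the triangular (causal) structure is essential: because $A$ is lower-triangular, the permutation never lets a ``later'' decoding branch influence an ``earlier'' one, so the sequential schedule of SC is preserved and the feedback remains consistent. Making this precise---tracking the re-encoded partial sums through the permutation and confirming they agree with the decisions computed on the permuted input---is the delicate bookkeeping that the induction must carry, and it is exactly the point at which UPO-compliance (guaranteeing, via Lemma~\ref{lem:perm_sum}, that the permuted monomials stay in $\mathcal{G}$) ensures each sub-problem is again a valid polar code.
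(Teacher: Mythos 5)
The paper does not actually prove this lemma---its ``proof'' is a citation to \cite{geiselhart2020automorphism}---so what you wrote is best judged as a reconstruction of that reference's argument. Your scaffolding (recasting absorption as the equivariance $\scdec(\pi(y))=\pi(\scdec(y))$, reducing to the generators of Lemma~\ref{lem:elem_dec} via the subgroup property of $[\pid]$, and inducting over the SC recursion) is the right shape, and you correctly sense that the partial-sum feedback at the $l^+$ stage is the crux. But your key structural claim is wrong. Since $x=uT_2^{\otimes n}$, the root kernel pairs positions $i$ and $i+N/2$, i.e.\ positions whose binary expansions differ in the \emph{most} significant coordinate $v_{n-1}$; under $\pi_{(A,b)}\in\LTA$ that coordinate maps to $\sum_{j\le n-1}a_{n-1,j}v_j+b_{n-1}$, governed by the whole last row of $A$, not ``by its diagonal entry alone.'' Pairs do map onto pairs (the leading $(n-1)\times(n-1)$ block acts on the pair index as a smaller $\LTA$ permutation), but the within-pair swap is \emph{pair-dependent}, given by the affine function $s(v_0,\ldots,v_{n-2})=\sum_{j<n-1}a_{n-1,j}v_j+b_{n-1}$. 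Your dichotomy---diagonal-only action on the split coordinate plus, at worst, a global child swap from a translation---covers the generators $\elperm{i}{j}$ with $i\le n-2$ only at the top level and silently drops $\elperm{n-1}{j}$, $j<n-1$; worse, every generator $\elperm{i}{j}$ eventually becomes this hard case once the recursion reaches depth $n-1-i$, where coordinate $i$ is the split coordinate. (If you instead had the LSB split in mind, the picture fails differently: pairs then do not even map to pairs, since image coordinates $i\ge 1$ depend on $v_0$ through $a_{i,0}$.)

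This omission breaks the induction at exactly the step you flagged. The $l^-$ stage is indeed swap-invariant, but the plus stage in the actual decoder is $(-1)^{\hat b}l_0+l_1$ with $\hat b$ the re-encoded partial sum, hence \emph{not} symmetric in $(l_0,l_1)$; a (pair-dependent or even global) swap sign-flips the plus-branch input exactly where the re-encoded minus-branch codeword $\hat{c}$ is one, so the plus sub-decoder is run on a word offset by the cross-term $\hat{c}\cdot s$. Closing the induction therefore requires two facts your sketch never states: (i) SC decoding is equivariant under sign flips by a codeword, and (ii) $\C^-\cdot s\subseteq\C^+$, i.e.\ the product of any minus-subcode codeword with a degree-one affine polynomial in $V_0,\ldots,V_{n-2}$ lies in the plus subcode (the analogue of $v\in\mathcal{R}(r-1,n-1)\Rightarrow v\cdot s\in\mathcal{R}(r,n-1)$). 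Fact (ii) is where the decreasing/UPO property genuinely enters---not, as you claim, at the $n=1$ base case or through Lemma~\ref{lem:perm_sum}'s automorphism criterion ensuring ``valid sub-problems'' (one does need both subcodes to remain decreasing so the hypothesis propagates, but that alone cannot absorb the cross-term). Note in particular that even the pure translation $\tau$ with $b_{n-1}=1$ is not ``harmless by symmetry'': the global swap changes the plus-branch word from $u''$ to $u'+u''$, and consistency already needs $\C^-\subseteq\C^+$. Without the pair-dependent swap analysis and the cross-term lemma, your induction establishes essentially nothing beyond the rate-0/rate-1 leaves, whereas the cited proof that the paper adopts handles precisely this re-encoding bookkeeping.
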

%Equipped with this property, we can now prove that $\LTA$ absorption for SC decoding is equivalent to the UPO:
%This property can be extended, showing that for a polar code $\LTA$ absorption under SC decoding is equivalent to fulfilling the UPO framework:
%\begin{theorem}
%The group $\LTA$ is SC-absorbed, i.e. $\LTA < [\pid]$, if and only if the information set of the polar code fulfills the UPO framework
%\begin{proof}
%\comment{TBD}\\
%Hint: one direction is given by Lemma~\ref{lem:LTA_SC}, the other may be proven through $I_{(i,j)}$ matrices.
%\end{proof}
%\end{theorem}
In \cite{geiselhart2020automorphism} it was also conjectured that $[\pid] = \LTA$, however this statement is not true. 
In fact, the full absorption group of a polar code may be larger than $\LTA$:
\begin{lemma}
\label{lem:BLTA_2}
If $\BLTA(S)$ is the affine permutation group of a polar code with $s_1>1$, then $\BLTA(2,1,\dots,1) < [\pid]$.
\begin{proof}
Given $\pi_{(A,b)} \in \mathcal{A}$, to set $a_{0,1} = 1$ in matrix $A$ corresponds to map variable $X_0$ to variable $a_{0,0}+X_1$; in practice it represents a permutation that identically scrambles 4-uples of the codeword. 
To be more precise, the entries $a_{0,0},a_{0,1}$ of $A$ represent a permutation of the first 4 entries of the codeword, which is repeated identically for every subsequent block of 4 entries of the vector. 
Under SC decoding, this represents a permutation of the entries of the last $4\times 4$ decoding block; in practice, the difference between the $\ascdec$ decoding of two codewords permuted according two  permutations differing only for entries $a_{0,0},a_{0,1}$ is that entries of the leftmost $4\times 4$ decoding block are permuted. 
%Then, in order to understand what happens to the candidate codeword calculated by each $\scdec$ decoder, we need to track what happens to a $4\times 4$ decoding block when the input LLRs are permuted. 
Since the polar code follows UPO as stated by Theorem~\ref{theo:UPO_BLTA}, each block of 4 entries of the input vector can be described as one of the following sequences of frozen (\textit{F}) and information (\textit{I}) bits, listed in increasing rate order with the notation introduced for fast-SC decoders \cite{Fast-SC_SPCREP}; 
\begin{itemize}
	\item \textit{[FFFF]}: this represents a rate-zero node, and returns a string of four zeroes no matter the input LLRs; this is independent of the permutation. 
		\item \textit{[FFFI]}: this represents a repetition node, and returns a string of four identical bits given by the sign of the sum of the LLRs; this is independent of the permutation. 
		\item \textit{[FFII]}: this case is not possible if $s_1>1$.
		\item \textit{[FIII]}: this represents a single parity check node, and returns the bit representing the sign of each LLR while the smallest LLR may be flipped if the resulting vector has even Hamming weight; permuting them back gives the same result for the two decoders. 
		\item \textit{[IIII]}: this represents a rate-one node, and returns the bit representing the sign of each LLR; permuting them back gives the same result for the two decoders. 
	\end{itemize}
	As a consequence, changing entries $a_{0,0},a_{0,1}$ of the permutation matrix (while keeping it invertible) does not change the result of the $\ascdec$ decoder.
	\end{proof}
\end{lemma}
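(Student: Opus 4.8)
The plan is to leverage the group structure of $[\pid]$ together with Lemma~\ref{lem:LTA_SC}. Since $\LTA \le [\pid]$ and $[\pid]$ is a group, it suffices to show that a single additional generator is SC-absorbed. Indeed, the only above-diagonal entry that $\BLTA(2,1,\dots,1)$ permits beyond $\LTA$ is $a_{0,1}$; as the first block $\GL(2,\Fbin)$ is generated by the two elementary transformations toggling $a_{0,1}$ and $a_{1,0}$, of which $a_{1,0}$ already lies in $\LTA$, the whole group $\BLTA(2,1,\dots,1)$ is generated by $\LTA$ together with the single permutation $\rho$ that toggles $a_{0,1}$. Everything therefore reduces to proving $\rho \in [\pid]$.

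First I would describe the induced permutation. Because $\rho$ mixes only the two least significant variables $\bar V_0,\bar V_1$ and fixes $\bar V_2,\dots,\bar V_{n-1}$, the associated permutation of $\Zbin$ leaves the high bits $v_2,\dots,v_{n-1}$ invariant, and hence maps each block of four consecutive positions (those sharing the same high bits) onto itself, permuting its four members in one and the same way across all blocks. Under the SC schedule these two coordinates are resolved at the deepest recursion level, so $\rho$ reshuffles only the inputs of each length-$4$ leaf node, identically from node to node. It then remains to prove that each such leaf, after its inputs are permuted and its outputs permuted back, returns the same hard decisions; equivalently, that the leaf decoder is equivariant under an arbitrary permutation of its four positions.

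The crucial step is to enumerate the admissible frozen/information patterns of a leaf and to use the hypothesis $s_1>1$. By Theorem~\ref{theo:UPO_BLTA} the code obeys the UPO, and within a leaf the four positions form the chain $00 \preceq 10 \preceq 01 \preceq 11$; hence the frozen set is a prefix of this chain, leaving only \textit{[FFFF]}, \textit{[FFFI]}, \textit{[FFII]}, \textit{[FIII]}, \textit{[IIII]}. The size hypothesis is exactly what discards the one problematic pattern \textit{[FFII]}: since the first diagonal block has size at least two, the coordinate swap $\bar V_0 \leftrightarrow \bar V_1$ belongs to $\BLTA(S)=\mathcal{A}$ and is therefore an automorphism, so the monomial set, and hence the frozen set, must be invariant under exchanging the $10$ and $01$ positions of every leaf; among the five admissible patterns, \textit{[FFII]} is the unique one not invariant under this exchange, and is thus impossible. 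I expect this exclusion to be the main obstacle, since it is the only place where the size hypothesis enters and it requires linking the block profile of $\mathcal{A}$ to the swap-invariance of the frozen set.

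For each surviving pattern I would then verify equivariance under an arbitrary permutation of the four positions. The rate-zero node \textit{[FFFF]} always returns the all-zero block; the repetition node \textit{[FFFI]} returns four equal bits given by the sign of the sum of the input LLRs; the single-parity-check node \textit{[FIII]} returns the hard decisions with the least-reliable bit possibly flipped; and the rate-one node \textit{[IIII]} returns the hard decision of each LLR. In the first two cases the output is manifestly permutation-independent, while in the last two a permutation merely carries the hard decisions and the least-reliable position along and preserves the overall parity, so permuting the inputs only permutes the outputs. In every admissible case permuting the leaf inputs and then permuting the outputs back leaves $\scdec$ unchanged, so $\rho \in [\pid]$; closing under the group operation with $\LTA$ yields $\BLTA(2,1,\dots,1) \subseteq [\pid]$. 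Since $\BLTA(2,1,\dots,1)$ properly contains $\LTA$, this already shows that the absorption group strictly exceeds $\LTA$, disproving the earlier conjecture $[\pid]=\LTA$.
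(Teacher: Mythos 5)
Your proof is correct, and its core coincides with the paper's: the transformation touching only $a_{0,0},a_{0,1}$ induces one and the same permutation inside every block of four consecutive codeword positions, so absorption reduces to permutation-equivariance of the length-$4$ fast-SC leaves, settled by the same five-pattern case analysis. You organize the argument more tightly in three places, and this is where the two proofs genuinely differ. First, instead of reasoning about a generic matrix in $\BLTA(2,1,\dots,1)$, you invoke the subgroup property of $[\pid]$ together with Lemma~\ref{lem:LTA_SC} to reduce everything to the single generator $\elperm{0}{1}$; this is legitimate, since $\GL(2)$ is generated by the two transvections and the lower one, the translations and the strictly lower-triangular part all lie in $\LTA$. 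Second, you derive the admissible leaf patterns from the chain $4k \preceq 4k+1 \preceq 4k+2 \preceq 4k+3$, so that the frozen positions must form a prefix of the chain, whereas the paper simply lists the five patterns. Third---and this is the most valuable addition---you actually prove the exclusion of \textit{[FFII]}: since $s_1\geq 2$ places the swap $\bar{V}_0 \leftrightarrow \bar{V}_1$ inside $\mathcal{A}=\BLTA(S)$, and a variable permutation maps each monomial to a single monomial, Lemma~\ref{lem:perm_sum} forces $\mathcal{I}$ to be invariant under exchanging positions $4k+1$ and $4k+2$, which among the five prefix patterns only \textit{[FFII]} violates; the paper asserts this step with no justification at all. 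What each approach buys: the paper's direct phrasing stays closer to the decoder implementation, while your generator-based route makes the group-theoretic bookkeeping (closure, translations, below-diagonal entries) automatic and isolates the one nontrivial verification. Two glosses you share with the paper: the claim that the two least-significant coordinates survive to the length-$4$ leaves---so that the upper f/g stages, including the fed-back partial sums, commute lane-wise with the permutation---is asserted rather than proved in both texts, and the SPC case silently assumes a tie-breaking rule for the least-reliable position that is consistent under permutation.
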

%We conjecture that, for SC decoding, $[\pid]$ is a $\BLTA$ space. Colleagues in China proved this in \cite{SC_invariant}, cite them
Recently, authors in \cite{SC_invariant} extended this result to other $\BLTA$ structures, proving that, under SC decoding, $[\pid]$ is a $\BLTA$ space, and providing guidance to find the profile of such a group. 
Here we reconsider this result, providing a non-constructive proof based on algebraic reasoning that does not consider the frozen set of the code.
%of this result, namely an algebraic proof that does not consider the frozen set of the code.
\begin{theorem}
\label{theo:BLTA_pid}
If $\mathcal{A} = \BLTA(S)$, then $[\pid] = \BLTA(S_{\pid})$, with 
\ifdefined\COLS
	$S_{\pid} = (s_{1,1},s_{1,2},\ldots,s_{1,j_1},s_{2,1},\ldots,s_{2,j_2},\ldots,s_{s_l,1},\ldots,$ $s_{s_l,j_l})$ 
\else
	$S_{\pid} = (s_{1,1},s_{1,2},\ldots,s_{1,j_1},s_{2,1},\ldots,s_{2,j_2},\ldots,s_{s_l,1},\ldots,s_{s_l,j_l})$ 
\fi
%\begin{equation}
%\label{eq:BLTA_pid}
%[\pid] = \BLTA(S_{\pid}) = \BLTA(s_{1,1},s_{1,2},\ldots,s_{1,j_1},s_{2,1},\ldots,s_{2,j_2},\ldots,s_{s_l,1},\ldots,s_{s_l,j_l}),
%\end{equation}
where $s_{i,1}+\ldots+s_{i,j_i}=s_i$ for every $1\leq i \leq l$. 
\begin{proof}
The proof follows from Lemma~\ref{lem:BLTA_2} and from the same line of reasoning of the proof of sufficient condition of Theorem~\ref{theo:UPO_BLTA}.
\end{proof}
\end{theorem}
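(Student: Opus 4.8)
The plan is to sandwich $[\pid]$ between $\LTA$ and $\mathcal{A}=\BLTA(S)$ and then argue that the only subgroups arising this way from SC-absorption are themselves $\BLTA$ groups whose profile refines $S$. By Lemma~\ref{lem:LTA_SC} we have $\LTA\le[\pid]$, and since $[\pid]$ was already shown to be a subgroup of $\mathcal{A}$, every element of $[\pid]$ is block-lower-triangular with profile $S$. Hence the only freedom beyond $\LTA$ lies in the above-diagonal entries sitting inside the diagonal blocks of $S$, and the whole task reduces to determining which of these positions can be nonzero for an SC-absorbed automorphism.

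First I would establish a neighbor-propagation property for $[\pid]$ mirroring Lemma~\ref{lem:blocky}: if $\elperm{i}{j}\in[\pid]$ (with $i<j$ inside a block of $S$), then $\elperm{i+1}{j},\elperm{i}{j-1}\in[\pid]$ as well. The base case is precisely Lemma~\ref{lem:BLTA_2}, which shows that the leftmost off-diagonal entry is absorbed by analysing the admissible $4\times4$ node patterns \textit{[FFFF]}, \textit{[FFFI]}, \textit{[FIII]}, \textit{[IIII]}. I would rerun the same node-by-node SC argument on the $4\times4$ decoding sub-block associated with an arbitrary adjacent coordinate pair $(i,j)$ to show that the absorption carries over to the neighboring elementary permutations. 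Collecting these positions, the set of absorbable above-diagonal entries forms an overlapping block-triangular ($\OBLT$) support pattern inside each block of $S$, exactly as in the proof of Theorem~\ref{theo:UPO_BLTA}.

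Next I would invoke the group property of $[\pid]$ and reuse, almost verbatim, the closure argument from the sufficient part of Theorem~\ref{theo:UPO_BLTA}. If the support pattern contained two overlapping sub-blocks, then picking $\elmat{i}{s-1}$ from the upper sub-block and $\elmat{s-1}{j}$ from the lower one and multiplying them yields a matrix with $a_{i,j}=1$ outside both sub-blocks; since $[\pid]$ is closed under composition, this forces the overlapping sub-blocks to merge. Thus the $\OBLT$ pattern must collapse to a genuine, non-overlapping block-lower-triangular partition, splitting each block $s_i$ of $S$ as $s_{i,1}+\dots+s_{i,j_i}=s_i$ and giving $[\pid]=\BLTA(S_{\pid})$. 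The inclusion $\BLTA(S_{\pid})\subseteq[\pid]$ then follows because the elementary generators of this pattern are absorbed (by the propagation from Lemma~\ref{lem:BLTA_2}) and $[\pid]$ is a group containing $\LTA$; the reverse inclusion is immediate since the pattern was defined as the union of the supports of the elements of $[\pid]$. Finally, because $[\pid]\le\BLTA(S)$, every block of $S_{\pid}$ lies inside a block of $S$, so $S_{\pid}$ refines $S$.

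The hard part will be the neighbor-propagation step: Lemma~\ref{lem:blocky} could exploit the clean monomial partial order ($m_{t_1}\preceq m_{t'}$, and so on), whereas here the propagation must be read off the SC decoder itself. Generalizing the rate-zero, repetition, single-parity-check and rate-one node analysis of Lemma~\ref{lem:BLTA_2} to an arbitrary adjacent pair of coordinates, and verifying that the excluded pattern \textit{[FFII]} never obstructs the argument in the interior of an $S$-block, is the delicate point; once the $\OBLT$ structure is secured, the collapse to $\BLTA$ is a direct transcription of the earlier closure argument.
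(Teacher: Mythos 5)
Your overall architecture --- sandwiching $[\pid]$ between $\LTA$ and $\mathcal{A}=\BLTA(S)$, extracting an $\OBLT$-type support pattern for the absorbed elementary permutations, and collapsing it with the closure argument from the sufficient condition of Theorem~\ref{theo:UPO_BLTA} --- is exactly the route the paper takes (its proof is literally ``Lemma~\ref{lem:BLTA_2} plus the reasoning of Theorem~\ref{theo:UPO_BLTA}''). The genuine gap is in the one step you yourself flag as delicate and propose to settle by decoder analysis. Re-running the node taxonomy of Lemma~\ref{lem:BLTA_2} on ``the $4\times4$ decoding sub-block associated with an arbitrary adjacent pair $(i,j)$'' cannot work: the entries $a_{0,0},a_{0,1}$ are special precisely because they permute the codeword \emph{within $4$-tuples}, so only the length-$4$ constituent codes of the final SC stage are affected, and UPO reduces these to the five patterns listed. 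A general $\elperm{i}{j}$ scrambles blocks of length $2^{\max(i,j)+1}\geq 8$, whose frozen/information patterns are not exhausted by the rate-$0$/repetition/SPC/rate-$1$ taxonomy; and, decisively, \emph{which} elementary permutations are SC-absorbed genuinely depends on the frozen set (this is what \cite{SC_invariant} computes constructively, and why the theorem only asserts existence of some refinement $S_{\pid}$). A frozen-set-blind decoder argument therefore cannot establish your propagation step --- note that the paper explicitly advertises its proof as algebraic reasoning ``that does not consider the frozen set of the code.''

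The repair is that the propagation you want is free, given facts you already invoked: $[\pid]$ is a subgroup of $\mathcal{A}$, and $\LTA\leq[\pid]$ by Lemma~\ref{lem:LTA_SC}. Over $\Fbin$ every off-diagonal elementary matrix is an involution, and for pairwise distinct $i,k,j$ one checks $\elmat{i}{k}\,\elmat{k}{j}\,\elmat{i}{k}\,\elmat{k}{j}=\elmat{i}{j}$ (a Chevalley commutator identity). Taking one factor lower-triangular gives $\elmat{i+1}{i}\,\elmat{i}{j}\,\elmat{i+1}{i}\,\elmat{i}{j}=\elmat{i+1}{j}$ and $\elmat{i}{j}\,\elmat{j}{j-1}\,\elmat{i}{j}\,\elmat{j}{j-1}=\elmat{i}{j-1}$; since the lower-triangular factors lie in $[\pid]$ and $[\pid]$ is a group, this is exactly the analogue of Lemma~\ref{lem:blocky} for $[\pid]$, with no SC analysis and no dependence on the frozen set. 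Taking both factors absorbed, the same identity is the merging step that forces overlapping sub-blocks to fuse, so the $\OBLT$ collapse of Theorem~\ref{theo:UPO_BLTA} transcribes verbatim, as you intended. One further overclaim to fix: the inclusion $[\pid]\subseteq\BLTA(S_{\pid})$ is not ``immediate since the pattern was defined as the union of the supports'' --- a general group element whose matrix has a nonzero entry outside the candidate pattern must first be reduced by left/right multiplication with $\LTA$ elements (a Bruhat-type reduction) to exhibit an offending \emph{elementary} inside $[\pid]$; this again uses only group closure, but it is a step to be argued, not a tautology.
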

In practice, to construct the profile $S_{\pid}$, the block of size $s_i$ of $S$ is divided into sub-blocks of size $s_{i,1},\ldots,s_{i,j_i}$.
%In the following, we suppose the absorption group $[\pid]$ to be a $\BLTA$; more precisely, we suppose 
%\begin{equation}
%[\pid] = \BLTA(S_{\pid}) = \BLTA(s_{1,1},s_{1,2},\ldots,s_{1,j_1},s_{2,1},\ldots,s_{2,j_2},\ldots,s_{s_l,1},\ldots,s_{s_l,j_l}),
%\end{equation}
%where $s_{i,1}+\ldots+s_{i,j_i}=s_i$ for every $1\leq i \leq l$. 
%In practice, the block of size $s_i$ is divided in sub-blocks of size $s_{i,1},\ldots,s_{i,j_i}$.
Given that both $\mathcal{A}$ and $[\pid]$ have a $\BLTA$ structure under SC decoding, in order to count the number of $\EC$s under this decoder we need to calculate the size of such a group. 
\begin{lemma}\label{lem:sizeBLTA}
The size of $\BLTA(S)$, with $S=(s_1,\dots,s_t)$ and $\sum_{i=1}^t s_i = n$, is: 
\begin{align}\label{eq:sizeBLTA}
	|\BLTA(S)| &= 2^{\frac{n(n+1)}{2}}\cdot \prod_{i=1}^t \left( \prod_{j=2}^{s_i} \left( 2^j-1 \right) \right).
\end{align}
\begin{proof}
It is well known that $|\GL(m)| = \prod_{i=0}^{m-1} \left( 2^m-2^i \right)$. 
From this, we have that
\begin{align*}
	|\GL(m)| & = \prod_{i=0}^{m-1} \left( 2^m-2^i \right) = \\
	& = \prod_{i=0}^{m-1} 2^i \left( 2^{m-i}-1 \right) = \\
	& = \left( \prod_{j=0}^{m-1} 2^j \right) \cdot \left( \prod_{i=0}^{m-1} \left( 2^{m-i}-1 \right) \right) = \\
	& = 2^{\sum_{j=0}^{m-1} j} \cdot \prod_{i=1}^{m} \left( 2^i-1  \right) = \\
	& = 2^{\frac{m(m-1)}{2}} \cdot \prod_{i=2}^{m} \left( 2^i-1  \right).
\end{align*}
It is worth noting that we rewrote the size of $\GL(m)$ as the product of the number of lower-triangular matrices and the product of the first $n$ powers of two, diminished by one. 
This property can be used to simplify the calculation of $\BLTA(S)$.  
In fact, each block of the $\BLTA$ structure forms an independent $\GA(s_i)$ space, having size $|\GA(s_i)|$. 
All the entries above the block diagonal are set to zero, so they are not taken into account in the size calculation, while the entries below the block diagonal are free, and can take any binary value. 
Then, the size of $\BLTA(S)$ can be calculated as $|\LTA(n)| = 2^{\frac{n(n+1)}{2}}$ multiplied by the product of the first $s_i$ powers of two, diminished by one, for each size block $s_i$, which concludes the proof. 
\end{proof}
\end{lemma}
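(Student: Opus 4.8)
The plan is to count the elements of $\BLTA(S)$ by decomposing each transformation $T_{(A,b)}$ into independent degrees of freedom and multiplying the corresponding counts. An element is determined by the translation vector $b \in \Fbin^n$ together with the block-lower-triangular invertible matrix $A$, and these two choices are completely independent. I would therefore first factor out the $2^n$ choices for $b$ and then reduce the problem to counting the admissible matrices $A$.

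For the matrix part, I would partition the nonzero region of $A$ into the $t$ diagonal blocks of sizes $s_1,\dots,s_t$ and the strictly-below-block-diagonal region. Invertibility of a block-triangular matrix is equivalent to invertibility of each of its diagonal blocks, so each diagonal block ranges freely over $\GL(s_i)$, contributing $\prod_{i=1}^t|\GL(s_i)|$, while the entries strictly below the block diagonal are entirely unconstrained, contributing $2^{\sum_{i>j}s_is_j}$, where $\sum_{i>j}s_is_j$ counts those entries. This yields the intermediate expression $|\BLTA(S)| = 2^n \cdot 2^{\sum_{i>j}s_is_j}\cdot\prod_{i=1}^t|\GL(s_i)|$.

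To pass to the closed form, I would substitute the identity $|\GL(m)| = 2^{m(m-1)/2}\prod_{j=2}^m(2^j-1)$ already established in the body of the lemma, which separates each $|\GL(s_i)|$ into a power-of-two part and a product of Mersenne-type factors. Collecting all powers of two, the exponent becomes $n + \sum_{i>j}s_is_j + \sum_i \tfrac{s_i(s_i-1)}{2}$, while the remaining factors assemble exactly into $\prod_{i=1}^t\prod_{j=2}^{s_i}(2^j-1)$, matching the claimed right-hand side.

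The only genuine computation — and the step I expect to be the main obstacle — is verifying that the power-of-two exponent collapses to $\tfrac{n(n+1)}{2}$ regardless of the profile $S$. Here I would use $\sum_{i>j}s_is_j = \tfrac{1}{2}\big(n^2-\sum_i s_i^2\big)$ together with $\sum_i\tfrac{s_i(s_i-1)}{2} = \tfrac{1}{2}\big(\sum_i s_i^2 - n\big)$; the $\sum_i s_i^2$ contributions cancel, leaving $\tfrac{1}{2}(n^2-n)$, and adding the $n$ from the translation vector produces $\tfrac{n(n+1)}{2}$. The profile-independence of this cancellation is precisely the conceptual reason the lower-triangular prefactor $2^{n(n+1)/2}$ appears unchanged for every block structure, with the entire dependence on $S$ confined to the product of $(2^j-1)$ terms.
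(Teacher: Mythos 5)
Your proposal is correct and follows essentially the same route as the paper: decompose $\BLTA(S)$ into the free translation $b$, the diagonal blocks ranging over $\GL(s_i)$, and the unconstrained entries below the block diagonal, then substitute the factorization $|\GL(m)| = 2^{m(m-1)/2}\prod_{j=2}^{m}(2^j-1)$. The only difference is that you explicitly verify the profile-independent collapse of the power-of-two exponent $n + \sum_{i>j}s_is_j + \sum_i \tfrac{s_i(s_i-1)}{2} = \tfrac{n(n+1)}{2}$, a cancellation the paper absorbs implicitly into the assertion that the powers of two assemble into $|\LTA(n)|$, so your write-up is if anything slightly more rigorous on that arithmetic step.
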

Now we can prove the last result of this section, namely the number of $\EC$s under SC decoding. 
\begin{lemma}\label{lem:size_EC_SC}
A polar code of length $N=2^n$ with $\mathcal{A} = \BLTA(S)$, $S=(s_1,\dots,s_t)$ and $[\pid] = \BLTA(S_{\pid})$ defined in Theorem~\ref{theo:BLTA_pid} has 
\begin{equation}\label{eq:nb_ec_sc}
E=\frac{|\BLTA(S)|}{|\BLTA(S_{\pid})|} = \frac{\prod_{i=1}^t \left( \prod_{j=2}^{s_i} \left( 2^j-1 \right) \right)}{\prod_{i=1}^t \prod_{l=1}^{j_i} \left( \prod_{r=2}^{s_{i,l}} \left( 2^r-1 \right) \right)}.
\end{equation}
equivalence classes under SC decoding.
	\begin{proof}
	This follows from the application of Lemma~\ref{lem:number_EC} and Lemma~\ref{lem:sizeBLTA}.
	\end{proof}
\end{lemma}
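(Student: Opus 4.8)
The plan is to combine the counting formula of Lemma~\ref{lem:number_EC} with the cardinality formula of Lemma~\ref{lem:sizeBLTA}; no genuinely new argument is required, only a careful bookkeeping of the two $\BLTA$ sizes. First I would invoke the hypotheses: since $\mathcal{A} = \BLTA(S)$ and, by Theorem~\ref{theo:BLTA_pid}, the absorption group satisfies $[\pid] = \BLTA(S_{\pid})$, Lemma~\ref{lem:number_EC} gives directly
\begin{equation*}
E = \frac{|\mathcal{A}|}{|[\pid]|} = \frac{|\BLTA(S)|}{|\BLTA(S_{\pid})|}.
\end{equation*}

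Next I would apply Lemma~\ref{lem:sizeBLTA} to each cardinality separately. For the numerator this is immediate, giving $|\BLTA(S)| = 2^{n(n+1)/2}\prod_{i=1}^t \prod_{j=2}^{s_i}(2^j-1)$. For the denominator, the crucial observation is that $S_{\pid}$ is simply a \emph{refinement} of $S$: each block $s_i$ is partitioned into sub-blocks $s_{i,1},\ldots,s_{i,j_i}$ with $s_{i,1}+\ldots+s_{i,j_i}=s_i$, so the full list of sub-block sizes still sums to $\sum_{i=1}^t s_i = n$. Hence $S_{\pid}$ is itself a legitimate block profile of total dimension $n$, and Lemma~\ref{lem:sizeBLTA} applies verbatim, producing $|\BLTA(S_{\pid})| = 2^{n(n+1)/2}\prod_{i=1}^t \prod_{l=1}^{j_i}\prod_{r=2}^{s_{i,l}}(2^r-1)$.

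Finally, forming the quotient, the prefactor $2^{n(n+1)/2}$ — which counts the free entries of the strictly-lower-triangular part common to every block profile of the same total dimension $n$ — appears identically in numerator and denominator and therefore cancels. What remains is precisely the ratio of the block contributions, i.e.\ the formula stated in \eqref{eq:nb_ec_sc}. The only step demanding any attention, and hence the (minor) main obstacle, is confirming that $S_{\pid}$ qualifies as a valid profile summing to $n$ so that Lemma~\ref{lem:sizeBLTA} may be invoked unchanged; this is guaranteed precisely by the constraint $s_{i,1}+\ldots+s_{i,j_i}=s_i$ imposed in Theorem~\ref{theo:BLTA_pid}. No further computation is needed.
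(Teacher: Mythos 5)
Your proposal is correct and takes exactly the route of the paper's own (one-line) proof: apply Lemma~\ref{lem:number_EC} to write $E$ as the quotient of the two group orders, then evaluate both via Lemma~\ref{lem:sizeBLTA}, noting that $S_{\pid}$ is a refinement of $S$ still summing to $n$ so the common factor $2^{n(n+1)/2}$ cancels. Your writeup merely makes explicit the bookkeeping the paper leaves implicit; nothing is missing.
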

Lemma~\ref{lem:size_EC_SC} also provides an upper bound on the number of $\EC$s for a polar code under SC decoding; in fact, since $\LTA < [\pid]$, we have that
\begin{equation}
E=\frac{|\BLTA(S)|}{|[\pid]|} \leq \frac{|\BLTA(S)|}{|\LTA|} = \prod_{i=1}^t \left( \prod_{j=2}^{s_i} \left( 2^j-1 \right) \right).
\end{equation}
In the scenario of Lemma~\ref{lem:BLTA_2}, the upper bound provided by Lemma~\ref{lem:size_EC_SC} on the number of $\EC$s for a polar code under SC decoding can be rewritten as
\begin{equation}
E=\frac{|\BLTA(S)|}{|\BLTA(2,1,\ldots,1)|} = \frac{1}{3}\prod_{i=1}^t \left( \prod_{j=2}^{s_i} \left( 2^j-1 \right) \right).
\end{equation}
%%%We conjecture that, for SC decoding, $[\pid]$ is a $\BLTA$ space. Colleagues in China proved this in \cite{SC_invariant}, cite them
%%Recently, authors in \cite{SC_invariant} extended this result to other $\BLTA$ structures, proving that, under SC decoding, $[\pid]$ is a $\BLTA$ space. 
%%In the following, we suppose the absorption group $[\pid]$ to be a $\BLTA$; more precisely, we suppose 
%%\begin{equation}
%%[\pid] = \BLTA(S_{\pid}) = \BLTA(s_{1,1},s_{1,2},\ldots,s_{1,j_1},s_{2,1},\ldots,s_{2,j_2},\ldots,s_{s_l,1},\ldots,s_{s_l,j_l}),
%%\end{equation}
%%where $s_{i,1}+\ldots+s_{i,j_i}=s_i$ for every $1\leq i \leq l$. 
%%In practice, the block of size $s_i$ is divided in sub-blocks of size $s_{i,1},\ldots,s_{i,j_i}$.
%In this case, the number of EC under SC decoding is given by
%\begin{equation}
%|\EC_{SC}|=\frac{|\BLTA(S)|}{|\BLTA(S_{\pid})|} = \frac{\prod_{i=1}^t \left( \prod_{j=2}^{s_i} \left( 2^j-1 \right) \right)}{\prod_{i=1}^t \prod_{l=1}^{j_i} \left( \prod_{r=2}^{s_{i,l}} \left( 2^r-1 \right) \right)}.
%\end{equation}

% --------------------------------------------------------------------
\subsection{Decomposition of equivalence classes under SC decoding}
\label{sec:decomposition}

%As discussed before, two EC may produce the same candidate codeword for particular values of $y$. 
%Hence a method to select which EC to include in an AE decoder is of capital importance, especially for small values of $M$. 
Lemma~\ref{lem:number_EC} provides the number of $\EC$s of the $\ascdec$ decoding relation, however without providing a method to calculate them. 
In this section we show how, thanks to the structure of the automorphism group of polar codes, $\EC$ representatives can be decomposed in blocks, which can be calculated separately. 
This decomposition greatly simplifies the task of listing all the $\EC$s of the relation for practical application of our proposal. 
%Here we show how equivalence classes of the automorphism decoding relation can be decomposed in block components and how this decomposition can help find these classes.

To begin with, we define the block diagonal matrix $B = diag(B_1,\ldots,B_l)$ where $B_i$ is an invertible square matrix in $\GL(s_i)$. 
By definition, $B$ is the transformation matrix of an automorphism of the code; in the following, we will prove that every $\EC$ contains at least an automorphism defined by such a matrix. 
In particular, we call the equivalence sub-class $\EC_i = \GL(s_i) / \BLTL(s_{i,1},\ldots,s_{i,j_i})$, namely the set of the cosets of $\BLTL(s_{i,1},\ldots,s_{i,j_i})$ in $\GL(s_i)$. 
By definition, we have that $\prod_{i=1}^{l} |\EC_i| = |\EC|$ under SC decoding; in the following, we prove that a representative of any $\EC$ can be expressed as the juxtaposition of representatives of equivalence sub-classes. 
\begin{lemma}
\label{lem:EC_dec}
For every $B_i,D_i \in \GL(s_i)$ such that $B=diag(B_1,\ldots,B_l)$ and $D=diag(D_1,\ldots,D_l)$ for all $i=1,\ldots,l$, then $[\pi_{(B,0)}]=[\pi_{(D,0)}]$ if and only if $B_i$ and $D_i$ belong to the same coset of $\GL(s_i) / \BLTA(s_{i,1},\ldots,s_{i,j_i})$.
\begin{proof}
First, assume that $B_i$ and $D_i$ belong to the same equivalence sub-class for all $i=1,\ldots,l$; since each block is independent, 
\begin{equation}
BD^{-1} = diag(B_1 D_1^{-1},\ldots,B_l D_l^{-1}) \in \BLTA(S_{\pid})
\end{equation}
since for every block $B_i D_i^{-1} \in \BLTL(s_{i,1},\ldots,s_{i,j_i})$. 
Conversely, assume $[\pi_{(B,0)}]=[\pi_{(D,0)}]$. 
This means that $BD^{-1} \in \BLTA(S_{\pid})$, and given that each block is independent, $B_i D_i^{-1} \in \BLTL(s_{i,1},\ldots,s_{i,j_i})$ for all $i=1,\ldots,l$. 
\end{proof}
\end{lemma}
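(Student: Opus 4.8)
The plan is to translate the equality of equivalence classes into a single matrix membership condition and then exploit the block-diagonal form of $B$ and $D$ against the fact that $S_{\pid}$ refines $S$. By Lemma~\ref{lem:ec_cosets} the class $[\pi_{(B,0)}]$ is the right coset $[\pid]\,\pi_{(B,0)}$, and by Theorem~\ref{theo:BLTA_pid} we have $[\pid]=\BLTA(S_{\pid})$. Hence $[\pi_{(B,0)}]=[\pi_{(D,0)}]$ holds exactly when $\pi_{(B,0)}\circ\pi_{(D,0)}^{-1}\in\BLTA(S_{\pid})$. Using $\pi_{(D,0)}^{-1}=\pi_{(D^{-1},0)}$ together with the composition rule of Lemma~\ref{lem:GA-concat}, this composition is again a linear $\GA$ permutation whose matrix is the product of $B$ and $D^{-1}$; the precise left/right ordering is fixed by Lemma~\ref{lem:GA-concat}. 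The whole statement is thereby reduced to the purely matrix-theoretic assertion that this product lies in $\BLTA(S_{\pid})$.

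The key structural step I would isolate as a small claim is the following: since $S_{\pid}$ refines $S$ (each coarse block of size $s_i$ is partitioned into $s_{i,1},\ldots,s_{i,j_i}$), a matrix that is block-diagonal with respect to the coarse profile $S$ lies in $\BLTA(S_{\pid})$ if and only if each of its diagonal blocks lies in $\BLTL(s_{i,1},\ldots,s_{i,j_i})$. Indeed, an entry lying above the fine block-diagonal either sits outside a coarse block, where it already vanishes by block-diagonality, or sits inside a coarse block above the sub-block-diagonal, where it must vanish precisely when the block is block-lower-triangular for the sub-profile; the sub-diagonal cross-block entries permitted by $\BLTA(S_{\pid})$ are zero here and impose nothing. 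Because $B=diag(B_1,\ldots,B_l)$ and $D^{-1}=diag(D_1^{-1},\ldots,D_l^{-1})$ are coarse block-diagonal, so is their product, with $i$-th diagonal block $B_iD_i^{-1}$ (in the ordering fixed above).

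With the claim in hand both directions are immediate. For sufficiency, if $B_i$ and $D_i$ lie in the same coset of $\BLTL(s_{i,1},\ldots,s_{i,j_i})$ in $\GL(s_i)$ for every $i$, then each diagonal block $B_iD_i^{-1}\in\BLTL(s_{i,1},\ldots,s_{i,j_i})$, so the product lies in $\BLTA(S_{\pid})$ and the two classes coincide. For necessity, if the classes coincide the product lies in $\BLTA(S_{\pid})$, and reading off its $i$-th diagonal block gives $B_iD_i^{-1}\in\BLTL(s_{i,1},\ldots,s_{i,j_i})$ for each $i$, i.e.\ $B_i$ and $D_i$ share a coset.

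The main obstacle I anticipate is bookkeeping rather than conceptual depth. One must keep the order of multiplication consistent throughout, because Lemma~\ref{lem:GA-concat} makes the map from $\GA$ permutations to matrices an anti-homomorphism, so right cosets of $[\pid]$ correspond to left cosets on the matrix side; getting this right determines whether the relevant product and coset are written as $B_iD_i^{-1}$ or $D_i^{-1}B_i$. The other point that deserves an explicit argument, rather than the brisk phrase ``each block is independent,'' is the refinement claim above, since it is exactly what lets the global coset condition factor blockwise.
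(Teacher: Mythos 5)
Your proof is correct and takes essentially the same route as the paper's: reduce $[\pi_{(B,0)}]=[\pi_{(D,0)}]$ to the single matrix condition $BD^{-1}\in\BLTA(S_{\pid})$ via the coset characterization of equivalence classes, then read the condition off blockwise using the block-diagonal structure. The only difference is that you spell out two points the paper compresses into the phrase ``each block is independent''---the refinement claim that a coarse block-diagonal matrix lies in $\BLTA(S_{\pid})$ iff each diagonal block lies in the corresponding $\BLTL$ sub-profile, and the multiplication-order bookkeeping arising from Lemma~\ref{lem:GA-concat}---both of which are worthwhile additions rather than deviations.
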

Now we can prove that every $\EC$ can be represented by an affine automorphism whose matrix can be expressed as a block diagonal matrix. 
\begin{theorem}
\label{lem:EC_diag}
Every $\EC$ includes at least an affine automorphism whose matrix is block diagonal.
\begin{proof}
Lemma~\ref{lem:EC_dec} proves the connection between equivalence sub-classes and $\EC$s; in particular, we have seen that different compositions of sub-classes lead to different $\EC$s. 
The proof can be conluded by proving that the number of ways to compose block diagonal matrices using representatives of different sub-classes is equal to the number of $\EC$s; in fact,
\begin{align}
\prod_{i=1}^{l} |\EC_i| & = \prod_{i=1}^{l} |\GL(s_i) / \BLTA(s_{i,1},\ldots,s_{i,j_i})| =\\
 & = \prod_{i=1}^{l} \frac{|\GL(s_i)|}{|\BLTA(s_{i,1},\ldots,s_{i,j_i})|} =\\
 & = \prod_{i=1}^{l} \frac{2^{\frac{s_i(s_i-1)}{2}}\prod_{j=2}^{s_i}(2_j-1) }{2^{\frac{s_i(s_i-1)}{2}} \prod_{i=1}^{t} (\prod_{j=2}^{s_{i,j_t}} (2^j-1) ) } =\\
 & = \frac{\prod_{i=1}^t \left( \prod_{j=2}^{s_i} \left( 2^j-1 \right) \right)}{\prod_{i=1}^t \prod_{l=1}^{j_i} \left( \prod_{r=2}^{s_{i,l}} \left( 2^r-1 \right) \right)}  =\\
 & = |\EC|.
\end{align}
\end{proof}
\end{theorem}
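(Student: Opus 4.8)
The plan is to prove the theorem by a pure cardinality (pigeonhole) argument: I will exhibit a map from tuples of equivalence-sub-class representatives into the set of $\EC$s, argue via Lemma~\ref{lem:EC_dec} that it is injective, and then check by counting that its domain and codomain are finite sets of the same size, so that the injection is forced to be a bijection and in particular surjective. Since every element in the image is, by construction, an $\EC$ that contains a block-diagonal automorphism, surjectivity is exactly the assertion of the theorem. First I would fix the objects: by Lemma~\ref{lem:ec_cosets} the $\EC$s are the right cosets of $[\pid]=\BLTA(S_{\pid})$ in $\mathcal{A}=\BLTA(S)$, and by Lemma~\ref{lem:number_EC} there are precisely $E=|\BLTA(S)|/|\BLTA(S_{\pid})|$ of them. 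I may restrict attention to representatives of the form $\pi_{(B,0)}$ with $B=\mathrm{diag}(B_1,\dots,B_l)$ block diagonal, $B_i\in\GL(s_i)$, first because such a $B$ is a special case of a block-lower-triangular matrix so $\pi_{(B,0)}\in\mathcal{A}$, and second because $\T\subseteq\LTA\subseteq[\pid]$ means the translation part of any automorphism is absorbed and plays no role in coset membership; this reduces everything to the linear parts.

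Next I would define the candidate bijection $\Phi:\prod_{i=1}^l\EC_i\to\{\,[\sigma]:\sigma\in\mathcal{A}\,\}$ sending a tuple of sub-class cosets $(\bar B_1,\dots,\bar B_l)$, where $\EC_i=\GL(s_i)/\BLTL(s_{i,1},\dots,s_{i,j_i})$, to the class $[\pi_{(B,0)}]$ of the block-diagonal matrix assembled from any representatives $B_i$. Lemma~\ref{lem:EC_dec} supplies exactly the two properties needed to make $\Phi$ an injection: it shows that $\Phi$ is well defined (choosing different per-block representatives from the same coset in each $\GL(s_i)/\BLTL(s_{i,1},\dots,s_{i,j_i})$ yields the same $\EC$), and that $\Phi$ is injective (equal $\EC$s force agreement coset-by-coset). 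The structural fact underlying Lemma~\ref{lem:EC_dec}, which I would lean on explicitly, is that for block-diagonal inputs the condition $BD^{-1}\in\BLTA(S_{\pid})$ decouples into the $l$ independent per-block conditions $B_iD_i^{-1}\in\BLTL(s_{i,1},\dots,s_{i,j_i})$.

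I would then close the argument by counting the domain of $\Phi$ and matching it to $E$. Applying Lemma~\ref{lem:sizeBLTA} blockwise gives
\[
\prod_{i=1}^l|\EC_i|=\prod_{i=1}^l\frac{|\GL(s_i)|}{|\BLTL(s_{i,1},\dots,s_{i,j_i})|},
\]
and the strictly-lower-triangular contributions $2^{s_i(s_i-1)/2}$ cancel between numerator and denominator in every block, leaving precisely the ratio of products of $(2^j-1)$ factors that equals $E$ as computed in \eqref{eq:nb_ec_sc} of Lemma~\ref{lem:size_EC_SC}. Since $\Phi$ is an injection between two finite sets each of cardinality $E$, it is a bijection, hence surjective; every $\EC$ therefore equals $[\pi_{(B,0)}]$ for the block-diagonal matrix $B$ in its preimage, which is the claimed block-diagonal representative.

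I expect the main obstacle to be the injectivity/well-definedness step, not the counting. The delicate point is justifying that restricting to block-diagonal representatives does not lose any coset, and that the coset condition genuinely separates into independent per-block conditions; both are handled by Lemma~\ref{lem:EC_dec}, so once that is invoked the argument reduces to bookkeeping. The final cardinality match is a routine cancellation, and the only genuine content is the pigeonhole closure that upgrades the injection to a surjection by comparing the two sizes.
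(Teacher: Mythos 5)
Your proposal is correct and takes essentially the same route as the paper: Lemma~\ref{lem:EC_dec} supplies well-definedness and injectivity of the block-diagonal assembly map, and the cardinality computation $\prod_{i=1}^{l}|\EC_i| = E$ (via Lemma~\ref{lem:sizeBLTA} and Lemma~\ref{lem:size_EC_SC}, with the $2^{s_i(s_i-1)/2}$ factors cancelling) forces surjectivity. Your write-up merely makes explicit two points the paper leaves implicit --- the pigeonhole step upgrading the injection to a bijection, and the absorption of translations justifying the restriction to linear parts --- so it is a faithful, slightly more careful rendering of the same argument.
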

It is worth noticing that this property holds when $\LTA \leq [\pid]$; if the absorption group is smaller, the property may not be true. 
This property can be used to efficiently list all the equivalence classes under SC decoding. 
In fact, each block of the $\BLTA$ automorphism group can be searched independently, and the resulting sub-classes can be merged to find all the equivalence classes. 
%In the following, we will show how to apply all the concepts introduced in the previous sections to SC decoding. 

Focusing on a sub-class $\EC_i$, we propose to use PUL matrix decomposition to further simplify the representatives search. 
%In previous section we have seen that $\EC$ search can be decomposed into the search of cosets of $\GL(s_i) / \BLTA(s_{i,1},\ldots,s_{i,j_i})$. 
%Here we focus on these cosets, and we use the PUL matrix decomposition to further simplify the representatives search. 
\begin{definition}[PUL decomposition]
	The matrices $P \in \PL$, $U \in \UTL$ and $L \in \LTL$ form the PUL decomposition of invertible matrix $A$ if $A = P U L$. 
%	\begin{equation*}
%		A v = P \cdot U \cdot L v \quad \forall v ,
%	\end{equation*}
%	i.e., if $A = P U L$.
\end{definition}
The PUL decomposition exists for all $A \in \GL$ and is not necessarily unique; this is due to the non-uniqueness of the LU decomposition, of which the PUL decomposition is merely a variation. 
We extend the notion of PUL decomposition to automorphisms $\pi_{(A,b)} \in \mathcal{A}$, corresponding to the concatenation of the permutations as
\begin{equation}
	\pi_{(A,b)} = \pi_{(L,b)} \circ \pi_{(U,0)} \circ \pi_{(P,0)}  ,
\end{equation}
applied from right to left. %, according to Lemma~\ref{lem:GL-concat}.
By Lemma~\ref{lem:LTA_SC}, we have that under SC decoding $\pi_{L,b} \in [\pid]$. 
In the following, we show that representatives of $\EC$s can be decomposed as the product of two automorphisms, one belonging to $\UTL$ and the other belonging to $\PL$. 
%If we call $\mathcal{D}_{\mathcal{P}}$ and $\mathcal{D}_{\mathcal{U}}$, the subgroups of $\BLTL(s_i)$ containing only PL and UTL automorphisms, we can always find an $\EC_i$ representative composing elements from these two sets. 
If we call $\mathcal{A}_{\mathcal{P}}$ and $\mathcal{A}_{\mathcal{U}}$ the subgroups of $\mathcal{A}$ containing only $\PL$ and $\UTL$ automorphisms respectively, we can always find an $\EC$ representative composing elements from these two sets. 
%Figure~\ref{fig:BLTA_EC} shows an example of pattern $A=PU$ with $P\in\mathcal{A}_{\mathcal{P}}$ and $ U\in\mathcal{A}_{\mathcal{U}}$.
\begin{lemma}\label{lem:PU_BLT}
Each $\EC$ contains at least an automorphism $P \cdot U$ with $P\in\mathcal{A}_{\mathcal{P}}$ and $U\in\mathcal{A}_{\mathcal{U}}$.
\begin{proof}
By Theorem~\ref{lem:EC_diag} we know that every $\EC$ includes an affine automorphism $\pi_{(D,b)}$ such that $D=diag(D_1,\ldots,D_l)$. 
Every sub-matrix $D_i$ has PUL decomposition $D_i=P_iU_iL_i$, and we define $P=diag(P_1,\ldots,P_l)$, $U=diag(U_1,\ldots,U_l)$ and $L=diag(L_1,\ldots,L_l)$, with $P\in\mathcal{A}_{\mathcal{P}}$ and $U\in\mathcal{A}_{\mathcal{U}}$. 
then, we have that
\begin{align}
\pi_{(D,b)} &= \pi_{(diag(D_1,\ldots,D_l),b)} =\\
            &= \pi_{(diag(P_1U_1L_1,\ldots,P_lU_lL_l),b)} =\\
            &= \pi_{(PUL,b)} =\\
            &= \pi_{(L,b)} \circ \pi_{(U,0)} \circ \pi_{(P,0)} ,
\end{align}
and since $\pi_{(L,b)} \in [\pid]$ we have that $[\pi_{(D,b)}] = [\pi_{(U,0)} \circ \pi_{(P,0)}]$. 
\end{proof}
\end{lemma}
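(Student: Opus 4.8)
The plan is to start from the block-diagonal representative guaranteed by Theorem~\ref{lem:EC_diag} and then split each diagonal block via its PUL decomposition, absorbing the lower-triangular part into $[\pid]$. Concretely, I would first invoke Theorem~\ref{lem:EC_diag} to choose, inside the given $\EC$, an affine automorphism $\pi_{(D,b)}$ whose matrix is block diagonal, $D=\mathrm{diag}(D_1,\ldots,D_l)$ with each $D_i\in\GL(s_i)$. The goal is then to rewrite $\pi_{(D,b)}$ as a concatenation in which every factor except a $\PL$-part and a $\UTL$-part is SC-absorbed.

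Next I would decompose each block by its PUL factorization $D_i=P_iU_iL_i$ with $P_i\in\PL$, $U_i\in\UTL$, $L_i\in\LTL$, and reassemble the factors block-wise into $P=\mathrm{diag}(P_1,\ldots,P_l)$, $U=\mathrm{diag}(U_1,\ldots,U_l)$, and $L=\mathrm{diag}(L_1,\ldots,L_l)$, so that $D=PUL$. The point to verify here is that these reassembled matrices land in the correct subgroups: a block-diagonal permutation matrix is again a permutation matrix, so $P\in\mathcal{A}_{\mathcal{P}}$; a block-diagonal upper-triangular matrix is upper-triangular, so $U\in\mathcal{A}_{\mathcal{U}}$; and $L$ is lower-triangular, hence $\pi_{(L,b)}\in\LTA$, which is SC-absorbed by Lemma~\ref{lem:LTA_SC}. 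Because each block acts on an independent coordinate group, all three assembled matrices remain inside $\mathcal{A}=\BLTA(S)$.

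Then I would translate the matrix identity $D=PUL$ into a statement about permutation composition. Using the concatenation rule of Lemma~\ref{lem:GA-concat}, and tracking that the matrix-product order is reversed relative to permutation composition, $\pi_{(D,b)}$ factors as $\pi_{(L,b')}\circ\pi_{(U,0)}\circ\pi_{(P,0)}$, where the translation vector $b$ can be pushed entirely into the lower-triangular factor since $\T\subset\LTA$. As $\pi_{(L,b')}\in[\pid]$, the coset description of equivalence classes (Lemma~\ref{lem:ec_cosets}, equivalently the absorption identity for the last-applied component) yields $[\pi_{(D,b)}]=[\pi_{(U,0)}\circ\pi_{(P,0)}]$, and since $\pi_{(U,0)}\circ\pi_{(P,0)}=\pi_{(PU,0)}$ this exhibits the desired $P\cdot U$ automorphism inside the class.

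The main obstacle I anticipate is bookkeeping rather than conceptual: correctly handling the order reversal between matrix products and permutation compositions in Lemma~\ref{lem:GA-concat}, and confirming that the translation component is genuinely absorbed into the lower-triangular factor so that the residual last-applied factor truly lies in $[\pid]$. Everything else follows from the independence of the diagonal blocks and from the already-established inclusion $\LTA\le[\pid]$.
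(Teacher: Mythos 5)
Your proposal is correct and follows essentially the same route as the paper's proof: block-diagonal representative from Theorem~\ref{lem:EC_diag}, blockwise PUL decomposition reassembled into $P\in\mathcal{A}_{\mathcal{P}}$, $U\in\mathcal{A}_{\mathcal{U}}$, $L\in\LTL$, and absorption of the lower-triangular-plus-translation factor via $\LTA\le[\pid]$ together with the coset description of the classes. If anything, you are slightly more careful than the paper on the one delicate point: by Lemma~\ref{lem:GA-concat} the composition $\pi_{(L,b)}\circ\pi_{(U,0)}\circ\pi_{(P,0)}$ has translation vector $PUb$ rather than $b$, so the translation must indeed be adjusted to $b'=(PU)^{-1}b$ as you do with your $\pi_{(L,b')}$ --- a harmless correction since $\pi_{(L,b')}$ still lies in $\LTA\le[\pid]$, which is exactly the observation your proof makes explicit.
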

The proof of the previous lemma is constructive, in the sense that it proposes a method to list all $\EC$s by mixing $\UTL$ and $\PL$ automorphisms that are block diagonal. 
%In fact, if we call $I_i$ the identity matrix of size $s_i$, we have that every UTL matrix in the form $U=diag(I_1,\ldots,I_{j-1},U_j,I_{j+1},\ldots,I_l)$ belongs to $\mathcal{A}$. 
In fact, in order to generate an $\EC$ candidate matrix $A$, it is sufficient to randomly draw $l$ $\UTL$ matrices $U_1,\ldots,U_l$ and $l$ $\PL$ matrices $P_1,\ldots,P_l$, of size $s_1,\ldots,s_l$ respectively, create the block diagonal matrices $U=diag(U_1,\ldots,U_l)$ and $P=diag(P_1,\ldots,P_l)$ and generate $A=P \cdot U$. 
By listing all the possible matrices $P$ and $U$, Lemma~\ref{lem:EC_dec} ensures that all the $\EC$s will be found. 

However, since the PUL decomposition of a matrix is not unique, this method is redundant, in the sense that it may generate multiple representatives of the same $\EC$. 
A further check is required to assure that the $\EC$ representative does not belong to an already calculated $\EC$. 
This check is done by multiplying the calculated $P \cdot U$ matrix and the $P \cdot U$ matrices of the previously calculated $\EC$s as stated in Lemma~\ref{lem:checkproductA}. 
%either using $[\pid]=\LTL$ for polar code with block structure $S=(s_0=1,\dots,s_l)$, $[\pid]=\BLTL(2,1,\dots,1)$ otherwise.
\begin{lemma}\label{lem:checkproductA}
Given $\pi_1,\pi_2 \in \mathcal{A}$ having transformation matrices $A_1,A_2$, then $\pi_1\in[\pi_2]$ if and only if $A_1 \cdot A_2^{-1} \in\BLTA(S_{\pid})$.  %[\pid]$.
\begin{proof}
This follows directly from the definitions of $\EC$ and $[\pid]$. 
\end{proof}
\end{lemma}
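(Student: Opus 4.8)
The plan is to read the claim as a pure restatement, in matrix language, of the coset description of equivalence classes established earlier. The two ingredients I would invoke are Lemma~\ref{lem:ec_cosets}, which identifies $[\pi_2]$ with the right coset $[\pid]\pi_2$, and Theorem~\ref{theo:BLTA_pid}, which (in the SC setting this lemma inhabits) gives $[\pid] = \BLTA(S_{\pid})$ as a subgroup of $\mathcal{A}$. First I would record the standard group-theoretic criterion for coset membership: since $[\pid]$ is a subgroup, $\pi_1 \in [\pi_2] = [\pid]\pi_2$ holds if and only if $\pi_1 \circ \pi_2^{-1} \in [\pid]$. This is immediate from the definition of a right coset together with closure of $[\pid]$ under composition and inversion.

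Next I would translate the permutation condition $\pi_1 \circ \pi_2^{-1} \in [\pid]$ into a condition on transformation matrices. Writing $\pi_j = \pi_{(A_j,b_j)}$, the inverse permutation $\pi_2^{-1}$ is the $\GA$ permutation inverting the affine map $v \mapsto A_2 v + b_2$, so its linear part is $A_2^{-1}$. Applying Lemma~\ref{lem:GA-concat} to the composition $\pi_1 \circ \pi_2^{-1}$ then produces a $\GA$ permutation whose linear part is the product of $A_1$ and $A_2^{-1}$. Because the translation group satisfies $\T \subseteq \LTA \subseteq [\pid]$ (Lemma~\ref{lem:LTA_SC}), membership of an affine permutation in $[\pid]=\BLTA(S_{\pid})$ depends only on its linear part and not on its translation vector; hence $\pi_1 \circ \pi_2^{-1} \in [\pid]$ reduces to the statement that $A_1 A_2^{-1} \in \BLTA(S_{\pid})$, which is exactly the asserted condition. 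Once this reduction is in place the equivalence is purely formal, matching the author's remark that it follows directly from the definitions of $\EC$ and $[\pid]$.

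The one delicate point — and the step I would treat most carefully — is the order in which the matrices combine. As already noted after Lemma~\ref{lem:GA-concat}, the correspondence between $\GA$ permutations and their matrices reverses order under the passive notation, so composing permutations and multiplying matrices disagree on handedness; combined with the fact that right cosets of $[\pid]$ correspond to \emph{left} cosets of $\BLTA(S_{\pid})$ at the matrix level, the bookkeeping must be done explicitly to land the product in the claimed arrangement. Here it is convenient that $\BLTA(S_{\pid})$ is a group, hence closed under inversion, so one may pass freely between a candidate matrix and its inverse while reconciling the order; but since $\BLTA(S_{\pid})$ is generically not normal in $\mathcal{A}=\BLTA(S)$, the two sides genuinely differ and the order check cannot be skipped. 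I expect this order verification to be the only real content of the proof, the remainder being a direct application of Lemma~\ref{lem:ec_cosets} and Theorem~\ref{theo:BLTA_pid}.
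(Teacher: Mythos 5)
Your skeleton is the same as the paper's (the paper's proof is a one-liner appealing to the definitions, and your expansion via Lemma~\ref{lem:ec_cosets}, Theorem~\ref{theo:BLTA_pid}, the coset test $\pi_1\in[\pid]\pi_2 \Leftrightarrow \pi_1\circ\pi_2^{-1}\in[\pid]$, and the absorption of translations $\T\le\LTA\le[\pid]$ so that membership depends only on the linear part, is exactly the right scaffolding and supplies details the paper omits). The genuine gap is the step you yourself single out as ``the only real content'': you never actually do the order bookkeeping, and your two paragraphs contradict each other on it. In your second paragraph you assert that Lemma~\ref{lem:GA-concat} gives $\pi_1\circ\pi_2^{-1}$ the linear part $A_1A_2^{-1}$. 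But the paper's concatenation rule is reversed relative to matrix multiplication: for $\pi_{(A_2,b_2)}\circ\pi_{(A_1,b_1)}$ with $\pi_{(A_1,b_1)}$ applied first, the resulting matrix is $A_1A_2$ (first-applied on the left). Since $\pi_2^{-1}=\pi_{(A_2^{-1},\,A_2^{-1}b_2)}$ is the first-applied factor in $\pi_1\circ\pi_2^{-1}$, the linear part is $A_2^{-1}A_1$, not $A_1A_2^{-1}$. This matches your own third-paragraph remark that right cosets of $[\pid]$ correspond to \emph{left} cosets of $\BLTA(S_{\pid})$ at the matrix level: $\pi_1\in[\pid]\pi_2$ iff $A_1\in A_2\cdot\BLTA(S_{\pid})$ iff $A_2^{-1}A_1\in\BLTA(S_{\pid})$ (equivalently $A_1^{-1}A_2\in\BLTA(S_{\pid})$). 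As you correctly note, $\BLTA(S_{\pid})$ is generically not normal in $\BLTA(S)$, so $A_2^{-1}A_1\in\BLTA(S_{\pid})$ and $A_1A_2^{-1}\in\BLTA(S_{\pid})$ are genuinely different conditions for a given pair; ``the product of $A_1$ and $A_2^{-1}$'' cannot be left ambiguous, and deferring the verification leaves the proof incomplete at its only nontrivial point.

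What the completed computation actually shows is that, under the paper's own conventions (Definition~\ref{def:vector-permutation} together with Lemma~\ref{lem:GA-concat}), the correct criterion is $A_2^{-1}A_1\in\BLTA(S_{\pid})$ --- the mirror image of the condition printed in the lemma. The printed $A_1A_2^{-1}$ is what one obtains under the standard functional convention in which composing index maps multiplies matrices in the usual order; so your instinct that the entire content of the lemma lies in this handedness check was sound, but your writeup simultaneously lands on the printed condition by a misapplication of Lemma~\ref{lem:GA-concat} and implicitly derives the opposite one via the left-coset observation, without reconciling the two. To repair it, either carry the computation through and state the criterion as $A_2^{-1}A_1\in\BLTA(S_{\pid})$, flagging the side-swap in the lemma's wording, or explicitly adopt the opposite composition convention and recheck the chain from Lemma~\ref{lem:ec_cosets} onward under it. Note that the class count of Lemma~\ref{lem:number_EC} and the appendix enumeration (where the candidate matrices are block diagonal) are insensitive to which side is used, which is presumably why the paper's one-line proof never surfaces the discrepancy.
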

%If either one of the two checks fails, the vectors are discarded. 
%The process is repeated until the desired number of automorphisms $M$ is reached.

We can now propose a practical method to create the automorphisms list $\mathcal{L}$, including an automorphism for every $\EC$. 
%We can now propose a practical method to list one automorphism for every $\EC$. 
The idea is to start by dividing the affine transformation matrix in $l$ blocks of size $s_1,\ldots,s_l$. 
%a representative for each sub-class of $\EC_i$ will be calculated, independently from the other blocks. 
For each block, all the matrices belonging to $\mathcal{U}(s_i)$ and $\mathcal{P}(s_i)$, namely the group of upper triangular and permutation matrices of size $s_i$, are calculated. 
Next, all the block upper diagonal matrices $U\in\mathcal{A}_{\mathcal{U}}$ having matrices in $\mathcal{U}(s_i)$ on the diagonal are calculated, along with all the block permutation matrices $P\in\mathcal{A}_{\mathcal{P}}$ having matrices in $\mathcal{P}(s_i)$ on the diagonal; each block diagonal matrix is considered ad the affine transformation matrix of a representative of an $\EC$ and included in $\mathcal{L}$, after checking that the new automorphisms is not includes in any coset already included in the list as stated in Lemma~\ref{lem:PU_BLT}.  
%and included as representative of a sub-class, after checking that they do not belong to any already calculated coset through Lemma~\ref{lem:PU_BLT}. 
Next, matrices in the form $P \cdot U$ are included in the list if they pass the check of Lemma~\ref{lem:PU_BLT}. 
When $|\EC|$ representatives are found, the searching process stops. 

In order to evaluate the complexity of the proposed method, we need to evaluate the number of matrices that are generated during the process. 
By construction, we have that 
\begin{equation}
|\mathcal{U}(s_i)|=2^{\frac{s_i(s_i-1)}{2}}  \quad , \quad  |\mathcal{P}(s_i)|=s_i!~.
\end{equation}
For a block structure $S=(s_1,\dots,s_t)$, the number of $\UTL$ and $\PL$ automorphisms are:
%\begin{align}\label{eq:aut_U}
%	|\mathcal{A}_{\mathcal{U}}| &= \prod_{i=1}^{t}2^{\frac{s_i(s_i-1)}{2}}\\
%	|\mathcal{A}_{\mathcal{P}}| &= \prod_{i=1}^{t}(s_i!)~.\label{eq:aut_P}
%\end{align}
\begin{equation}\label{eq:aut_U_P}
	|\mathcal{A}_{\mathcal{U}}| = \prod_{i=1}^{t}2^{\frac{s_i(s_i-1)}{2}} \quad , \quad |\mathcal{A}_{\mathcal{P}}| = \prod_{i=1}^{t}s_i!~.
\end{equation}
Hence, the number of generated matrix is given by 
\begin{equation}
|\mathcal{A}_{\mathcal{U}}|+|\mathcal{A}_{\mathcal{P}}|+|\mathcal{A}_{\mathcal{U}}|\cdot|\mathcal{A}_{\mathcal{P}}|.
\end{equation}
It is worth noticing that if $[\pid]=\LTA$ then every $\EC$ include only one $\UTL$ or $\PL$ matrix, such that $\mathcal{A}_{\mathcal{U}} \cup \mathcal{A}_{\mathcal{P}} \subset \mathcal{L}$. 
An example of this construction can be found in Appendix~\ref{app:rep_list}.

%% --------------------------------------------------------------------
%\subsection{Efficient identification of equivalence classes under SC decoding}
%\label{sec:eq_identification}
%
%\textbf{PUL decomposition from Section 6}
%
%\textbf{Section 7}
%
%\textbf{Lemma from paper to identify if two automorphisms belong to same equivalence class}
%
%\input{EC_identification}

% --------------------------------------------------------------------
\section{Numerical results}
\label{sec:nem_res}

%numerical examples to illustrate math of previous sections:
%
%- number of automorphisms, max number of equivalence classes (via P x U)
%
%- automorphism selection for AE decoding - demonstrate that random selection gives poorer performance than selection by equivalence classes
%
%- anything else ?

%numerical examples to illustrate math of previous sections:
%\begin{itemize}
%\item number of automorphisms, max number of equivalence classes (via P x U)
%\item automorphism selection for AE decoding - demonstrate that random selection gives poorer
%performance than selection by equivalence classes
%\item anything else ?
%\end{itemize}

In this section, we present a numerical analysis of error correction performance of AE decoders. 
Simulation results are obtained under BPSK modulation over the AWGN channel.
We analyze polar codes having meaningful affine automorphism groups, decoded under AE-$M$-$\dec$ where $M$ represents the number of parallel $\adec$ decoders.
%The codes used exhibit large affine automorphism groups and are decoded under AE-$M$-decoding where $M$ represents the number of parallel adec decoder.
%\textcolor{red}{check notations - We provide the minimum information set $\mathcal{I}_{min}$, the affine automorphism group $\mathcal{A}$ and the SC absorption group $[\pid]_{\scdec}$ of all codes used.}
For each code, we provide the minimum information set $\mathcal{I}_{min}$, the affine automorphism group $\mathcal{A}$ and the SC absorption group $[\pid]$. 
Automorphism set $\mathcal{L}$ are generated according to the method described in Lemma~\ref{lem:checkproductA}; the results obtained are compared to a random selection of automorphisms. 
%Two automorphism sets are used, one with the EC check described in Lemma~\ref{lem:checkproductA} and a random selection of automorphisms. 
We compare error-correction performance of proposed polar codes to 5G polar codes under CRC-aided SCL \cite{polar_5G}; ML bounds depicted in the figures are retrieved with the truncated union bound, computed with the minimum distance $d_{min}$ and the number of minimum distance codewords of the code \cite{BardetPolyPC}.

\begin{figure}[t]
  \centering
  \includegraphics[width=0.95\columnwidth]{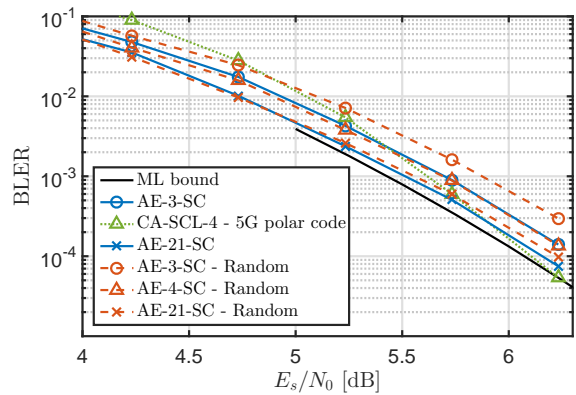}
\caption{Error-correction performance of $(128,85)$ polar codes under AE-SC decoding.}
\label{fig:128_85}
\end{figure}

Figure~\ref{fig:128_85} shows the performance of AE-SC decoding for the $(128,85)$ code defined by $\mathcal{I}_{min}=\{23,25\}$ and having $\mathcal{A}=\BLTA(3,1,3)$.
%According to Lemma~\ref{lem:BLTA_2}, some sequence of frozen and information bits is invariant under SC decoding.
%This code is only composed of these sequences but of length $8$, thus $[\pid]_{\scdec}=\BLTA(3,1,1,1,1)$.
Following the procedure of Lemma~\ref{lem:BLTA_2}, it is possible to prove that this code has sequences of length $8$ of frozen and information bits that are invariant under SC decoding, thus $[\pid]=\BLTA(3,1,1,1,1)$ \cite{SC_invariant}. 
%Same conclusion for $[\pid]_{\scdec}$ is made in \cite{SC_invariant}.
This code has $E=21$ equivalence classes, so a set $\mathcal{L}$ composed of $21$ $\EC$ representatives provides a bound on the decoding performance of AE-$\scdec$ decoding. 
AE with a set of $M=3$ randomly drawn automorphisms suffers from a significant loss with respect to AE having a set composed of $M=3$ $\EC$ representatives; this loss is eliminated by drawing an additional automorphism, setting $M=4$.
In this case, a set of $21$ random automorphisms is essentially matching the AE bound; the impact of new non-redundant automorphisms seems to be reducing with the list size $M$, and a small number of equivalence classes $E$ reduces the effect of the AE decoder. 
%The AE bound is not matching the ML bound of the code and a set of $21$ random automorphisms is matching the AE bound.
%However, for smaller automorphism set, the EC check permits to avoid critical redundancy in the set.
%AE with a set of $3$ randomly picked automorphisms suffers from a loss with respect to AE having a set composed of $3$ EC representatives.
%We conclude that two automorphisms were belonging to the same EC since by adding an additional random automorphism, the error-correction performance is now matching AE with the set composed of $3$ EC representatives.
Performance of 5G polar code decoded under CA-SCL with $L=4$ is plotted as a reference; proposed polar code under AE decoding outperforms CA-SCL 5G polar code for BLER$\geq 10^{-3}$, while 5G polar code shows better performance for lower BLER, however at a larger decoding cost. 

\begin{figure}[t]
  \centering
  \includegraphics[width=0.95\columnwidth]{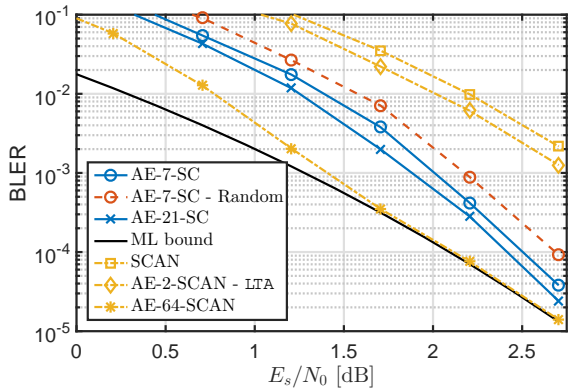}
\caption{Error-correction performance of $(256,95)$ polar codes under AE-SC and AE-SCAN decoding.}
\label{fig:256_95}
\end{figure}
Next, in Figure~\ref{fig:256_95} we investigate the error-correction performance of the $(256,95)$ code defined by $\mathcal{I}_{min}=\{55,120,228\}$ and having $\mathcal{A}=\BLTA(2,1,1,1,3)$.
The number of $\EC$s under SC decoding for this code can be calculated as $E=21$ by knowing that $[\pid]=\BLTA(2,1,1,1,1,1,1)$.
For $M=7$, AE-SC decoding with a random automorphism set suffers from a loss with respect to AE-SC decoding performed with automorphisms from $7$ distinct $\EC$s.
The bound under AE-SC decoding is obtained by using $M=21$ non-redundant automorphisms, one representative from each $\EC$.
%is shown by plotting the error-correction performance of AE-$21$-SC with $21$ automorphisms, one representative from each EC.
However, this AE bound is far away from the ML bound of the code; then, the error-correction performance under AE-SCAN decoding is analyzed.
%The number of SCAN iterations used are $5$.
We observe that error-correction performance of AE-$2$-SCAN with a set composed of two $\LTA$ automorphisms is not equivalent to the error-correction performance of SCAN.
Thus, the absorption group of SCAN is smaller than $\LTA$, permitting to use additional automorphisms without redundancy; the characterization of the absorption set under SCAN decoding, that we conjecture to be limited to the trivial permutation, is still an open problem. 
Decoding performance of AE-$64$-SCAN designed with a set of $64$ random automorphism from $\BLTA(2,1,1,1,3)$ matches the ML bound for low BLER.

\begin{figure}[t]
  \centering
  \includegraphics[width=0.95\columnwidth]{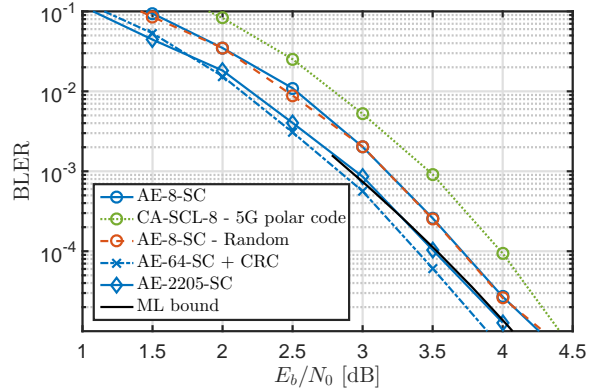}
\caption{Error-correction performance of $(128,60)$ polar code with $\mathcal{A}=\BLTA(3,4)$  under AE-SC.}
\label{fig:128_60}
\end{figure}
Figure~\ref{fig:128_60} shows the error-correction of $(128,60)$ code defined by $\mathcal{I}_{min}=\{27\}$.
This code has $\mathcal{A}=\BLTA(3,4)$ and SC absorption group $[\pid]_{\scdec}=\BLTA(2,1,1,1,1,1)$.
Under SC decoding, the code exhibits $E=2205$ $\EC$s permitting to match the ML bound under AE-SC when $M=2205$ non-redundant automorphisms are used. 
Given the large number of ECs, the probability for a random set of automorphisms to include redundant permutation is quite small; as an example, $P_{\leq1}(8)=0.0126$. 
A more accurate analysis shows that for large sets $\mathcal{L}$, some $\ascdec$ decoding units returns the correct codeword, whereas AE-SC decoder selects another codeword based on the least-square metric. 
To overcome this problem, we introduced a short CRC of $3$ bits; simulation results show that in this way it is possible to beat the ML bound of the code with only $M=64$ automorphisms. 
However, the introduction of a CRC is not always useful; in fact, its introduction for the previously analyzed codes did not provide any benefit. 
How and when to introduce a CRC under AE-SC decoding is still an open problem. 
Finally, we note that the code investigated has a good ML bound allowing the error correction performance of AE decoding with $M=8$ to outperform 5G polar code decoded under CA-SCL-$8$ decoding. 

%Next, we investigate performance of AE-SC decoding with a set of $M=8$ and $M=64$ automorphisms.
%Due to the large of number of EC, a random set is likely not redundant.
%In fact, the probability $P(M)$ that, in a set of M randomly drawn automorphisms, at least two of them belong to the same EC can be easily computed by connecting it to the birthday problem; namely we have ${\displaystyle P(M)=1-\left(\prod_{m=0}^{M-1}\frac{|EC|-m}{|EC|}\right)}$.
%For $M=8$ and $M=64$, error-correction performances of AE decoder for both sets match.
%In fact, we have $P(8)=0.0126$, as a consequence, both sets are likely to have $M=8$ EC representatives.
%We have  $P(64)=0.6028$, a set composed of randomly picked automorphism is likely to be redundant.
%However, the redundancy is limited to few EC and AE decoder will perform similarly for both sets.
%We note that the concatenation of a CRC of $3$ bits keeps this decoding behavior.
%Finally, we note that the code investigated has a good ML bound allowing the error correction performance of AE decoding with $M=8$ to outperform 5G polar code decoded under CA-SCL-$8$.
%The CRC used permits to exceed the ML bound for $M=64$.

% --------------------------------------------------------------------
\section{Conclusions}
\label{sec:conclusions}
In this paper, we introduced the notion of redundant automorphisms of polar codes under AE decoding. 
This notion permits to greatly reduce the number of automorphisms that can be used in AE-SC decoding of polar codes. 
Moreover, by analyzing the number of non-redundant automorphisms it is possible to have an idea of the impact of an AE-SC dedoder: in fact, even if a polar code has a large affine automorphism group, if it SC absorption group is too large, the number of distinct codeword candidate under AE-SC decoding may be too small. 
Then, we introduced a method to generate a set if non-redundant automorphisms to be used in AE-SC decoding. 
All these results were made possible by a preliminar analysis of the structure of the affine automorphism group of polar codes. 
We proposed a novel approach to prove the most recent results in this field, and we provide a proof of the equivalence between decreasing monomial codes and polar codes following UPO. 
Simulation results show the goodness of our approach, however leaving open various problems: the structure of the absorption group of decoding algorithms other than SC, e.g. BP or SCAN, is still unknown, while it is not clear if it is possible to further reduce the automorphism set size by eliminating automorphisms providing the same result under a given received signal $y$. 
With our paper, we hope to provide new tools to help the researcher to answer to these and to other question related to AE decoding of polar codes, a really promising decoding algorithm for high parallel implementations. 

% --------------------------------------------------------------------
\bibliographystyle{IEEEbib}
\bibliography{references}

% --------------------------------------------------------------------
\appendix

\section*{Boolean functions and permutations}
\label{app:eval}
\begin{table}[ht]
\centering
%\scalebox{0.7}{
%\begin{center}
\caption{Truth table for boolean function $f$ in \eqref{eq:f_ex}.} 
\label{tab:truth}
	\begin{tabular}{|c|c c c|c|c|c|}
		\hline
		$i$ & & $\bin{i}$ & & $\eval(\bar{V}_0)$ & $\eval(\bar{V}_1)$ & $\eval(\bar{V}_2)$ \\
		\hline
		0 &  0  &  0  &  0  & 1 & 1 & 1 \\
		1 &  1  &  0  &  0  & 0 & 1 & 1 \\
		2 &  0  &  1  &  0  & 1 & 0 & 1 \\
		3 &  1  &  1  &  0  & 0 & 0 & 1 \\
		4 &  0  &  0  &  1  & 1 & 1 & 0 \\
		5 &  1  &  0  &  1  & 0 & 1 & 0 \\
		6 &  0  &  1  &  1  & 1 & 0 & 0 \\
		7 &  1  &  1  &  1  & 0 & 0 & 0 \\
		\hline
	\end{tabular}
%\end{center}
%}
\end{table}
Boolean function $f \in \Fbin^{[3]}$ defined by
\begin{equation}
\label{eq:f_ex}
f = \bar{V}_2 \oplus \bar{V}_0\bar{V}_1 \oplus \bar{V}_0\bar{V}_2 \oplus \bar{V}_1\bar{V}_2 = m_{3} + m_{4} + m_{2} + m_{1}
\end{equation}
can be easily evaluated on the basis of single variable evaluations given by  
\begin{align*}
	x^{(m_6)} = \eval(\bar{V}_0) = & 10101010 \\
	x^{(m_5)} = \eval(\bar{V}_1) = & 11001100 \\
	x^{(m_3)} = \eval(\bar{V}_2) = & 11110000 \\
\end{align*}
Each single variable evaluation is described by a column of the Truth Table~\ref{tab:truth}. 
In fact, the evaluation of function $f$ is given by
\begin{center}
	\begin{tabular}{r|c c}
		$x^{(m_3)} = \eval(\bar{V}_2)$    & $11110000$ & $\oplus$ \\ 
		$x^{(m_4)} = \eval(\bar{V}_0 \bar{V}_1)$ & $10001000$ & $\oplus$ \\ 
		$x^{(m_2)} = \eval(\bar{V}_0 \bar{V}_2)$ & $10100000$ & $\oplus$ \\ 
		$x^{(m_1)} = \eval(\bar{V}_1 \bar{V}_2)$ & $11000000$ & $=$ \\ 
		\hline
		$x^{(f)} = \eval(f)$      & $00011000$
	\end{tabular}
\end{center}
%As an example, let us use again boolean function $f \in \Fbin^{[3]}$ defined by $f = \bar{V}_2 \oplus \bar{V}_0\bar{V}_1 \oplus \bar{V}_0\bar{V}_2 \oplus \bar{V}_1\bar{V}_2 = m_{3} + m_{4} + m_{2} + m_{1}$ in \eqref{eq:f_ex}, having evaluation $x^{(f)} = 00011000$. 
Now let us apply left shift permutation $\sigma_j$ to this vector, where $\sigma_j(i) = [i+j]_{2^n}$, with $[\cdot]_{2^n}$ representing modulo $2^n$ operation; according to our notation, element in $i$ is replaced by element in $[i+j]_{2^n}$. 
Double left shift $\sigma_2$ can be expressed as an affine transformation, while $\sigma_3$ cannot; for $n=3$, $\sigma_2$ can be expressed as an affine transformation defined by
\begin{align*}
	A &= \left[ \begin{smallmatrix} 1 & 0 & 0 \\ 0 & 1 & 0 \\ 0 & 1 & 1 \end{smallmatrix} \right], &
	b &= \left[ \begin{smallmatrix} 0 \\ 1 \\ 0  \end{smallmatrix} \right]. 
\end{align*}  
%\begin{align*}
%	A &= \begin{bmatrix} 1 & 0 & 0 \\ 0 & 1 & 0 \\ 0 & 1 & 1 \end{bmatrix}, &
%	b &= \begin{bmatrix} 0 \\ 1 \\ 0  \end{bmatrix}. 
%\end{align*}  
Let us take boolean function $g \in \Fbin^{[3]}$ defined by $g = \bar{V}_0\bar{V}_2 \oplus \bar{V}_1\bar{V}_2 = m_{2} + m_{1}$, having evaluation $x^{(g)} = 01100000$. 
According to Lemma~\ref{lem:GA-perm}, we have that $\sigma_2\left( x^{(f)} \right) = x^{(g)}$ when $g = \sigma_2(f)$. 
In fact, according to its affine transformation definition, $\sigma_2$ maps 
\begin{align*}
	\bar{V}_0 \mapsto \bar{V}_0 , & \bar{V}_1 \mapsto \bar{V}_1 \oplus 1 , & \bar{V}_2 \mapsto \bar{V}_1 \oplus \bar{V}_2 \oplus 1. 
\end{align*} 
It is worth noticing that the use of negative monomials adds a "$\oplus 1$" addend compared to positive monomials when the number of output variables is even. 
It is easy to verify that applying this variables substitutions to boolean function $f$ leads to boolean function $g$.  

% --------------------------------------------------------------------
\section*{$\EC$ representatives calculation}
\label{app:rep_list}

Here we provide an example of the generation of automorphisms list $\mathcal{L}$ according to the proposed method. 
We analyze the $(256,95)$ code defined by $\mathcal{I}_{min}=\{55,120,228\}$ and having $\mathcal{A}=\BLTA(2,1,1,1,3)$, whose error correction performance are depicted in Figure~\ref{fig:256_95}. 
According to Lemma~\ref{lem:sizeBLTA}, this code has
\begin{equation}
|\mathcal{A}| = 2^{\frac{8(8+1)}{2}}\cdot \prod_{i=1}^5 \left( \prod_{j=2}^{s_i} \left( 2^j-1 \right) \right) = 2^{36} \cdot (3) \cdot (3 \cdot 7) = 63 \cdot 2^{36}
\end{equation}
affine automorphisms. 
This code has SC absorption group $[\pid]=\BLTA(2,1,1,1,1,1,1)$, and again for Lemma~\ref{lem:sizeBLTA} the number of SC absorbed affine automorphisms is given by
\begin{equation}
|[\pid]| = 2^{\frac{8(8+1)}{2}}\cdot \prod_{i=1}^7 \left( \prod_{j=2}^{s_i} \left( 2^j-1 \right) \right) = 3 \cdot 2^{36}.
\end{equation}
According to Lemma~\ref{lem:number_EC}, there are $E=\frac{|\mathcal{A}|}{|[\pid]|} = 21$ equivalence classes under SC decoding, as confirmed by Lemma~\ref{lem:size_EC_SC}. 
So, there are only $21$ possible different outcomes under AE-SC decoding, greatly reducing the maximum size $M$ of the AE decoder.

In the following, we use the result of Lemma~\ref{lem:PU_BLT} to create a list of $21$ affine automorphisms belonging to different equivalence classes under SC decoding.
In this case, there are $|\mathcal{A}_{\mathcal{U}}| = 16$ $\UTL$ automorphisms and $|\mathcal{A}_{\mathcal{P}}|=12$ $\PL$ automorphisms; it is worth noticing that $|\mathcal{A}_{\mathcal{U}}|+|\mathcal{A}_{\mathcal{P}}|>21$ since $[\pid] \subsetneq \LTA$, so automorphisms in $\mathcal{A}_{\mathcal{U}}$ and $\mathcal{A}_{\mathcal{P}}$ are not all included in $[\pid]$. 
To begin with, we generate all the $16$ UTL automorphisms composing $\mathcal{A}_{\mathcal{U}}$; each of them has the form $\pi_{(U,0)}$ where
\begin{equation}
U = \begin{bmatrix} U_1 & & & & 0 \\ & 1 & & & \\ & & 1 & & \\& & & 1 & \\ 0 & & & & U_5\\ \end{bmatrix},
\end{equation}
and $U_1$ is a $2 \times 2$ $\UTL$ matrix, while $U_5$ is a $3 \times 3$ $\UTL$ matrix. 
There are only $2$ $\UTL$ matrices of size 2 and $8$ $\UTL$ matrices of size $3$; combining them in $U$ it is possible to list all the $16$ $\UTL$ affine automorphisms of the code. 
Similarly, a $\PL$ automorphism of the code is described by where
\begin{equation}
P = \begin{bmatrix} P_1 & & & & 0 \\ & 1 & & & \\ & & 1 & & \\& & & 1 & \\ 0 & & & & P_5\\ \end{bmatrix}
\end{equation}
and $P_1$ is a $2 \times 2$ $\PL$ matrix and $P_5$ is a $3 \times 3$ $\PL$ matrix. 
Again, there are only $2$ $\PL$ matrices of size 2 and $6$ $\PL$ matrices of size $3$, for a total of $12$ $\PL$ automorphisms. 
In order to generate $\mathcal{L} = \{\pi_i,\ldots,\pi_{21}\}$, where $\pi_i = \pi_{{A_i,0}}$, we begin by setting $A_1 = I$, such that $[\pi_1] = [\pid]$. 
Next, we star adding to $\mathcal{L}$ the elements of $\mathcal{A}_{\mathcal{U}}$; only $7$ of them can be added, an they are listed as $A_2,\ldots,A_8$. 
This happens because the first block $U_1$ of an $\UTL$ matrix $U$ is always included in the absorption group, and hence only the matrices having different last block $U_5$ can be included, as stated by Lemma~\ref{lem:checkproductA}. 
Next, elements of $\mathcal{A}_{\mathcal{P}}$ are added; only $5$ of them are independent from the already included ones, and they are listed as $A_9,\ldots,A_{13}$. 
Finally, the remaining elements of $\mathcal{L}$ need to be calculated as $P \cdot U$, with $U \in \mathcal{A}_{\mathcal{U}}$ and $P \in \mathcal{A}_{\mathcal{P}}$, as stated in Lemma~\ref{lem:PU_BLT}. 
There are $|\mathcal{A}_{\mathcal{U}}| \cdot |\mathcal{A}_{\mathcal{P}}| = 192$ possible combinations of $\UTL$ an $\PL$ automorphisms; however, it is not mandatory to calculate all of them, since when the remaining $8$ independent matrices are calculated, the process can stop. 
Finally, we can generate the automorphism list $\mathcal{L} = \{\pi_i,\ldots,\pi_{21}\}$, where $\pi_i = \pi_{{A_i,0}}$ and 
\ifdefined\COLS
	\begin{align*}
	A_1 &= \left[ \begin{smallmatrix} 
			1 & 0 & 0 & 0 & 0 & 0 & 0 & 0 \\ 
			0 & 1 & 0 & 0 & 0 & 0 & 0 & 0 \\
			0 & 0 & 1 & 0 & 0 & 0 & 0 & 0 \\
			0 & 0 & 0 & 1 & 0 & 0 & 0 & 0 \\
			0 & 0 & 0 & 0 & 1 & 0 & 0 & 0 \\
			0 & 0 & 0 & 0 & 0 & 1 & 0 & 0 \\
			0 & 0 & 0 & 0 & 0 & 0 & 1 & 0 \\
			0 & 0 & 0 & 0 & 0 & 0 & 0 & 1 
			\end{smallmatrix} \right], &
	A_2 &= \left[ \begin{smallmatrix} 
			1 & 0 & 0 & 0 & 0 & 0 & 0 & 0 \\ 
			0 & 1 & 0 & 0 & 0 & 0 & 0 & 0 \\
			0 & 0 & 1 & 0 & 0 & 0 & 0 & 0 \\
			0 & 0 & 0 & 1 & 0 & 0 & 0 & 0 \\
			0 & 0 & 0 & 0 & 1 & 0 & 0 & 0 \\
			0 & 0 & 0 & 0 & 0 & 1 & 0 & 0 \\
			0 & 0 & 0 & 0 & 0 & 0 & 1 & 1 \\
			0 & 0 & 0 & 0 & 0 & 0 & 0 & 1 
			\end{smallmatrix} \right], \\
	A_3 &= \left[ \begin{smallmatrix} 
			1 & 0 & 0 & 0 & 0 & 0 & 0 & 0 \\ 
			0 & 1 & 0 & 0 & 0 & 0 & 0 & 0 \\
			0 & 0 & 1 & 0 & 0 & 0 & 0 & 0 \\
			0 & 0 & 0 & 1 & 0 & 0 & 0 & 0 \\
			0 & 0 & 0 & 0 & 1 & 0 & 0 & 0 \\
			0 & 0 & 0 & 0 & 0 & 1 & 0 & 1 \\
			0 & 0 & 0 & 0 & 0 & 0 & 1 & 0 \\
			0 & 0 & 0 & 0 & 0 & 0 & 0 & 1 
			\end{smallmatrix} \right], &
	A_4 &= \left[ \begin{smallmatrix} 
			1 & 0 & 0 & 0 & 0 & 0 & 0 & 0 \\ 
			0 & 1 & 0 & 0 & 0 & 0 & 0 & 0 \\
			0 & 0 & 1 & 0 & 0 & 0 & 0 & 0 \\
			0 & 0 & 0 & 1 & 0 & 0 & 0 & 0 \\
			0 & 0 & 0 & 0 & 1 & 0 & 0 & 0 \\
			0 & 0 & 0 & 0 & 0 & 1 & 0 & 1 \\
			0 & 0 & 0 & 0 & 0 & 0 & 1 & 1 \\
			0 & 0 & 0 & 0 & 0 & 0 & 0 & 1 
			\end{smallmatrix} \right], \\
	A_5 &= \left[ \begin{smallmatrix} 
			1 & 0 & 0 & 0 & 0 & 0 & 0 & 0 \\ 
			0 & 1 & 0 & 0 & 0 & 0 & 0 & 0 \\
			0 & 0 & 1 & 0 & 0 & 0 & 0 & 0 \\
			0 & 0 & 0 & 1 & 0 & 0 & 0 & 0 \\
			0 & 0 & 0 & 0 & 1 & 0 & 0 & 0 \\
			0 & 0 & 0 & 0 & 0 & 1 & 1 & 0 \\
			0 & 0 & 0 & 0 & 0 & 0 & 1 & 0 \\
			0 & 0 & 0 & 0 & 0 & 0 & 0 & 1 
			\end{smallmatrix} \right], &
	A_6 &= \left[ \begin{smallmatrix} 
			1 & 0 & 0 & 0 & 0 & 0 & 0 & 0 \\ 
			0 & 1 & 0 & 0 & 0 & 0 & 0 & 0 \\
			0 & 0 & 1 & 0 & 0 & 0 & 0 & 0 \\
			0 & 0 & 0 & 1 & 0 & 0 & 0 & 0 \\
			0 & 0 & 0 & 0 & 1 & 0 & 0 & 0 \\
			0 & 0 & 0 & 0 & 0 & 1 & 1 & 0 \\
			0 & 0 & 0 & 0 & 0 & 0 & 1 & 1 \\
			0 & 0 & 0 & 0 & 0 & 0 & 0 & 1 
			\end{smallmatrix} \right], \\
	A_7 &= \left[ \begin{smallmatrix} 
			1 & 0 & 0 & 0 & 0 & 0 & 0 & 0 \\ 
			0 & 1 & 0 & 0 & 0 & 0 & 0 & 0 \\
			0 & 0 & 1 & 0 & 0 & 0 & 0 & 0 \\
			0 & 0 & 0 & 1 & 0 & 0 & 0 & 0 \\
			0 & 0 & 0 & 0 & 1 & 0 & 0 & 0 \\
			0 & 0 & 0 & 0 & 0 & 1 & 1 & 1 \\
			0 & 0 & 0 & 0 & 0 & 0 & 1 & 0 \\
			0 & 0 & 0 & 0 & 0 & 0 & 0 & 1 
			\end{smallmatrix} \right], &
	A_8 &= \left[ \begin{smallmatrix} 
			1 & 0 & 0 & 0 & 0 & 0 & 0 & 0 \\ 
			0 & 1 & 0 & 0 & 0 & 0 & 0 & 0 \\
			0 & 0 & 1 & 0 & 0 & 0 & 0 & 0 \\
			0 & 0 & 0 & 1 & 0 & 0 & 0 & 0 \\
			0 & 0 & 0 & 0 & 1 & 0 & 0 & 0 \\
			0 & 0 & 0 & 0 & 0 & 1 & 1 & 1 \\
			0 & 0 & 0 & 0 & 0 & 0 & 1 & 1 \\
			0 & 0 & 0 & 0 & 0 & 0 & 0 & 1 
			\end{smallmatrix} \right], \\
	A_9 &= \left[ \begin{smallmatrix} 
			1 & 0 & 0 & 0 & 0 & 0 & 0 & 0 \\ 
			0 & 1 & 0 & 0 & 0 & 0 & 0 & 0 \\
			0 & 0 & 1 & 0 & 0 & 0 & 0 & 0 \\
			0 & 0 & 0 & 1 & 0 & 0 & 0 & 0 \\
			0 & 0 & 0 & 0 & 1 & 0 & 0 & 0 \\
			0 & 0 & 0 & 0 & 0 & 0 & 0 & 1 \\
			0 & 0 & 0 & 0 & 0 & 1 & 0 & 0 \\
			0 & 0 & 0 & 0 & 0 & 0 & 1 & 0 
			\end{smallmatrix} \right], &
	A_{10} &= \left[ \begin{smallmatrix} 
			1 & 0 & 0 & 0 & 0 & 0 & 0 & 0 \\ 
			0 & 1 & 0 & 0 & 0 & 0 & 0 & 0 \\
			0 & 0 & 1 & 0 & 0 & 0 & 0 & 0 \\
			0 & 0 & 0 & 1 & 0 & 0 & 0 & 0 \\
			0 & 0 & 0 & 0 & 1 & 0 & 0 & 0 \\
			0 & 0 & 0 & 0 & 0 & 0 & 1 & 0 \\
			0 & 0 & 0 & 0 & 0 & 0 & 0 & 1 \\
			0 & 0 & 0 & 0 & 0 & 1 & 0 & 0 
			\end{smallmatrix} \right], \\
	A_{11} &= \left[ \begin{smallmatrix} 
			1 & 0 & 0 & 0 & 0 & 0 & 0 & 0 \\ 
			0 & 1 & 0 & 0 & 0 & 0 & 0 & 0 \\
			0 & 0 & 1 & 0 & 0 & 0 & 0 & 0 \\
			0 & 0 & 0 & 1 & 0 & 0 & 0 & 0 \\
			0 & 0 & 0 & 0 & 1 & 0 & 0 & 0 \\
			0 & 0 & 0 & 0 & 0 & 0 & 1 & 0 \\
			0 & 0 & 0 & 0 & 0 & 1 & 0 & 0 \\
			0 & 0 & 0 & 0 & 0 & 0 & 0 & 1 
			\end{smallmatrix} \right], &
	A_{12} &= \left[ \begin{smallmatrix} 
			1 & 0 & 0 & 0 & 0 & 0 & 0 & 0 \\ 
			0 & 1 & 0 & 0 & 0 & 0 & 0 & 0 \\
			0 & 0 & 1 & 0 & 0 & 0 & 0 & 0 \\
			0 & 0 & 0 & 1 & 0 & 0 & 0 & 0 \\
			0 & 0 & 0 & 0 & 1 & 0 & 0 & 0 \\
			0 & 0 & 0 & 0 & 0 & 1 & 0 & 0 \\
			0 & 0 & 0 & 0 & 0 & 0 & 0 & 1 \\
			0 & 0 & 0 & 0 & 0 & 0 & 1 & 0 
			\end{smallmatrix} \right], \\
	A_{13} &= \left[ \begin{smallmatrix} 
			1 & 0 & 0 & 0 & 0 & 0 & 0 & 0 \\ 
			0 & 1 & 0 & 0 & 0 & 0 & 0 & 0 \\
			0 & 0 & 1 & 0 & 0 & 0 & 0 & 0 \\
			0 & 0 & 0 & 1 & 0 & 0 & 0 & 0 \\
			0 & 0 & 0 & 0 & 1 & 0 & 0 & 0 \\
			0 & 0 & 0 & 0 & 0 & 0 & 0 & 1 \\
			0 & 0 & 0 & 0 & 0 & 0 & 1 & 0 \\
			0 & 0 & 0 & 0 & 0 & 1 & 0 & 0 
			\end{smallmatrix} \right], &
	A_{14} &= \left[ \begin{smallmatrix} 
			1 & 0 & 0 & 0 & 0 & 0 & 0 & 0 \\ 
			0 & 1 & 0 & 0 & 0 & 0 & 0 & 0 \\
			0 & 0 & 1 & 0 & 0 & 0 & 0 & 0 \\
			0 & 0 & 0 & 1 & 0 & 0 & 0 & 0 \\
			0 & 0 & 0 & 0 & 1 & 0 & 0 & 0 \\
			0 & 0 & 0 & 0 & 0 & 0 & 1 & 1 \\
			0 & 0 & 0 & 0 & 0 & 0 & 0 & 1 \\
			0 & 0 & 0 & 0 & 0 & 1 & 0 & 0 
			\end{smallmatrix} \right], \\
	A_{15} &= \left[ \begin{smallmatrix} 
			1 & 0 & 0 & 0 & 0 & 0 & 0 & 0 \\ 
			0 & 1 & 0 & 0 & 0 & 0 & 0 & 0 \\
			0 & 0 & 1 & 0 & 0 & 0 & 0 & 0 \\
			0 & 0 & 0 & 1 & 0 & 0 & 0 & 0 \\
			0 & 0 & 0 & 0 & 1 & 0 & 0 & 0 \\
			0 & 0 & 0 & 0 & 0 & 0 & 1 & 1 \\
			0 & 0 & 0 & 0 & 0 & 1 & 0 & 0 \\
			0 & 0 & 0 & 0 & 0 & 0 & 0 & 1 
			\end{smallmatrix} \right], &
	A_{16} &= \left[ \begin{smallmatrix} 
			1 & 0 & 0 & 0 & 0 & 0 & 0 & 0 \\ 
			0 & 1 & 0 & 0 & 0 & 0 & 0 & 0 \\
			0 & 0 & 1 & 0 & 0 & 0 & 0 & 0 \\
			0 & 0 & 0 & 1 & 0 & 0 & 0 & 0 \\
			0 & 0 & 0 & 0 & 1 & 0 & 0 & 0 \\
			0 & 0 & 0 & 0 & 0 & 0 & 1 & 0 \\
			0 & 0 & 0 & 0 & 0 & 1 & 0 & 1 \\
			0 & 0 & 0 & 0 & 0 & 0 & 0 & 1 
			\end{smallmatrix} \right], \\
	A_{17} &= \left[ \begin{smallmatrix} 
			1 & 0 & 0 & 0 & 0 & 0 & 0 & 0 \\ 
			0 & 1 & 0 & 0 & 0 & 0 & 0 & 0 \\
			0 & 0 & 1 & 0 & 0 & 0 & 0 & 0 \\
			0 & 0 & 0 & 1 & 0 & 0 & 0 & 0 \\
			0 & 0 & 0 & 0 & 1 & 0 & 0 & 0 \\
			0 & 0 & 0 & 0 & 0 & 1 & 0 & 1 \\
			0 & 0 & 0 & 0 & 0 & 0 & 0 & 1 \\
			0 & 0 & 0 & 0 & 0 & 0 & 1 & 0 
			\end{smallmatrix} \right], &
	A_{18} &= \left[ \begin{smallmatrix} 
			1 & 0 & 0 & 0 & 0 & 0 & 0 & 0 \\ 
			0 & 1 & 0 & 0 & 0 & 0 & 0 & 0 \\
			0 & 0 & 1 & 0 & 0 & 0 & 0 & 0 \\
			0 & 0 & 0 & 1 & 0 & 0 & 0 & 0 \\
			0 & 0 & 0 & 0 & 1 & 0 & 0 & 0 \\
			0 & 0 & 0 & 0 & 0 & 0 & 1 & 1 \\
			0 & 0 & 0 & 0 & 0 & 1 & 0 & 1 \\
			0 & 0 & 0 & 0 & 0 & 0 & 0 & 1 
			\end{smallmatrix} \right], \\
	A_{19} &= \left[ \begin{smallmatrix} 
			1 & 0 & 0 & 0 & 0 & 0 & 0 & 0 \\ 
			0 & 1 & 0 & 0 & 0 & 0 & 0 & 0 \\
			0 & 0 & 1 & 0 & 0 & 0 & 0 & 0 \\
			0 & 0 & 0 & 1 & 0 & 0 & 0 & 0 \\
			0 & 0 & 0 & 0 & 1 & 0 & 0 & 0 \\
			0 & 0 & 0 & 0 & 0 & 0 & 0 & 1 \\
			0 & 0 & 0 & 0 & 0 & 1 & 1 & 0 \\
			0 & 0 & 0 & 0 & 0 & 0 & 1 & 0 
			\end{smallmatrix} \right], &
	A_{20} &= \left[ \begin{smallmatrix} 
			1 & 0 & 0 & 0 & 0 & 0 & 0 & 0 \\ 
			0 & 1 & 0 & 0 & 0 & 0 & 0 & 0 \\
			0 & 0 & 1 & 0 & 0 & 0 & 0 & 0 \\
			0 & 0 & 0 & 1 & 0 & 0 & 0 & 0 \\
			0 & 0 & 0 & 0 & 1 & 0 & 0 & 0 \\
			0 & 0 & 0 & 0 & 0 & 1 & 1 & 0 \\
			0 & 0 & 0 & 0 & 0 & 0 & 0 & 1 \\
			0 & 0 & 0 & 0 & 0 & 0 & 1 & 0 
			\end{smallmatrix} \right], \\
	A_{21} &= \left[ \begin{smallmatrix} 
			1 & 0 & 0 & 0 & 0 & 0 & 0 & 0 \\ 
			0 & 1 & 0 & 0 & 0 & 0 & 0 & 0 \\
			0 & 0 & 1 & 0 & 0 & 0 & 0 & 0 \\
			0 & 0 & 0 & 1 & 0 & 0 & 0 & 0 \\
			0 & 0 & 0 & 0 & 1 & 0 & 0 & 0 \\
			0 & 0 & 0 & 0 & 0 & 1 & 1 & 1 \\
			0 & 0 & 0 & 0 & 0 & 0 & 0 & 1 \\
			0 & 0 & 0 & 0 & 0 & 0 & 1 & 0 
			\end{smallmatrix} \right].											
\end{align*}
\else
	\begin{align*}
	A_1 &= \left[ \begin{smallmatrix} 
			1 & 0 & 0 & 0 & 0 & 0 & 0 & 0 \\ 
			0 & 1 & 0 & 0 & 0 & 0 & 0 & 0 \\
			0 & 0 & 1 & 0 & 0 & 0 & 0 & 0 \\
			0 & 0 & 0 & 1 & 0 & 0 & 0 & 0 \\
			0 & 0 & 0 & 0 & 1 & 0 & 0 & 0 \\
			0 & 0 & 0 & 0 & 0 & 1 & 0 & 0 \\
			0 & 0 & 0 & 0 & 0 & 0 & 1 & 0 \\
			0 & 0 & 0 & 0 & 0 & 0 & 0 & 1 
			\end{smallmatrix} \right], &
	A_2 &= \left[ \begin{smallmatrix} 
			1 & 0 & 0 & 0 & 0 & 0 & 0 & 0 \\ 
			0 & 1 & 0 & 0 & 0 & 0 & 0 & 0 \\
			0 & 0 & 1 & 0 & 0 & 0 & 0 & 0 \\
			0 & 0 & 0 & 1 & 0 & 0 & 0 & 0 \\
			0 & 0 & 0 & 0 & 1 & 0 & 0 & 0 \\
			0 & 0 & 0 & 0 & 0 & 1 & 0 & 0 \\
			0 & 0 & 0 & 0 & 0 & 0 & 1 & 1 \\
			0 & 0 & 0 & 0 & 0 & 0 & 0 & 1 
			\end{smallmatrix} \right], &
	A_3 &= \left[ \begin{smallmatrix} 
			1 & 0 & 0 & 0 & 0 & 0 & 0 & 0 \\ 
			0 & 1 & 0 & 0 & 0 & 0 & 0 & 0 \\
			0 & 0 & 1 & 0 & 0 & 0 & 0 & 0 \\
			0 & 0 & 0 & 1 & 0 & 0 & 0 & 0 \\
			0 & 0 & 0 & 0 & 1 & 0 & 0 & 0 \\
			0 & 0 & 0 & 0 & 0 & 1 & 0 & 1 \\
			0 & 0 & 0 & 0 & 0 & 0 & 1 & 0 \\
			0 & 0 & 0 & 0 & 0 & 0 & 0 & 1 
			\end{smallmatrix} \right], \\
	A_4 &= \left[ \begin{smallmatrix} 
			1 & 0 & 0 & 0 & 0 & 0 & 0 & 0 \\ 
			0 & 1 & 0 & 0 & 0 & 0 & 0 & 0 \\
			0 & 0 & 1 & 0 & 0 & 0 & 0 & 0 \\
			0 & 0 & 0 & 1 & 0 & 0 & 0 & 0 \\
			0 & 0 & 0 & 0 & 1 & 0 & 0 & 0 \\
			0 & 0 & 0 & 0 & 0 & 1 & 0 & 1 \\
			0 & 0 & 0 & 0 & 0 & 0 & 1 & 1 \\
			0 & 0 & 0 & 0 & 0 & 0 & 0 & 1 
			\end{smallmatrix} \right], &
	A_5 &= \left[ \begin{smallmatrix} 
			1 & 0 & 0 & 0 & 0 & 0 & 0 & 0 \\ 
			0 & 1 & 0 & 0 & 0 & 0 & 0 & 0 \\
			0 & 0 & 1 & 0 & 0 & 0 & 0 & 0 \\
			0 & 0 & 0 & 1 & 0 & 0 & 0 & 0 \\
			0 & 0 & 0 & 0 & 1 & 0 & 0 & 0 \\
			0 & 0 & 0 & 0 & 0 & 1 & 1 & 0 \\
			0 & 0 & 0 & 0 & 0 & 0 & 1 & 0 \\
			0 & 0 & 0 & 0 & 0 & 0 & 0 & 1 
			\end{smallmatrix} \right], &
	A_6 &= \left[ \begin{smallmatrix} 
			1 & 0 & 0 & 0 & 0 & 0 & 0 & 0 \\ 
			0 & 1 & 0 & 0 & 0 & 0 & 0 & 0 \\
			0 & 0 & 1 & 0 & 0 & 0 & 0 & 0 \\
			0 & 0 & 0 & 1 & 0 & 0 & 0 & 0 \\
			0 & 0 & 0 & 0 & 1 & 0 & 0 & 0 \\
			0 & 0 & 0 & 0 & 0 & 1 & 1 & 0 \\
			0 & 0 & 0 & 0 & 0 & 0 & 1 & 1 \\
			0 & 0 & 0 & 0 & 0 & 0 & 0 & 1 
			\end{smallmatrix} \right], \\
	A_7 &= \left[ \begin{smallmatrix} 
			1 & 0 & 0 & 0 & 0 & 0 & 0 & 0 \\ 
			0 & 1 & 0 & 0 & 0 & 0 & 0 & 0 \\
			0 & 0 & 1 & 0 & 0 & 0 & 0 & 0 \\
			0 & 0 & 0 & 1 & 0 & 0 & 0 & 0 \\
			0 & 0 & 0 & 0 & 1 & 0 & 0 & 0 \\
			0 & 0 & 0 & 0 & 0 & 1 & 1 & 1 \\
			0 & 0 & 0 & 0 & 0 & 0 & 1 & 0 \\
			0 & 0 & 0 & 0 & 0 & 0 & 0 & 1 
			\end{smallmatrix} \right], &
	A_8 &= \left[ \begin{smallmatrix} 
			1 & 0 & 0 & 0 & 0 & 0 & 0 & 0 \\ 
			0 & 1 & 0 & 0 & 0 & 0 & 0 & 0 \\
			0 & 0 & 1 & 0 & 0 & 0 & 0 & 0 \\
			0 & 0 & 0 & 1 & 0 & 0 & 0 & 0 \\
			0 & 0 & 0 & 0 & 1 & 0 & 0 & 0 \\
			0 & 0 & 0 & 0 & 0 & 1 & 1 & 1 \\
			0 & 0 & 0 & 0 & 0 & 0 & 1 & 1 \\
			0 & 0 & 0 & 0 & 0 & 0 & 0 & 1 
			\end{smallmatrix} \right], &
	A_9 &= \left[ \begin{smallmatrix} 
			1 & 0 & 0 & 0 & 0 & 0 & 0 & 0 \\ 
			0 & 1 & 0 & 0 & 0 & 0 & 0 & 0 \\
			0 & 0 & 1 & 0 & 0 & 0 & 0 & 0 \\
			0 & 0 & 0 & 1 & 0 & 0 & 0 & 0 \\
			0 & 0 & 0 & 0 & 1 & 0 & 0 & 0 \\
			0 & 0 & 0 & 0 & 0 & 0 & 0 & 1 \\
			0 & 0 & 0 & 0 & 0 & 1 & 0 & 0 \\
			0 & 0 & 0 & 0 & 0 & 0 & 1 & 0 
			\end{smallmatrix} \right], \\
	A_{10} &= \left[ \begin{smallmatrix} 
			1 & 0 & 0 & 0 & 0 & 0 & 0 & 0 \\ 
			0 & 1 & 0 & 0 & 0 & 0 & 0 & 0 \\
			0 & 0 & 1 & 0 & 0 & 0 & 0 & 0 \\
			0 & 0 & 0 & 1 & 0 & 0 & 0 & 0 \\
			0 & 0 & 0 & 0 & 1 & 0 & 0 & 0 \\
			0 & 0 & 0 & 0 & 0 & 0 & 1 & 0 \\
			0 & 0 & 0 & 0 & 0 & 0 & 0 & 1 \\
			0 & 0 & 0 & 0 & 0 & 1 & 0 & 0 
			\end{smallmatrix} \right], &
	A_{11} &= \left[ \begin{smallmatrix} 
			1 & 0 & 0 & 0 & 0 & 0 & 0 & 0 \\ 
			0 & 1 & 0 & 0 & 0 & 0 & 0 & 0 \\
			0 & 0 & 1 & 0 & 0 & 0 & 0 & 0 \\
			0 & 0 & 0 & 1 & 0 & 0 & 0 & 0 \\
			0 & 0 & 0 & 0 & 1 & 0 & 0 & 0 \\
			0 & 0 & 0 & 0 & 0 & 0 & 1 & 0 \\
			0 & 0 & 0 & 0 & 0 & 1 & 0 & 0 \\
			0 & 0 & 0 & 0 & 0 & 0 & 0 & 1 
			\end{smallmatrix} \right], &
	A_{12} &= \left[ \begin{smallmatrix} 
			1 & 0 & 0 & 0 & 0 & 0 & 0 & 0 \\ 
			0 & 1 & 0 & 0 & 0 & 0 & 0 & 0 \\
			0 & 0 & 1 & 0 & 0 & 0 & 0 & 0 \\
			0 & 0 & 0 & 1 & 0 & 0 & 0 & 0 \\
			0 & 0 & 0 & 0 & 1 & 0 & 0 & 0 \\
			0 & 0 & 0 & 0 & 0 & 1 & 0 & 0 \\
			0 & 0 & 0 & 0 & 0 & 0 & 0 & 1 \\
			0 & 0 & 0 & 0 & 0 & 0 & 1 & 0 
			\end{smallmatrix} \right], \\
	A_{13} &= \left[ \begin{smallmatrix} 
			1 & 0 & 0 & 0 & 0 & 0 & 0 & 0 \\ 
			0 & 1 & 0 & 0 & 0 & 0 & 0 & 0 \\
			0 & 0 & 1 & 0 & 0 & 0 & 0 & 0 \\
			0 & 0 & 0 & 1 & 0 & 0 & 0 & 0 \\
			0 & 0 & 0 & 0 & 1 & 0 & 0 & 0 \\
			0 & 0 & 0 & 0 & 0 & 0 & 0 & 1 \\
			0 & 0 & 0 & 0 & 0 & 0 & 1 & 0 \\
			0 & 0 & 0 & 0 & 0 & 1 & 0 & 0 
			\end{smallmatrix} \right], &
	A_{14} &= \left[ \begin{smallmatrix} 
			1 & 0 & 0 & 0 & 0 & 0 & 0 & 0 \\ 
			0 & 1 & 0 & 0 & 0 & 0 & 0 & 0 \\
			0 & 0 & 1 & 0 & 0 & 0 & 0 & 0 \\
			0 & 0 & 0 & 1 & 0 & 0 & 0 & 0 \\
			0 & 0 & 0 & 0 & 1 & 0 & 0 & 0 \\
			0 & 0 & 0 & 0 & 0 & 0 & 1 & 1 \\
			0 & 0 & 0 & 0 & 0 & 0 & 0 & 1 \\
			0 & 0 & 0 & 0 & 0 & 1 & 0 & 0 
			\end{smallmatrix} \right], &
	A_{15} &= \left[ \begin{smallmatrix} 
			1 & 0 & 0 & 0 & 0 & 0 & 0 & 0 \\ 
			0 & 1 & 0 & 0 & 0 & 0 & 0 & 0 \\
			0 & 0 & 1 & 0 & 0 & 0 & 0 & 0 \\
			0 & 0 & 0 & 1 & 0 & 0 & 0 & 0 \\
			0 & 0 & 0 & 0 & 1 & 0 & 0 & 0 \\
			0 & 0 & 0 & 0 & 0 & 0 & 1 & 1 \\
			0 & 0 & 0 & 0 & 0 & 1 & 0 & 0 \\
			0 & 0 & 0 & 0 & 0 & 0 & 0 & 1 
			\end{smallmatrix} \right], \\
	A_{16} &= \left[ \begin{smallmatrix} 
			1 & 0 & 0 & 0 & 0 & 0 & 0 & 0 \\ 
			0 & 1 & 0 & 0 & 0 & 0 & 0 & 0 \\
			0 & 0 & 1 & 0 & 0 & 0 & 0 & 0 \\
			0 & 0 & 0 & 1 & 0 & 0 & 0 & 0 \\
			0 & 0 & 0 & 0 & 1 & 0 & 0 & 0 \\
			0 & 0 & 0 & 0 & 0 & 0 & 1 & 0 \\
			0 & 0 & 0 & 0 & 0 & 1 & 0 & 1 \\
			0 & 0 & 0 & 0 & 0 & 0 & 0 & 1 
			\end{smallmatrix} \right], &
	A_{17} &= \left[ \begin{smallmatrix} 
			1 & 0 & 0 & 0 & 0 & 0 & 0 & 0 \\ 
			0 & 1 & 0 & 0 & 0 & 0 & 0 & 0 \\
			0 & 0 & 1 & 0 & 0 & 0 & 0 & 0 \\
			0 & 0 & 0 & 1 & 0 & 0 & 0 & 0 \\
			0 & 0 & 0 & 0 & 1 & 0 & 0 & 0 \\
			0 & 0 & 0 & 0 & 0 & 1 & 0 & 1 \\
			0 & 0 & 0 & 0 & 0 & 0 & 0 & 1 \\
			0 & 0 & 0 & 0 & 0 & 0 & 1 & 0 
			\end{smallmatrix} \right], &
	A_{18} &= \left[ \begin{smallmatrix} 
			1 & 0 & 0 & 0 & 0 & 0 & 0 & 0 \\ 
			0 & 1 & 0 & 0 & 0 & 0 & 0 & 0 \\
			0 & 0 & 1 & 0 & 0 & 0 & 0 & 0 \\
			0 & 0 & 0 & 1 & 0 & 0 & 0 & 0 \\
			0 & 0 & 0 & 0 & 1 & 0 & 0 & 0 \\
			0 & 0 & 0 & 0 & 0 & 0 & 1 & 1 \\
			0 & 0 & 0 & 0 & 0 & 1 & 0 & 1 \\
			0 & 0 & 0 & 0 & 0 & 0 & 0 & 1 
			\end{smallmatrix} \right], \\
	A_{19} &= \left[ \begin{smallmatrix} 
			1 & 0 & 0 & 0 & 0 & 0 & 0 & 0 \\ 
			0 & 1 & 0 & 0 & 0 & 0 & 0 & 0 \\
			0 & 0 & 1 & 0 & 0 & 0 & 0 & 0 \\
			0 & 0 & 0 & 1 & 0 & 0 & 0 & 0 \\
			0 & 0 & 0 & 0 & 1 & 0 & 0 & 0 \\
			0 & 0 & 0 & 0 & 0 & 0 & 0 & 1 \\
			0 & 0 & 0 & 0 & 0 & 1 & 1 & 0 \\
			0 & 0 & 0 & 0 & 0 & 0 & 1 & 0 
			\end{smallmatrix} \right], &
	A_{20} &= \left[ \begin{smallmatrix} 
			1 & 0 & 0 & 0 & 0 & 0 & 0 & 0 \\ 
			0 & 1 & 0 & 0 & 0 & 0 & 0 & 0 \\
			0 & 0 & 1 & 0 & 0 & 0 & 0 & 0 \\
			0 & 0 & 0 & 1 & 0 & 0 & 0 & 0 \\
			0 & 0 & 0 & 0 & 1 & 0 & 0 & 0 \\
			0 & 0 & 0 & 0 & 0 & 1 & 1 & 0 \\
			0 & 0 & 0 & 0 & 0 & 0 & 0 & 1 \\
			0 & 0 & 0 & 0 & 0 & 0 & 1 & 0 
			\end{smallmatrix} \right], &
	A_{21} &= \left[ \begin{smallmatrix} 
			1 & 0 & 0 & 0 & 0 & 0 & 0 & 0 \\ 
			0 & 1 & 0 & 0 & 0 & 0 & 0 & 0 \\
			0 & 0 & 1 & 0 & 0 & 0 & 0 & 0 \\
			0 & 0 & 0 & 1 & 0 & 0 & 0 & 0 \\
			0 & 0 & 0 & 0 & 1 & 0 & 0 & 0 \\
			0 & 0 & 0 & 0 & 0 & 1 & 1 & 1 \\
			0 & 0 & 0 & 0 & 0 & 0 & 0 & 1 \\
			0 & 0 & 0 & 0 & 0 & 0 & 1 & 0 
			\end{smallmatrix} \right].											
	\end{align*}
\fi

%\fbox{\comment{\bf * * * continue here (Valerio) * * *}}

%% --------------------------------------------------------------------
%\section{}
%\label{app:}

% --------------------------------------------------------------------
\end{document}